\documentclass[11pt,letterpaper]{article}

\usepackage[letterpaper,margin=1in,headheight=14pt,headsep=18pt]{geometry}
\usepackage[T1]{fontenc}
\usepackage{amsmath,amsthm,mathtools}
\usepackage{newtxtext}
\usepackage[varvw]{newtxmath}
\usepackage{microtype}
\usepackage{setspace}
\usepackage{titlesec}
\usepackage{fancyhdr}
\usepackage{needspace}
\usepackage{booktabs}
\usepackage{array}
\usepackage{enumitem}
\usepackage{xcolor}
\definecolor{headcol}{HTML}{1F3A5F}   
\usepackage[font=small,labelfont=bf,labelsep=period,skip=6pt]{caption}
\usepackage{subcaption}
\usepackage{tikz}
\usetikzlibrary{petri,positioning,arrows.meta,calc,fit,backgrounds,
  shapes.geometric,shapes.misc,decorations.pathreplacing,matrix}
\usepackage[most]{tcolorbox}
\usepackage[colorlinks=true,linkcolor=headcol,citecolor=headcol,urlcolor=headcol]{hyperref}
\usepackage{algorithm}
\usepackage[noend]{algpseudocode}
\usepackage{cleveref}
\hypersetup{pdftitle={Game Petri Nets: Compositional Modeling and Analysis of Large-Scale Strategic Systems},pdfauthor={Quanyan Zhu}}

\theoremstyle{definition}
\newtheorem{definition}{Definition}

\newtheorem{assumption}{Assumption}
\theoremstyle{plain}
\newtheorem{theorem}{Theorem}
\newtheorem{lemma}{Lemma}
\newtheorem{corollary}{Corollary}
\newtheorem{proposition}{Proposition}
\newtheorem{conjecture}{Conjecture}
\theoremstyle{remark}
\newtheorem{remark}{Remark}
\crefname{definition}{definition}{definitions}
\crefname{assumption}{assumption}{assumptions}
\crefname{theorem}{theorem}{theorems}
\crefname{lemma}{lemma}{lemmas}
\crefname{corollary}{corollary}{corollaries}
\crefname{proposition}{proposition}{propositions}
\crefname{conjecture}{conjecture}{conjectures}
\crefname{remark}{remark}{remarks}
\crefname{algorithm}{algorithm}{algorithms}
\Crefname{assumption}{Assumption}{Assumptions}
\Crefname{definition}{Definition}{Definitions}
\Crefname{theorem}{Theorem}{Theorems}
\Crefname{lemma}{Lemma}{Lemmas}
\Crefname{corollary}{Corollary}{Corollaries}
\Crefname{proposition}{Proposition}{Propositions}
\Crefname{remark}{Remark}{Remarks}

\titleformat{\section}{\needspace{10\baselineskip}\normalfont\Large\bfseries\color{headcol}}{\thesection}{0.8em}{}
\titleformat{\subsection}{\needspace{7\baselineskip}\normalfont\large\bfseries\color{headcol}}{\thesubsection}{0.8em}{}
\titleformat{\paragraph}[runin]{\normalfont\normalsize\bfseries}{}{}{}
\titlespacing*{\section}{0pt}{18pt plus 4pt minus 2pt}{8pt plus 2pt}
\titlespacing*{\subsection}{0pt}{12pt plus 3pt minus 2pt}{6pt plus 2pt}
\titlespacing*{\paragraph}{0pt}{6pt plus 2pt}{0.6em}

\fancypagestyle{plain}{\fancyhf{}\fancyfoot[C]{\small\thepage}}

\definecolor{gamecol}{HTML}{E8833A}   
\definecolor{rescol}{HTML}{3A7DC9}    
\definecolor{okcol}{HTML}{2E9E5B}     
\definecolor{badcol}{HTML}{C8423B}    
\definecolor{inkcol}{HTML}{333333}
\definecolor{softgray}{HTML}{F3F3F3}

\newtcolorbox{intuition}[1][Intuition]{enhanced,breakable,colback=blue!3,
  colframe=blue!40!black,fonttitle=\bfseries\small,title={#1},
  boxrule=0.6pt,arc=2pt,left=6pt,right=6pt,top=4pt,bottom=4pt,before upper={\setstretch{1.15}}}
\newtcolorbox{designrule}[1][Design rule]{enhanced,breakable,colback=orange!4,
  colframe=gamecol!80!black,fonttitle=\bfseries\small,title={#1},
  boxrule=0.6pt,arc=2pt,left=6pt,right=6pt,top=4pt,bottom=4pt,before upper={\setstretch{1.15}}}

\newcommand{\M}{\mathbb{M}}
\newcommand{\Mult}{\mathrm{Mult}_{\mathrm{fin}}}
\newcommand{\Pre}{\mathrm{Pre}}
\newcommand{\Post}{\mathrm{Post}}
\newcommand{\Out}{\mathrm{Out}}
\newcommand{\Obs}{\mathrm{Obs}}
\newcommand{\Reach}{\mathrm{Reach}}
\newcommand{\Spec}{\mathcal{S}}
\newcommand{\G}{\mathcal{G}}
\newcommand{\N}{\mathcal{N}}
\newcommand{\Ex}{\mathbb{E}}
\newcommand{\ind}{\mathbb{I}}
\newcommand{\tok}[1]{\langle #1\rangle}
\newcommand{\act}[1]{\mathrm{#1}}
\newcommand{\PA}{P_{\!A}}
\newcommand{\PD}{P_{\!D}}

\tikzset{
  >={Stealth[length=2mm,width=1.7mm]},
  every place/.style={draw=rescol!85!black,line width=0.8pt,fill=rescol!7,minimum size=7.5mm,inner sep=0pt},
  gplace/.style={place,draw=gamecol!90!black,fill=gamecol!14,double,double distance=0.9pt,minimum size=8.5mm},
  rplace/.style={place},
  every transition/.style={draw=none,fill=inkcol,minimum width=1.8mm,minimum height=7mm,inner sep=0pt},
  rtrans/.style={transition,fill=gamecol!90!black},
  htrans/.style={transition,minimum width=7mm,minimum height=1.8mm},
  arc/.style={->,line width=0.7pt,draw=inkcol},
  darc/.style={->,line width=0.7pt,dashed,draw=inkcol!60},
  ctok/.style={circle,fill=#1,draw=black!50,line width=0.4pt,inner sep=0pt,minimum size=2.2mm},
  every token/.style={fill=inkcol,minimum size=1.7mm},
  lbl/.style={font=\footnotesize,inner sep=1.5pt,text=inkcol},
  glbl/.style={font=\footnotesize\itshape,text=gamecol!65!black,inner sep=1.5pt,align=center},
  guard/.style={font=\scriptsize,inner sep=1.5pt,text=inkcol},
  cbox/.style={draw=gamecol!80!black,fill=gamecol!5,rounded corners=2pt,line width=0.7pt,inner sep=4pt,font=\footnotesize},
  sub/.style={draw=inkcol!50,dashed,line width=0.7pt,rounded corners=3pt,minimum width=1.6cm,minimum height=1.3cm,font=\footnotesize},
  stepbox/.style={draw=inkcol!70,fill=softgray,rounded corners=3pt,line width=0.7pt,align=center,font=\footnotesize,minimum height=14mm,text width=4.1cm},
}

\newcommand{\payoffgrid}[7]{%
  \begin{scope}[shift={#1}]
    \foreach \x/\y/\c/\n in {0/0/{#3}/a, 2.2/0/{#4}/b, 0/-1.3/{#5}/c, 2.2/-1.3/{#6}/d}{
      \node[draw=inkcol!40,fill=white,minimum width=1.8cm,minimum height=0.9cm,font=\small] (#2\n) at (\x,\y) {\c};
    }
    \node[font=\footnotesize,anchor=east] at ($(#2a.west)+(-0.55,0)$) {$\act{H}$};
    \node[font=\footnotesize,anchor=east] at ($(#2c.west)+(-0.55,0)$) {$\act{L}$};
    \node[font=\footnotesize,anchor=south] at ($(#2a.north)+(0,0.42)$) {$\act{H}$};
    \node[font=\footnotesize,anchor=south] at ($(#2b.north)+(0,0.42)$) {$\act{L}$};
    \node[font=\footnotesize\bfseries,align=center] at ($(#2a.north)!0.5!(#2b.north)+(0,1.05)$) {#7};
  \end{scope}%
}

\begin{document}
\thispagestyle{plain}

\begin{center}
  {\small\scshape\color{inkcol!80} Working paper \quad\textbullet\quad Draft of October 2026}\\[8pt]
  {\color{headcol}\rule{\textwidth}{0.9pt}}\\[14pt]
  {\LARGE\bfseries\color{headcol} Game Petri Nets: Compositional Modeling\\[4pt] and Analysis of Large-Scale Strategic Systems\par}
  \vspace{14pt}
  {\large Quanyan Zhu\par}
  \vspace{3pt}
  {\small Department of Electrical and Computer Engineering, Tandon School of Engineering\\ New York University, Brooklyn, NY 11201, USA\\[2pt] \href{mailto:qz494@nyu.edu}{\texttt{qz494@nyu.edu}}\par}
  \vspace{12pt}
  {\color{headcol}\rule{\textwidth}{0.4pt}}
\end{center}

\begin{center}
\begin{minipage}{0.88\textwidth}
\small
\noindent{\bfseries\scshape\color{headcol} Abstract.}\enspace Large-scale strategic systems, such as maintenance workflows, cyber campaigns, and multi-agent AI pipelines, consist of many interactions whose outcomes determine which later interactions become possible, who takes part, and with what resources. We introduce \emph{game Petri nets}, formalized as outcome-driven compositional game Petri nets (OCGPNs), to model and analyze such systems. Local games are attached to places of a colored Petri net. When a game resolves, a game-to-token interface converts the binding, the action profile, and the outcome into typed tokens, which enable, synchronize, or block subsequent games. The model separates a structural net, local games, and a solution layer, and associates with them the incidence matrix, the local payoff and outcome-probability matrices, and the routing matrix between games. For general, concurrent nets, we give an operational semantics and a structural theory based on invariants and Farkas certificates. These guarantee safety for every admissible strategy profile, and they show that guards can certify unreachability whereas incentive instruments can only change values. For sequential nets, in which one game is active at a time, the net induces a stochastic game on its game places. We prove that global equilibria are exactly collections of equilibria of local games augmented by continuation values. We characterize when downstream games change upstream behavior through successor-level pivotality, and we compute equilibrium values place by place by decomposing the place graph into strongly connected components, at a cost additive in the sizes of the local games. A composition calculus based on place fusion glues certificates across subnets and summarizes a downstream subnet by its entry values, also in the general-sum case. Two running examples illustrate the theory. In a maintenance workflow, graded routing reverses a myopic equilibrium, binary routing cannot, and a rework loop sustains cooperation only if rework is costly. In an intrusion campaign derived from multi-resolution cyber-defense games, deception lowers the attacker's value without removing reachability, and a newly discovered phishing scenario is composed with the existing campaign without re-solving it.

\smallskip
\noindent{\bfseries\scshape\color{headcol} Keywords.}\enspace Petri nets; colored tokens; cyber deception; multi-resolution games; stochastic games; game composition; strategic workflow; equilibrium selection; formal verification; agentic systems.
\end{minipage}
\end{center}
\vspace{6pt}

\doublespacing

\section{Introduction}
\label{sec:intro}

Many engineered systems are \emph{large-scale strategic systems}. They are built from many interactions among self-interested or adversarial agents, and the outcome of each interaction shapes what happens next. A maintenance job is a chain of negotiations, inspections, and authorizations among an operator, a provider, and a verifier. A cyber campaign is a sequence of engagements across network segments, each a contest between attack techniques and countermeasures. A multi-agent AI workflow is a pipeline of delegations, verifications, and hand-offs. In each case, an interaction does more than pay its players. Its outcome determines \emph{which interaction becomes possible next}, \emph{who takes part in it}, \emph{with what resources}, and \emph{with what information}. Interactions also run concurrently, compete for shared resources, and must sometimes synchronize before the system can proceed.

\paragraph{Limitations of existing formalisms.} Existing formalisms each capture part of this picture.
\begin{itemize}[nosep]
  \item \emph{Monolithic games.} An extensive-form game over the whole system grows multiplicatively with the number of stages. A stochastic game over global states can represent the same information, but it is not modular: the identity of the sub-interactions and of their interfaces is lost, and concurrency or synchronization must be encoded in the state.
  \item \emph{Petri nets.} Petri nets and colored Petri nets represent concurrency, resources, synchronization, and reachability directly, and they come with structural analysis by incidence matrices and invariants~\cite{murata1989,jensen2009}. But they have no strategic players.
  \item \emph{Petri games.} Petri games~\cite{finkbeiner2017} add players by identifying them with tokens, and they ask a qualitative question, the synthesis of winning strategies. They do not model payoffs, values, or equilibrium selection.
  \item \emph{Composed games.} Meta-games, games in games, and compositional game theory compose strategic interactions (\cref{sec:background}). They do not provide a structural layer with resources, concurrency, and reachability.
\end{itemize}
What is missing is a representation in which each interaction is a local game, outcomes become typed objects that govern what happens next, and analysis \emph{composes}: properties of the parts combine into properties of the whole. This leads to the central question of the paper:

\begin{center}
\emph{How can the outcome of a local game activate a transition to a new game while keeping genuine Petri-net semantics, and how can the resulting system be analyzed compositionally?}
\end{center}

\paragraph{Approach.} We attach a local game to a \emph{game place} of a colored Petri net. The completion of the game is a \emph{resolution transition}, and a \emph{game-to-token interface} $\Out_p(b,a,\omega)$ turns the binding, the action profile, and the outcome into colored tokens. These tokens then enable ordinary net transitions, including guarded branches, forks, and synchronizations. Exactly one successor game need not activate. One outcome can spawn several games, and several completed games may be jointly required before a successor is enabled. \Cref{fig:pipeline} summarizes the chain. We call the resulting model an \emph{outcome-driven compositional game Petri net} (OCGPN).

\begin{figure}[ht]
\centering
\begin{tikzpicture}[node distance=4mm,
  pbox/.style={draw,thick,rounded corners=3pt,align=center,font=\footnotesize,minimum height=14mm,text width=2.05cm},
  layer/.style={font=\scriptsize\itshape,text=inkcol}]
  \node[pbox,fill=gamecol!15,draw=gamecol!80!black] (a) {active instance\\token in\\game place $p$};
  \node[pbox,fill=gamecol!15,draw=gamecol!80!black,right=of a] (b) {strategic play\\$a\sim\sigma_p$,\\$\sigma_p\in\mathcal{A}_p(b)$};
  \node[pbox,fill=gamecol!15,draw=gamecol!80!black,right=of b] (c) {outcome\\[2pt]$\omega\sim\phi_p$};
  \node[pbox,fill=rescol!12,draw=rescol!80!black,right=of c] (d) {output tokens\\$\Out_p(b,a,\omega)$};
  \node[pbox,fill=rescol!12,draw=rescol!80!black,right=of d] (e) {guarded routing\\transitions\\$g_t(M,b)=1$};
  \node[pbox,fill=gamecol!15,draw=gamecol!80!black,right=of e] (f) {successor\\game places\\(one or many)};
  \foreach \u/\v in {a/b,b/c,c/d,d/e,e/f} \draw[arc] (\u) -- (\v);
  \draw[decorate,decoration={brace,mirror,amplitude=5pt},thick,inkcol!60]
    ($(a.south west)+(0,-0.15)$) -- ($(c.south east)+(0,-0.15)$)
    node[midway,below=6pt,layer] {game layer + solution layer: \emph{who chooses what, and why}};
  \draw[decorate,decoration={brace,mirror,amplitude=5pt},thick,inkcol!60]
    ($(d.south west)+(0,-0.15)$) -- ($(f.south east)+(0,-0.15)$)
    node[midway,below=6pt,layer] {structural net: \emph{what can happen next}};
  \draw[darc,rounded corners=6pt] (f.north) -- ++(0,0.5) -| node[pos=0.25,above,lbl]{the cycle repeats at every game place} (a.north);
\end{tikzpicture}
\caption{The basic chain of an OCGPN (symbols are defined in \cref{sec:model,sec:theory}). As a first intuition, the game decides and the net executes. In full, guards and the scheduler also shape which transitions are realized. Orange marks game-related objects and blue marks ordinary net objects. Every figure uses the same colors.}
\label{fig:pipeline}
\end{figure}

\begin{designrule}[Role of tokens]
A token is a typed \emph{object}: a task, entitlement, resource, observation, commitment, or certificate. A token need not be a strategic player. Players are defined by the game attached to a location. This differs from the established Petri-game tradition~\cite{finkbeiner2017}, where tokens usually represent distributed players and the main question is the synthesis of winning strategies.
\end{designrule}

The model separates three layers. The \emph{structural net} fixes what is possible, the \emph{local games} say what is at stake, and the \emph{solution layer} says what is played. Each layer carries one matrix object, and the analysis is built on these:
\begin{itemize}[nosep]
  \item the \emph{incidence matrix} $C$ of the unfolded net governs reachability, safety, and progress;
  \item the \emph{payoff} and \emph{outcome-probability matrices} $(R^p,\Phi^p_\omega)$ of each game place, aggregated into routing probabilities $\Psi^p_q$, govern incentives;
  \item the \emph{routing matrix} $P^\sigma$ between game places governs values and their computation.
\end{itemize}
These objects are previewed here informally; \cref{sec:model,sec:theory} define them, and \cref{tab:notation} collects the notation. The separation is deliberate. Guards act only on $C$, incentive instruments only on the local matrices, and equilibrium selection only on $P^\sigma$. This explains, for example, why deception can lower an attacker's value but can never certify that an outcome is unreachable, whereas a guard can.

\paragraph{Two levels of theory.} The theory is developed at two levels of generality. For \emph{general} OCGPNs, which allow arbitrary concurrency, we give an operational semantics on markings and a structural theory based on the incidence matrix: invariants, Farkas certificates of unreachability, and conditions for progress. For \emph{sequential} OCGPNs, in which one game is active at a time, the net induces a stochastic game on its game places. At this level, local and global games are precisely related. A stationary profile is a global equilibrium exactly when, at every place, it is an equilibrium of the local game \emph{augmented} by the continuation values that the profile itself generates. Equilibrium values can then be computed place by place, by solving the strongly connected components of the place graph in reverse topological order. The cost of this computation is additive, not multiplicative, in the sizes of the local games. This additivity is what makes large-scale systems tractable. In the 39-segment enterprise study of~\cite{yang2025crossechelon}, a decomposed model of about $10^4$ tree nodes replaces a monolithic one of roughly $10^{25}$.

\paragraph{Contributions.} We regard the following as the contributions of the paper.
\begin{enumerate}[label=(\arabic*),nosep]
  \item \textbf{A model and its semantics.} The game-to-token interface, in which action- and outcome-dependent token production determines the availability, composition, and synchronization of later interactions, with a precise operational semantics, bindings, scheduler, and unfolding (\cref{sec:model,sec:semantics}).
  \item \textbf{Structural certificates for strategic systems.} Invariants and Farkas certificates on the incidence matrix that hold for every equilibrium and every admissible profile. They are paired with a design rule separating guards, which can certify safety, from incentive instruments, which cannot (\cref{sec:structural}).
  \item \textbf{Interface design results.} Successor pivotality, a dominance criterion through the interface, conditions under which myopic play is globally consistent, and monotone comparative statics. These results say when and how downstream games can change upstream behavior (\cref{sec:strategic,sec:composition}).
  \item \textbf{A composition calculus.} All composition operators are built from place fusion. A fusion lemma glues invariants and certificates across subnets. A compositional-equilibrium theorem summarizes a downstream subnet by its entry values, also in the general-sum case. Further results give loop reduction, fork--join success probabilities, independence of parallel parts, and the local cost of carrying artifacts (\cref{sec:composition}).
  \item \textbf{Two running examples.} A cooperative maintenance workflow and an adversarial intrusion campaign illustrate the theory in the same order in which it is developed. They yield substantive conclusions. Graded routing can reverse a myopic equilibrium, whereas binary routing cannot. A forgiving rework loop destroys cooperation unless rework is costly. A newly discovered attack scenario can be composed with an existing campaign without re-solving it (\cref{sec:re1,sec:re2}).
\end{enumerate}
Several supporting results specialize standard theory to the place-level game: local--global consistency is the one-shot deviation principle, the place-level Shapley operator is Shapley's theorem, and backward induction is classical. We state and prove them because the rest of the analysis is built on them, and we identify them as such.

\paragraph{Organization.} \Cref{sec:background} motivates the composition of games and reviews meta-games, games in games, compositional game theory, and the related literature. \Cref{sec:primer} recalls Petri nets. \Cref{sec:model} defines the model, and \cref{sec:theory} develops the theory: semantics (\cref{sec:semantics}), local and global games (\cref{sec:games}), structural analysis (\cref{sec:structural}), strategic analysis (\cref{sec:strategic}), values and computation (\cref{sec:values}), and composition (\cref{sec:composition}). \Cref{sec:re1,sec:re2} present the running examples, and \cref{tab:roadmap} shows where each part of the theory is illustrated. \Cref{sec:discussion} concludes with lessons, limitations, and future directions.

\begin{table}[ht]
\centering
\caption{Roadmap: each part of the theory and where it is illustrated.}
\label{tab:roadmap}
\small
\begin{tabular}{@{}p{0.28\textwidth}p{0.32\textwidth}p{0.32\textwidth}@{}}
\toprule
\textbf{Theory (\cref{sec:theory})} & \textbf{Example I: maintenance (\cref{sec:re1})} & \textbf{Example II: intrusion (\cref{sec:re2})}\\\midrule
Semantics, incidence matrix (\cref{sec:semantics}) & net and execution trace (\cref{fig:examplenet,fig:trace}) & campaign net (\cref{fig:campaignnet})\\
Local and global games (\cref{sec:games}) & myopic versus global equilibrium (\cref{prop:reversal}); cyclic global game (\cref{prop:variantB}) & global campaign game (\cref{prop:deception})\\
Structural analysis (\cref{sec:structural}) & safety guarantees S1--S3, rework livelock (\cref{fig:safety}) & lockdown certificate versus decoys\\
Strategic analysis (\cref{sec:strategic}) & successor pivotality, mixed equilibria (\cref{fig:decomposition,fig:matrices,fig:phase}) & deception game, lateral game\\
Values and computation (\cref{sec:values}) & backward induction, rework component & component-wise solve, routing matrix, absorption\\
Composition (\cref{sec:composition}) & fork--join diagnostics (\cref{fig:forkjoin}) & credentials, telemetry fork, shared capacity; a new phishing scenario (\cref{sec:re2-new})\\
\bottomrule
\end{tabular}
\end{table}

\section{Background and motivation}
\label{sec:background}

\subsection{Motivation for composition}

The introduction argued that large-scale strategic systems should be modeled by composing local games rather than by one monolithic game. Composition requires specifying each interaction as a local game and, separately, how the outcomes of local games are connected. Any compositional framework must then answer three questions, which organize the rest of the paper:
\begin{enumerate}[label=(Q\arabic*),nosep]
  \item \textbf{Interface.} What does one game pass to the next: a payoff, a state, a parameter, or a set of typed objects?
  \item \textbf{Incentives.} How do downstream games change the incentives of upstream players, and when can the interface be designed to make good behavior an equilibrium?
  \item \textbf{Analysis.} Which properties (safety, termination, values) of the composite can be established from properties of the parts?
\end{enumerate}

\subsection{Three views of composed games}

Three established ideas address composition from different angles (\cref{fig:paradigms}).

\begin{figure}[ht]
\centering
\begin{subfigure}[t]{0.31\textwidth}\centering
\begin{tikzpicture}[x=1cm,y=1cm,
  mc/.style={draw=inkcol!55,line width=0.6pt,fill=white,minimum width=1.6cm,minimum height=1.1cm,inner sep=0pt},
  mini/.style={draw=gamecol!80!black,line width=0.5pt,fill=gamecol!12,minimum width=0.32cm,minimum height=0.22cm,inner sep=0pt}]
  \foreach \i/\x in {1/0,2/1.6} \foreach \j/\y in {1/0,2/-1.1} {
    \node[mc] (c\i\j) at (\x,\y) {};
    \foreach \u in {-0.17,0.17} \foreach \v in {0.12,-0.12} \node[mini] at ($(c\i\j)+(\u,\v)$) {};
  }
  \node[lbl,anchor=east] at (-0.85,0) {$\pi_1$};
  \node[lbl,anchor=east] at (-0.85,-1.1) {$\pi_2$};
  \node[lbl,anchor=south] at (0,0.6) {$\pi'_1$};
  \node[lbl,anchor=south] at (1.6,0.6) {$\pi'_2$};
  \node[lbl,align=center,anchor=north] at (0.8,-1.75) {meta-payoff $=$ value of the\\base game under $(\pi_i,\pi'_j)$};
\end{tikzpicture}
\caption{Meta-game: a game whose strategies select play in base games.}
\end{subfigure}\hfill
\begin{subfigure}[t]{0.31\textwidth}\centering
\begin{tikzpicture}[x=1cm,y=1cm]
  \node[draw=rescol!80!black,line width=0.8pt,fill=rescol!6,rounded corners=3pt,minimum width=3.6cm,minimum height=2.6cm] (out) at (0,0) {};
  \node[lbl,anchor=north west,font=\footnotesize\itshape,text=rescol!70!black] at ($(out.north west)+(0.08,-0.05)$) {outer game $G_{\mathrm{out}}(v)$};
  \node[draw=gamecol!85!black,line width=0.8pt,fill=gamecol!14,rounded corners=3pt,minimum width=2.0cm,minimum height=0.9cm,font=\footnotesize] (in) at (0,-0.45) {inner game $G_{\mathrm{in}}(\theta)$};
  \draw[arc] ($(in.north west)+(0.35,0.55)$) -- node[left,lbl]{$\theta$} ($(in.north west)+(0.35,0)$);
  \draw[arc] ($(in.north east)+(-0.35,0)$) -- node[right,lbl]{$v$} ($(in.north east)+(-0.35,0.55)$);
  \node[lbl,align=center,anchor=north] at (0,-1.75) {outer decisions set $\theta$;\\inner value $v$ enters outer payoff};
\end{tikzpicture}
\caption{Games in games: an inner game nested in an outer one, coupled by parameters and values.}
\end{subfigure}\hfill
\begin{subfigure}[t]{0.33\textwidth}\centering
\begin{tikzpicture}[x=1cm,y=1cm]
  \node[gplace,minimum size=7mm] (g1) at (0,0) {}; \node[ctok=badcol] at (g1) {};
  \node[rtrans,minimum height=6mm] (r) at (0.9,0) {};
  \node[place,minimum size=6mm] (q) at (1.75,0) {};
  \coordinate (fan) at (2.3,0);
  \node[transition,minimum height=6mm] (t1) at (2.85,0.75) {};
  \node[transition,minimum height=6mm] (t2) at (2.85,-0.75) {};
  \node[gplace,minimum size=7mm] (g2) at (3.65,0.75) {};
  \node[gplace,minimum size=7mm] (g3) at (3.65,-0.75) {};
  \node[place,minimum size=6mm,tokens=1] (res) at (0.9,-1.0) {};
  \draw[arc] (g1)--(r); \draw[arc] (r)--(q); \draw[arc] (res)--(r.south);
  \draw[arc] (q.east)--(fan)|-(t1); \draw[arc] (q.east)--(fan)|-(t2);
  \draw[arc] (t1)--(g2); \draw[arc] (t2)--(g3);
  \node[lbl,align=center,anchor=north] at (1.8,-1.75) {games at places; outcomes are\\tokens that enable other games};
\end{tikzpicture}
\caption{OCGPN: games at places, connected by typed tokens, guards, and resources.}
\end{subfigure}
\caption{Three ways to compose games. Meta-games compose \emph{vertically}, by abstraction. Games in games compose \emph{by nesting}, across layers. The OCGPN composes \emph{by flow}: outcomes become tokens that enable, synchronize, or block other games.}
\label{fig:paradigms}
\end{figure}

\paragraph{Meta-games.} Howard's theory of metagames~\cite{howard1971} lets players choose \emph{policies}, that is, conditional responses to other players' choices, rather than base actions. Equilibria of the resulting meta-game can explain cooperation that the base game does not support. The term is now also used for the empirical games of multi-agent learning. In empirical game-theoretic analysis~\cite{wellman2006} and policy-space response oracles~\cite{lanctot2017}, a meta-game is a normal-form game whose strategies are learned policies, and whose payoffs are estimated by simulating the base game. Hypergames~\cite{bennett1980,kovach2015} compose games in yet another sense: each player may perceive a different game, and the hypergame collects these perceptions. Meta-games compose \emph{vertically}. One game abstracts the play of another, but there is no notion of one game's outcome \emph{enabling} a different game.

\paragraph{Games in games.} The games-in-games principle~\cite{zhu2015gig} composes games \emph{across layers}. In cyber-physical resilience, a cyber-layer game between attacker and defender determines the parameters (for example, the mode or failure state) of a physical-layer control game. The value of the physical game feeds back into the payoffs of the cyber game. Equilibria are computed jointly, layer by layer. Multi-phase and multi-stage models of advanced persistent threats~\cite{zhu2018apt,huang2020apt} chain games across the phases of a kill chain, so that the outcome of one phase sets the initial state of the next. The multi-resolution games of~\cite{yang2025multires,yang2025general,yang2025crossechelon} nest technique-level extensive-form games inside a segment-level Markov game, and they move between resolutions by grounding and lifting. These works establish the central idea that the paper builds on: \emph{the equilibrium value of a downstream game enters the payoffs of the upstream game.} What they do not provide is a general structural layer. Nesting is fixed by design, concurrency and shared resources are outside the model, and there is no reachability analysis of the composite.

\paragraph{Compositional game theory.} Open games~\cite{ghani2018,hedges2016} give a categorical calculus in which games are morphisms that can be composed sequentially and in parallel, with string diagrams as syntax. Composition is algebraic and exact, and equilibria of composites are characterized from those of the parts. The calculus is about information and utility flow between decisions. It has no native notion of resources, synchronization over shared places, or reachability, which are the strengths of Petri nets.

\paragraph{Positioning.} The OCGPN takes the coupling principle of games in games, that downstream values enter upstream payoffs, and places it on the structural substrate of colored Petri nets. The interface (Q1) is a multiset of typed tokens. Incentives (Q2) are analyzed through outcome-indicator matrices and pivotality (\cref{sec:strategic}). Analysis (Q3) uses the incidence and routing matrices, with a calculus for each composition operator (\cref{sec:composition}). Meta-games and open games remain complementary. A meta-game can be solved at any game place, and open-game composition describes the information flow inside one.

\subsection{Related literature}

\paragraph{Petri nets and their analysis.} Petri nets~\cite{petri1962} and their colored extension~\cite{jensen1992,jensen2009} are standard models of concurrent systems. Structural analysis by incidence matrices, the state equation, and P- and T-invariants is classical~\cite{murata1989}. Generalized stochastic Petri nets add timing and probabilistic choice and reduce performance questions to Markov chains~\cite{marsan1995}. Workflow nets apply the formalism to business processes, with soundness as the central correctness notion~\cite{vanderaalst1998}. Supervisory control of discrete-event systems~\cite{ramadge1987}, including Petri-net supervisors built from place invariants~\cite{moody1998}, synthesizes guards that enforce specifications. Our guards play exactly this role, and \cref{prop:farkas} is in the same tradition of invariant-based certificates. None of these models has strategic players.

\paragraph{Games on Petri nets and games on graphs.} Petri games~\cite{finkbeiner2017} model distributed synthesis: tokens are players, partitioned into system and environment, and the question is whether the system has a winning strategy. Games on graphs and alternating-time temporal logic~\cite{alur2002} similarly address winning conditions in verification. These frameworks are qualitative (win or lose) and identify players with tokens or vertices. The OCGPN is quantitative (payoffs and values), and its players live in the games attached to places, not in the tokens.

\paragraph{Dynamic and structured games.} Stochastic games~\cite{shapley1953,filar1997}, recursive games~\cite{everett1957}, and dynamic noncooperative games~\cite{basar1999,fudenberg1991} provide the solution concepts and fixed-point results we use. \Cref{thm:contraction} is Shapley's theorem on the place graph. Multi-agent influence diagrams~\cite{koller2003} and graphical games~\cite{kearns2001} exploit the structure of a single game to compute equilibria efficiently. The SCC decomposition of \cref{thm:scc} exploits the structure \emph{between} games in an analogous way. Markov decision processes~\cite{puterman1994} provide the policy-evaluation identity \eqref{eq:neumann}.

\paragraph{Security games, attack models, and deception.} Game-theoretic security is surveyed in~\cite{alpcan2010,manshaei2013,tambe2011}, and defensive deception in~\cite{pawlick2019}. Interdependent security~\cite{kunreuther2003} shows how one party's protection changes another's incentives, a coupling the OCGPN expresses through shared places. Attack graphs~\cite{sheyner2002} and attack nets~\cite{mcdermott2000} model the structure of multi-step attacks without strategic defenders. Running example~II combines these strands. It takes the attack structure of attack graphs, the technique-level engagements of multi-resolution games~\cite{yang2025crossechelon}, and the deception mechanics of~\cite{pawlick2019}, and it places them in one net where reachability and values can be analyzed together.

\begin{table}[ht]
\centering
\caption{Positioning of the OCGPN. \checkmark: supported natively; $\circ$: partially or with extra machinery; --: not represented explicitly. ``Concurrency'' means explicit, partial-order concurrency; models over global states can encode it implicitly.}
\label{tab:positioning}
\footnotesize
\setlength{\tabcolsep}{3.5pt}
\begin{tabular}{@{}>{\raggedright\arraybackslash}p{4.6cm}cccccc@{}}
\toprule
 & strategic & concurrency, & typed & composition & structural & decomposed\\
 & players & synchronization & resources & operators & certificates & values\\\midrule
Stochastic / extensive-form games~\cite{shapley1953,fudenberg1991} & \checkmark & -- & -- & -- & -- & $\circ$\\
MAIDs, graphical games~\cite{koller2003,kearns2001} & \checkmark & -- & -- & $\circ$ & -- & $\circ$\\
Meta-games, EGTA~\cite{howard1971,wellman2006} & \checkmark & -- & -- & $\circ$ & -- & \checkmark\\
Games in games~\cite{zhu2015gig,zhu2018apt} & \checkmark & -- & -- & \checkmark & -- & \checkmark\\
Open games~\cite{ghani2018} & \checkmark & -- & -- & \checkmark & -- & $\circ$\\
Petri games~\cite{finkbeiner2017} & $\circ$ & \checkmark & $\circ$ & -- & \checkmark & --\\
Colored, stochastic, and workflow nets~\cite{jensen2009,marsan1995,vanderaalst1998} & -- & \checkmark & \checkmark & \checkmark & \checkmark & $\circ$\\
\textbf{OCGPN (this paper)} & \checkmark & \checkmark & \checkmark & \checkmark & \checkmark & \checkmark$^{\dagger}$\\
\bottomrule
\end{tabular}

\smallskip
{\footnotesize $^{\dagger}$Under sequential activation (\cref{ass:sequential}); over markings otherwise.}
\end{table}

\section{Primer: Petri nets and colored tokens}
\label{sec:primer}

For readers new to Petri nets, we recall the basic ``token game''. A Petri net is a bipartite graph of \emph{places} (circles) and \emph{transitions} (bars). A \emph{marking} puts tokens in places. A transition is \emph{enabled} if each input place holds enough tokens. \emph{Firing} it consumes those tokens and produces tokens in its output places (\cref{fig:firing}). Colored Petri nets~\cite{jensen2009} attach data, called a \emph{color}, to each token. A transition may also carry a \emph{guard}, a predicate on the colors it consumes.

\begin{figure}[ht]
\centering
\begin{subfigure}[b]{0.45\textwidth}\centering
\begin{tikzpicture}
  \node[place,tokens=2,label=above:{\footnotesize $p_1$}] (p1) at (0,0.8) {};
  \node[place,tokens=1,label=below:{\footnotesize $p_2$}] (p2) at (0,-0.8) {};
  \node[transition,label={[lbl]above right:$t$}] (t) at (1.6,0) {};
  \node[place,label=above:{\footnotesize $p_3$}] (p3) at (3.2,0) {};
  \draw[arc] (p1) -| (t); \draw[arc] (p2) -| (t); \draw[arc] (t) -- (p3);
  \node[guard,anchor=north,text=okcol!70!black] at (3.2,-0.6) {\footnotesize enabled};
\end{tikzpicture}
\caption{Before: $t$ is enabled.}
\end{subfigure}
\hfill
\begin{subfigure}[b]{0.45\textwidth}\centering
\begin{tikzpicture}
  \node[place,tokens=1,label=above:{\footnotesize $p_1$}] (p1) at (0,0.8) {};
  \node[place,label=below:{\footnotesize $p_2$}] (p2) at (0,-0.8) {};
  \node[transition,label={[lbl]above right:$t$}] (t) at (1.6,0) {};
  \node[place,tokens=1,label=above:{\footnotesize $p_3$}] (p3) at (3.2,0) {};
  \draw[arc] (p1) -| (t); \draw[arc] (p2) -| (t); \draw[arc] (t) -- (p3);
  \node[guard,anchor=north,text=badcol!80!black] at (3.2,-0.6) {\footnotesize disabled ($p_2$ empty)};
\end{tikzpicture}
\caption{After firing $t$ once.}
\end{subfigure}

\vspace{4mm}
\begin{subfigure}[b]{0.95\textwidth}\centering
\begin{tikzpicture}
  \node[place,label=above:{\footnotesize work orders}] (w) at (0,0) {};
  \node[ctok=okcol] at ($(w)+(-0.13,0.1)$) {};
  \node[ctok=badcol] at ($(w)+(0.13,-0.1)$) {};
  \node[transition,label=above:{\footnotesize $t_{\mathrm{exec}}$}] (t) at (2.6,0) {};
  \node[guard,below=1mm of t,align=center] {$[\mathrm{Safe}(\zeta)\wedge\mathrm{cert}(\zeta)]$};
  \node[place,label=above:{\footnotesize executed}] (e) at (5.2,0) {};
  \draw[arc] (w) -- node[above,lbl]{$\zeta$} (t); \draw[arc] (t) -- node[above,lbl]{$\zeta$} (e);
  \node[anchor=west,font=\footnotesize,align=left] at (6.1,0.15) {%
    \tikz\node[ctok=okcol]{}; $\zeta=\tok{\mathrm{id}{=}7,\ \mathrm{safe},\ \mathrm{cert}}$ can pass\\[2pt]
    \tikz\node[ctok=badcol]{}; $\zeta=\tok{\mathrm{id}{=}9,\ \mathrm{unsafe}}$ is blocked};
\end{tikzpicture}
\caption{Colored tokens carry data. The guard on $t_{\mathrm{exec}}$ decides which bindings may fire.}
\end{subfigure}
\caption{The Petri-net token game. Places are circles, transitions are bars, and tokens are dots. In a colored net each dot carries a typed record, and guards filter records.}
\label{fig:firing}
\end{figure}

\section{The model}
\label{sec:model}

\subsection{Ingredients}

Let $P$ be a finite set of places and $T$ a finite set of transitions. Arcs are given by a relation $F\subseteq(P\times T)\cup(T\times P)$ with positive integer multiplicities. For a transition $t$, its \emph{input places} are those $p$ with $(p,t)\in F$, and its \emph{output places} those $p$ with $(t,p)\in F$; the multiset of input places, counted with multiplicity, is the \emph{preset} of $t$. A \emph{read arc} between a place and a transition tests for a token without consuming it; it abbreviates an input arc and an output arc that move the same token. Each transition $t$ carries a \emph{guard} $g_t$, a predicate on the marking and the binding defined below. Partition
\[
  P = P_G \,\cup\, P_R,\qquad P_G\cap P_R=\emptyset,
\]
where $P_G$ contains \emph{game places} and $P_R$ contains ordinary resource and control places. Disjointness is only a notational convenience: resource constraints on games are still encoded through transition arcs and token bindings. Each place $p$ has a \emph{color domain} $\mathrm{Col}_p$, the set of values (typed records) that a token in $p$ may carry.

\paragraph{Multisets and markings.} For a set $X$, a \emph{finite multiset} over $X$ is a function $m:X\to\mathbb{N}$ with finite support $\{x:m(x)>0\}$; $m(x)$ is the number of copies of $x$. We write $\Mult(X)$ for the set of finite multisets over $X$. Multisets are added, compared, and subtracted pointwise: $(m+m')(x)=m(x)+m'(x)$; $m\le m'$ if $m(x)\le m'(x)$ for all $x$; and $m'-m$ is defined when $m\le m'$. A \emph{marking} assigns to each place a finite multiset of tokens of its color domain:
\begin{equation}\label{eq:marking}
  M=(M_p)_{p\in P}\in\M := \prod_{p\in P}\Mult(\mathrm{Col}_p).
\end{equation}
Sums, differences, and the order $\le$ extend to markings place by place. These are the operations used in the firing rules below. The number of tokens in place $p$ is $|M_p|=\sum_{c}M_p(c)$, and we write $M(p)$ for $|M_p|$ when colors are irrelevant. For a finite set $X$, $\Delta(X)$ denotes the set of probability distributions on $X$, and $\ind\{\cdot\}$ is the indicator function.

\paragraph{Bindings.} A \emph{binding} of a transition $t$ selects the tokens it consumes: one token of the appropriate color from each input place, counted with arc multiplicity. For a resolution transition, which is defined below, we write a binding as $b=(b_0,b_R)$. Here $b_0$ is the \emph{active-instance token} in the game place, and $b_R$ is the tuple of resource tokens consumed from resource places (for example, a budget unit or an analyst). $B_t(M)$ denotes the set of bindings of $t$ available at $M$, and $\Pre(t,b)\in\M$ the multiset that the binding consumes.

\begin{definition}[Local game]\label{def:localgame}
For each game place $p\in P_G$ and binding $b$, the \emph{local game} is
\begin{equation}
  \G_p(b) = \Big(\mathcal{I}_p(b),\ \{A_{i,p}(b)\}_{i\in \mathcal{I}_p(b)},\ \{\mathsf{Inf}_{i,p}(b)\}_{i\in \mathcal{I}_p(b)},\ \{u_{i,p}\}_{i\in \mathcal{I}_p(b)},\ \Omega_p,\ \phi_p\Big).
\end{equation}
Here $\mathcal{I}_p(b)$ is the set of participants; $A_{i,p}(b)$ are the feasible actions of player $i$, and $a\in A_p(b):=\prod_{i\in\mathcal{I}_p(b)}A_{i,p}(b)$ is an action profile; $\mathsf{Inf}_{i,p}(b)$ is the information available to player $i$, including private token fields; $u_{i,p}(a,b,\omega)$ are local utilities; $\Omega_p$ is an outcome domain; and $\phi_p(\omega\mid a,b)$ is a probability kernel over outcomes.
\end{definition}

A deterministic game corresponds to a point-mass kernel, $\phi_p(\cdot\mid a,b)=\delta_{o_p(a,b)}$, where $\delta_\omega$ is the distribution that puts probability one on $\omega$ and $o_p$ is a deterministic outcome map. The kernel form also covers chance moves and noisy outcomes. \Cref{def:localgame} is deliberately general. The equilibrium theory of \cref{sec:games,sec:strategic,sec:values} specializes it to normal form with a publicly known binding (\cref{ass:normalform}), and \cref{def:nflocal} restates it in that form.

\paragraph{Resolution and the game-to-token interface.} A \emph{resolution} of $\G_p$ is a quadruple $(p,b,a,\omega)$ that records one completed instance: its binding, its action profile, and its realized outcome. Its token production is a finite multiset
\begin{equation}\label{eq:out}
  \Out_p(b,a,\omega)\in\M ,
\end{equation}
which may depend on the action profile as well as the outcome. The map $\Out_p$ is the \emph{game-to-token interface}. Its tokens are \emph{outcome tokens}: their colors record the outcome $\omega$ (and, where relevant, data such as an instance identifier). They can be placed in routing places or directly in successor game places. We model each resolution as a dedicated \emph{resolution transition} $r_p$. Firing $r_p$ consumes the active-instance token and the bound resources, $\Pre(r_p,b)$, and produces $\Out_p(b,a,\omega)$. Ordinary routing transitions then move these outputs on to successor games. \Cref{fig:anatomy} shows the full anatomy of one game place.

\begin{figure}[ht]
\centering
\begin{tikzpicture}[x=1cm,y=1cm]
  \fill[gamecol!5,rounded corners=4pt] (-1.1,-2.75) rectangle (3.75,4.15);
  \node[glbl,anchor=north west] at (-1.0,4.1) {strategic interaction};
  \fill[rescol!4,rounded corners=4pt] (4.05,-2.75) rectangle (11.3,4.15);
  \node[lbl,font=\footnotesize\itshape,text=rescol!70!black,anchor=north east] at (11.2,4.1) {outcome-guarded routing};
  \node[gplace,label={[glbl]below:game place\\[-1pt]$p\in P_G$}] (p) at (0,0) {};
  \node[ctok=gamecol] at (p) {};
  \node[lbl,anchor=south] at ($(p.north)+(0,0.05)$) {$\tok{\mathrm{id},b}$};
  \node[rtrans,label={[lbl]above left:$r_p$}] (r) at (2.2,0) {};
  \node[place,tokens=2,label={[lbl]left:resource}] (bud) at (2.2,-1.9) {};
  \node[place,label={[lbl,align=center]below:result place $q$\\$\tok{\mathrm{id},\omega,\mathrm{data}}$}] (q) at (5.3,0) {};
  \node[ctok=okcol] at (q) {};
  \coordinate (fan) at (6.6,0);
  \node[transition,label={[guard]above:$[\omega\in O_{p,t_1}]$}] (t1) at (7.7,1.3) {};
  \node[transition,label={[guard]above:$[\omega\in O_{p,t_2}]$}] (t2) at (7.7,-1.3) {};
  \node[lbl,anchor=north] at ($(t1.south)+(0,-0.05)$) {$t_1$};
  \node[lbl,anchor=north] at ($(t2.south)+(0,-0.05)$) {$t_2$};
  \node[gplace,label={[glbl]right:$\G_{q_1}$}] (g1) at (9.3,1.3) {};
  \node[gplace,label={[glbl]right:$\G_{q_2}$}] (g2) at (9.3,-1.3) {};
  \node[cbox,align=center] (cb) at (1.1,2.3) {%
    \textbf{local game} $\G_p(b)$\\[2pt]
    \begin{tabular}{c|cc}
      & $\act{H}$ & $\act{L}$\\\hline
      $\act{H}$ & $2,2$ & $-1,3$\\
      $\act{L}$ & $3,-1$ & $0,0$
    \end{tabular}\\[3pt]
    $a\sim\sigma_p\in\mathcal{A}_p(b)$,\ \ $\omega\sim\phi_p(\cdot\mid a,b)$};
  \draw[arc] (p) -- (r);
  \draw[arc] (bud) -- node[right,lbl]{cost} (r.south);
  \draw[arc] (r) -- node[above,lbl]{$\Out_p(b,a,\omega)$} (q);
  \draw[arc] (q.east) -- (fan) |- (t1);
  \draw[arc] (q.east) -- (fan) |- (t2);
  \draw[arc] (t1) -- (g1); \draw[arc] (t2) -- (g2);
  \draw[darc,draw=gamecol!80!black] (cb.south -| r) -- (r.north);
\end{tikzpicture}
\caption{Anatomy of a game place. An active-instance token waits in $p$. When the players' profile $a$ and the outcome $\omega$ are realized, the resolution transition $r_p$ (orange bar) consumes the instance token and the bound resources. It then emits the outcome tokens $\Out_p(b,a,\omega)$. Ordinary transitions $t_1,t_2$ with guards of the form \eqref{eq:guard} route these tokens to successor games.}
\label{fig:anatomy}
\end{figure}

\subsection{General definition}

\begin{definition}[OCGPN]\label{def:ocgpn}
An \emph{outcome-driven compositional game Petri net} is a tuple
\begin{equation}
  \N = \big(P,\,T,\,F,\,\{g_t\},\,\mathrm{Col},\,M_0,\,\{\G_p\}_{p\in P_G},\,\{\mathcal{A}_p\}_{p\in P_G},\,\Pre,\,\Post,\,\kappa,\,\Obs,\,\Spec\big),
\end{equation}
where
\begin{itemize}[nosep,leftmargin=2em]
  \item $F$ is the arc relation with multiplicities, $\{g_t\}_{t\in T}$ are the guards, and $\mathrm{Col}=\{\mathrm{Col}_p\}_{p\in P}$ lists the color domains;
  \item $M_0$ is the initial marking;
  \item $\mathcal{A}_p(b)\subseteq\prod_{i\in\mathcal{I}_p(b)}\Delta(A_{i,p}(b))$ is a nonempty \emph{admissibility correspondence}: the local profiles that the modeler permits at $p$ (all profiles, or those satisfying policy or regulatory constraints);
  \item $\Post(t,b)$ is the output multiset of an ordinary transition, and $\Post(r_p,b,a,\omega):=\Out_p(b,a,\omega)$ that of a resolution transition;
  \item $\kappa$ is a scheduler (defined in \cref{sec:semantics});
  \item $\Obs$ specifies which parts of the marking and history each player observes;
  \item $\Spec$ is a set of global specifications: safety, progress, cost, or terminal objectives.
\end{itemize}
\end{definition}

The admissibility correspondence $\mathcal{A}_p$ is a modeling input. It is distinct from the \emph{equilibrium} correspondences derived in \cref{sec:games}, which depend on the continuation values of the whole net. Stage-game Nash play is one admissible choice, but \cref{sec:games} shows that it is generally not globally consistent.

\begin{figure}[ht]
\centering
\begin{tikzpicture}[
  plane/.style={draw=#1!70!black,fill=#1!7,thick,rounded corners=2pt,minimum width=10.5cm,minimum height=2.0cm},
  ptitle/.style={font=\small\bfseries,anchor=west}]
  \node[plane=okcol] (L3) at (0,4.8) {};
  \node[plane=gamecol] (L2) at (0,2.4) {};
  \node[plane=rescol] (L1) at (0,0) {};
  \node[ptitle,text=okcol!50!black] at (6.1,4.8) {(iii) solution layer};
  \node[font=\footnotesize,anchor=west,align=left] at (6.1,4.3) {$\mathcal{A}_p(b)$, $\mathcal{E}_p(V)$,\\ scheduler $\kappa$, selection};
  \node[ptitle,text=gamecol!60!black] at (6.1,2.4) {(ii) game layer};
  \node[font=\footnotesize,anchor=west,align=left] at (6.1,1.9) {$\G_p(b)$: players, actions,\\ information, payoffs, $\phi_p$};
  \node[ptitle,text=rescol!60!black] at (6.1,0) {(i) structural net};
  \node[font=\footnotesize,anchor=west,align=left] at (6.1,-0.5) {$(P,T,F,\mathrm{Col},M_0)$, guards,\\ $\Pre$, $\Post$};
  \node[place,minimum size=5mm] (a1) at (-3.6,0) {};
  \node[transition,minimum height=5mm] (a2) at (-2.6,0) {};
  \node[gplace,minimum size=6mm] (a3) at (-1.5,0) {};
  \node[rtrans,minimum height=5mm] (a4) at (-0.4,0) {};
  \node[place,minimum size=5mm] (a5) at (0.7,0) {};
  \node[transition,minimum height=5mm] (a6) at (1.7,0.45) {};
  \node[transition,minimum height=5mm] (a7) at (1.7,-0.45) {};
  \node[gplace,minimum size=6mm] (a8) at (2.8,0.45) {};
  \node[gplace,minimum size=6mm] (a9) at (2.8,-0.45) {};
  \foreach \u/\v in {a1/a2,a2/a3,a3/a4,a4/a5,a6/a8,a7/a9} \draw[arc,thin] (\u)--(\v);
  \draw[arc,thin] (a5) |- (a6); \draw[arc,thin] (a5) |- (a7);
  \node[cbox,font=\scriptsize] (gm) at (-1.0,2.4) {%
    \begin{tabular}{c|cc} & $\act{H}$ & $\act{L}$\\\hline $\act{H}$ & $2,2$ & $-1,3$\\ $\act{L}$ & $3,-1$ & $0,0$\end{tabular}};
  \node[draw=okcol!70!black,fill=white,rounded corners,font=\scriptsize,align=center] (sl) at (-1.0,4.8) {$\sigma_p\in\mathcal{E}_p(V)$\\ e.g.\ Nash / Stackelberg / BNE};
  \draw[darc,draw=gamecol!80!black] (a3.north) -- node[pos=0.25,left,lbl]{attach} (a3.north |- gm.south);
  \draw[darc,draw=okcol!70!black] (gm.north) -- node[pos=0.82,right,lbl]{solve} (sl.south);
  \draw[darc,draw=okcol!70!black] (sl.east) -| node[pos=0.6,right,lbl]{select $a,\omega$} (3.7,0.85) -| (a4.north);
\end{tikzpicture}
\caption{Three layers of an OCGPN. The structural net fixes \emph{what is possible}, the game layer says \emph{what is at stake}, and the solution layer says \emph{what is chosen}. \Cref{sec:theory} attaches one matrix object to each layer: the incidence matrix $C$, the local matrices $(R^p,\Phi^p_\omega)$, and the routing matrix $P^\sigma$. The equilibrium correspondence $\mathcal{E}_p(V)$ is defined in \cref{def:equilibria}.}
\label{fig:layers}
\end{figure}

\paragraph{Well-formedness.} A well-formed OCGPN satisfies four conditions.
\begin{enumerate}[label=(W\arabic*),nosep]
  \item Every firing consumes only tokens that are actually present.
  \item Produced colors match the types of their destination places.
  \item Every kernel $\phi_p(\cdot\mid a,b)$ is a probability distribution.
  \item \emph{Exclusivity.} Each designated exclusive resource place $q$ carries a capacity $\mathrm{cap}(q)\in\mathbb{N}$. The weight vector $y$ that is $1$ on $q$ and on the places that hold the resource while it is in use, and $0$ elsewhere, is a P-invariant in the sense of \cref{prop:invariants}, with $y^\top M_0\le\mathrm{cap}(q)$ (markings viewed as vectors, \cref{sec:semantics}). Then no firing sequence duplicates the resource.
\end{enumerate}
If game instances may overlap, each instance's token color includes a unique instance identifier.

\section{General theory: semantics, analysis, and computation}
\label{sec:theory}

This section develops the theory in six steps, at two levels of generality.
\begin{itemize}[nosep]
  \item \textbf{General OCGPNs} (\cref{sec:semantics,sec:structural}, and the structural parts of \cref{sec:composition}): operational semantics, typed tokens, guards, incidence matrices, invariants, safety, and reachability. These results allow arbitrary concurrency.
  \item \textbf{Sequential OCGPNs} (\cref{sec:games,sec:strategic,sec:values}, and the value parts of \cref{sec:composition}): the class in which at most one game is active at a time (\cref{ass:sequential}). Here the net induces a stochastic game on its game places, and we obtain successor maps, augmented local games, Markov perfect equilibria, Shapley operators, and a decomposition into strongly connected components (SCCs).
\end{itemize}
Concurrent strategic interaction is covered by the general level and by the marking-level game of \cref{rem:beyond}. The place-level fixed-point results do not apply to it without state augmentation. The three matrix objects are separated deliberately. Guards act only on the incidence matrix, incentive instruments only on the local matrices, and equilibrium selection only on the routing matrix. \Cref{tab:notation} collects the notation. The running examples of \cref{sec:re1,sec:re2} illustrate the main results, in most cases through special instances.

\begin{table}[ht]
\centering
\caption{Principal notation.}
\label{tab:notation}
\footnotesize
\begin{tabular}{@{}>{\raggedright\arraybackslash}p{2.6cm}>{\raggedright\arraybackslash}p{5.2cm}|>{\raggedright\arraybackslash}p{2.4cm}>{\raggedright\arraybackslash}p{4.1cm}@{}}
\toprule
Symbol & Meaning & Symbol & Meaning\\\midrule
$P,T$; $P_G,P_R$ & places, transitions; game and resource places & $\mathcal{I}_p$, $\mathcal{I}$ & players at $p$; all players\\
$\mathrm{Col}_p$ & color domain of place $p$ & $A_i^p$, $A^p$ & actions of $i$ at $p$; profiles\\
$b=(b_0,b_R)$ & binding: instance token, resource tokens & $R_i^p$ & stage payoff of $i$ at $p$\\
$r_p$ & resolution transition of $p$ & $\Phi^p_\omega$ & outcome-probability matrix\\
$\Out_p(b,a,\omega)$ & tokens produced on resolution & $f$, $\Psi^p_q$ & successor map; routing probability to $q$\\
$\kappa$ & scheduler of enabled events & $Z$, $v_i$ & terminal places; terminal values\\
$\widetilde P,\widetilde T$ & unfolded places and transitions & $\mathcal{D}$ & place graph on $P_G$\\
$C=\Post-\Pre$ & incidence matrix of the unfolded net & $\overline V_i$ & values extended to $Z$\\
$n$ & firing-count vector & $Q_i^p(V)$ & augmented local payoff\\
$y$ & P-invariant or Farkas vector & $P^\sigma$, $\Xi^\sigma$, $\mathbf{r}_i^\sigma$ & routing, exit, reward\\
$\mathcal{A}_p$ & admissibility correspondence & $\mathcal{B}$ & Shapley (Bellman) operator\\
$\Spec$ & global specifications & $\mathcal{K}_1,\dots,\mathcal{K}_K$ & strongly connected components\\
\bottomrule
\end{tabular}
\end{table}

\subsection{Operational semantics and the incidence matrix}
\label{sec:semantics}

\paragraph{Enabling and firing.} A binding $b\in B_t(M)$ \emph{enables} an ordinary transition $t$ at marking $M$ if the input tokens are available and the guard $g_t$ holds:
\begin{equation}\label{eq:enable}
  M \ \ge\ \Pre(t,b),\qquad g_t(M,b)=1 .
\end{equation}
Firing gives $M'=M-\Pre(t,b)+\Post(t,b)$. A resolution transition $r_p$ is enabled by a binding $b=(b_0,b_R)$ if $b_0$ is an active-instance token in $p$, the resources $b_R$ are present, and its guard holds. Firing it proceeds in three steps: the players choose a profile $a\sim\sigma_p$ with $\sigma_p\in\mathcal{A}_p(b)$; an outcome $\omega\sim\phi_p(\cdot\mid a,b)$ is drawn; and the marking is updated by
\begin{equation}\label{eq:fire}
  M' = M - \Pre(r_p,b) + \Post(r_p,b,a,\omega),\qquad \Post(r_p,b,a,\omega)=\Out_p(b,a,\omega).
\end{equation}
\Cref{eq:fire} is a \emph{multiset} equation, not a subtraction of scalar counts. A resolution may add several tokens, including tokens that start parallel game instances.

\paragraph{Outcome-conditioned routing.} For a game place $p$, let $T_p^{\mathrm{route}}$ be the routing transitions whose input places receive the outcome tokens of $r_p$. For such a transition, the binding $b$ selects an outcome token, and we write $\omega(b)$ for the outcome recorded in its color. Guards can test for mutually exclusive outcomes, but exclusivity is not required:
\begin{equation}\label{eq:guard}
  g_t(M,b) = \ind\{\omega(b)\in O_{p,t}\},\qquad t\in T_p^{\mathrm{route}},
\end{equation}
where $O_{p,t}\subseteq\Omega_p$ is the set of outcomes that $t$ accepts. \Cref{fig:routing} shows the three resulting cases:
\begin{enumerate}[label=(\alph*),nosep]
  \item \textbf{Exclusive choice.} If the sets $O_{p,t}$ partition $\Omega_p$ and the routing transitions compete for one exclusive outcome token, exactly one branch fires.
  \item \textbf{Fork.} If several output tokens are generated, several routes may fire concurrently.
  \item \textbf{Conflict.} If guards overlap on the same exclusive token, the scheduler decides. Otherwise the model is nondeterministic.
\end{enumerate}

\begin{figure}[ht]
\centering
\begin{subfigure}[t]{0.32\textwidth}\centering
\begin{tikzpicture}[x=1cm,y=1cm]
  \node[rtrans,label={[lbl]below:$r_p$}] (r) at (0,0) {};
  \node[place] (q) at (1.2,0) {}; \node[ctok=okcol] at (q) {};
  \coordinate (fan) at (1.95,0);
  \node[transition,label={[guard]above:$[\omega{=}\mathrm{ok}]$}] (t1) at (2.7,0.95) {};
  \node[transition,label={[guard]below:$[\omega{=}\mathrm{fail}]$}] (t2) at (2.7,-0.95) {};
  \node[gplace] (g1) at (3.95,0.95) {}; \node[gplace] (g2) at (3.95,-0.95) {};
  \draw[arc] (r)--(q);
  \draw[arc] (q.east)--(fan)|-(t1); \draw[arc] (q.east)--(fan)|-(t2);
  \draw[arc,okcol!65!black,line width=1.1pt] (t1)--(g1); \draw[arc,inkcol!25] (t2)--(g2);
\end{tikzpicture}
\caption{Exclusive choice: one token, disjoint guards, so exactly one branch fires.}
\end{subfigure}\hfill
\begin{subfigure}[t]{0.32\textwidth}\centering
\begin{tikzpicture}[x=1cm,y=1cm]
  \node[rtrans,label={[lbl]below:$r_p$}] (r) at (0,0) {};
  \coordinate (fan) at (0.6,0);
  \node[place] (q1) at (1.45,0.95) {}; \node[ctok=okcol] at (q1) {};
  \node[place] (q2) at (1.45,-0.95) {}; \node[ctok=rescol] at (q2) {};
  \node[transition] (t1) at (2.7,0.95) {}; \node[transition] (t2) at (2.7,-0.95) {};
  \node[gplace] (g1) at (3.95,0.95) {}; \node[gplace] (g2) at (3.95,-0.95) {};
  \draw[arc] (r.east)--(fan)|-(q1); \draw[arc] (r.east)--(fan)|-(q2);
  \draw[arc] (q1)--(t1); \draw[arc] (q2)--(t2);
  \draw[arc,okcol!65!black,line width=1.1pt] (t1)--(g1); \draw[arc,okcol!65!black,line width=1.1pt] (t2)--(g2);
  \node[guard,anchor=south] at (1.45,1.4) {two tokens emitted};
\end{tikzpicture}
\caption{Fork: $\Out_p$ emits two tokens, so two successor games run concurrently.}
\end{subfigure}\hfill
\begin{subfigure}[t]{0.32\textwidth}\centering
\begin{tikzpicture}[x=1cm,y=1cm]
  \node[rtrans,label={[lbl]below:$r_p$}] (r) at (0,0) {};
  \node[place] (q) at (1.2,0) {}; \node[ctok=okcol] at (q) {};
  \coordinate (fan) at (1.95,0);
  \node[transition,label={[guard]above:$[\omega\in O_1]$}] (t1) at (2.7,0.95) {};
  \node[transition,label={[guard]below:$[\omega\in O_2]$}] (t2) at (2.7,-0.95) {};
  \node[gplace] (g1) at (3.95,0.95) {}; \node[gplace] (g2) at (3.95,-0.95) {};
  \draw[arc] (r)--(q);
  \draw[arc,badcol] (q.east)--(fan)|-(t1); \draw[arc,badcol] (q.east)--(fan)|-(t2);
  \draw[arc] (t1)--(g1); \draw[arc] (t2)--(g2);
  \node[font=\scriptsize,text=badcol,align=center,anchor=north] at (1.95,-1.85) {$O_1\cap O_2\neq\emptyset$: scheduler $\kappa$ decides};
\end{tikzpicture}
\caption{Conflict: overlapping guards compete for one token, so a scheduler is needed.}
\end{subfigure}
\caption{Three routing patterns after a game resolves. The figure shows that ``the outcome selects the next game'' is only the special case (a).}
\label{fig:routing}
\end{figure}

The familiar ``transition between games'' is the special case (a), written compactly with a \emph{successor map} $f$ (formalized in \cref{def:successor}):
\begin{equation}\label{eq:compact}
  \G_p \xrightarrow{\ \omega\ } \G_q,\qquad q = f(p,\omega).
\end{equation}
This notation hides the intermediate resource places and routing transitions. It should not replace the full marking semantics when parallel activities, synchronization, or shared resources matter.

\paragraph{Scheduler and global state.} An \emph{event} is a pair $\xi=(t,b)$ of a transition and a binding enabled at the current marking. The \emph{global state} $s$ consists of the marking together with any payoff-relevant private information, pending commitments, and the history summary needed to make future choices well defined. A \emph{scheduler} $\kappa(\xi\mid s)$ is a probability distribution over the events enabled at $s$. Throughout, $\kappa$ is \emph{exogenous and stationary}: it models engineering priority rules or timing races, and it is not a strategic player. An adversarial scheduler can be modeled by making it a player of a marking-level game (\cref{rem:beyond}). A \emph{strategy profile} $\pi$ assigns to each enabled resolution event $\xi=(r_p,b)$ a distribution $\pi(\cdot\mid s,\xi)$ over action profiles $A_p(b)$, and we write $\phi_\xi(\cdot\mid s,a):=\phi_p(\cdot\mid a,b)$. The induced transition kernel is
\begin{equation}\label{eq:kernel}
  \mathsf{P}^{\pi,\kappa}(s'\mid s) = \sum_{\xi,a,\omega}\kappa(\xi\mid s)\,\pi(a\mid s,\xi)\,\phi_\xi(\omega\mid s,a)\,\ind\{\mathrm{nx}_\xi(s,a,\omega)=s'\},
\end{equation}
where $\mathrm{nx}_\xi(s,a,\omega)$ is the next state after event $\xi$ fires with profile $a$ and outcome $\omega$, by \eqref{eq:fire}. For an ordinary event, the action and outcome factors are point masses.

\paragraph{Unfolding and the incidence matrix.} For matrix analysis we use the \emph{unfolded} net. Its places are $\widetilde P=\{(p,c):p\in P,\ c\in\mathrm{Col}_p\}$. Its transitions $\widetilde T$ are the pairs $(t,b)$ of an ordinary transition and a binding, together with the triples $(r_p,b,\chi)$ of a resolution transition, a binding, and an output class $\chi$, that is, a distinct value of $\Out_p(b,a,\omega)$. Resolutions with the same output share one column. A marking is identified with the vector in $\mathbb{N}^{|\widetilde P|}$ whose $(p,c)$ entry is $M_p(c)$, and $\mathbf{e}_{(p,c)}$ denotes the corresponding unit vector; unit vectors indexed by unfolded transitions are written $\mathbf{e}_{t}$ in the same way.

\begin{assumption}[Finite unfolding]\label{ass:finite}
The color domains, or at least the sets of colors and bindings that are reachable from $M_0$, are finite. Hence $\widetilde P$ and $\widetilde T$ are finite.
\end{assumption}

Under \cref{ass:finite}, define
\begin{equation}\label{eq:incidence}
  \Pre,\ \Post\in\mathbb{N}^{|\widetilde P|\times|\widetilde T|},\qquad C:=\Post-\Pre\in\mathbb{Z}^{|\widetilde P|\times|\widetilde T|}.
\end{equation}
Here the column of $\Pre$ (respectively $\Post$) indexed by an unfolded transition is the vector of the multiset it consumes (respectively produces); for $(r_p,b,\chi)$ the produced multiset is $\chi$. We use the same symbols $\Pre$ and $\Post$ for these matrices as for the multiset-valued maps; the meaning is clear from context.
\Cref{ass:finite} is substantive. A finite original net with infinite color domains, or with unboundedly many dynamic game parameters, need not have a finite unfolding. Let $n\in\mathbb{N}^{|\widetilde T|}$ count how often each column has fired along a firing sequence $M_0\to\cdots\to M$. Summing the firing rule over the sequence gives the \emph{state equation}
\begin{equation}\label{eq:state}
  M = M_0 + C\,n,\qquad n\ge 0 .
\end{equation}
The incidence matrix records the \emph{net change} produced by each unfolded transition. It does not, by itself, characterize which transitions are enabled or which markings are reachable. Guards that depend only on colors and bindings are compiled into the unfolding: a binding that a guard rejects has no column. Guards that depend on the rest of the marking disable columns at some markings without deleting them. The strategic layer enters only through \emph{which} output column fires, with probability determined by $\sigma_p$ and $\phi_p$.

\subsection{Sequential OCGPNs: local and global games}
\label{sec:games}

The semantics above say how a game outcome changes the marking. This subsection makes precise \emph{which game} the players are playing, both at one place (the local game) and over the whole net (the global game), and how the two are related. It applies to the class of \emph{sequential} OCGPNs defined by three standing assumptions, in addition to \cref{ass:finite}.

\begin{assumption}[Normal-form local games]\label{ass:normalform}
Each local game is a finite game in normal form with complete information. At a game place $p$, the binding $b$ is common knowledge among the players of $p$. Information sets $\mathsf{Inf}_{i,p}$ of \cref{def:localgame} are therefore trivial, and any private information must be encoded in $b$ and made public. The players of $p$ choose simultaneously. Different reachable bindings are treated as different game places of the unfolded net, so we write $\G_p$ for $\G_p(b)$.
\end{assumption}

\begin{assumption}[Sequential activation and exclusive routing]\label{ass:sequential}
At every reachable marking, at most one game place holds an active-instance token, and every other token needed for the analysis is folded into that token's color. When $\G_p$ resolves with outcome $\omega$, the emitted tokens enable exactly one routing path, so the scheduler has no choice to make. That path either activates exactly one game place or ends in exactly one \emph{terminal place}. Routing may depend on the action profile only through $\omega$, that is, $\Out_p(b,a,\omega)$ determines the same successor for all $a$.
\end{assumption}

\begin{assumption}[Discounting]\label{ass:discount}
The discount factor satisfies $\gamma\in(0,1)$. The value $\gamma=1$ is allowed only when the place graph defined below is acyclic.
\end{assumption}

\paragraph{The local game.} Under \cref{ass:normalform}, the local game of \cref{def:localgame} takes the following form.

\begin{definition}[Normal-form local game]\label{def:nflocal}
The local game at $p\in P_G$ is the tuple
\begin{equation}\label{eq:nflocal}
  \G_p=\big(\mathcal{I}_p,\ (A_i^p)_{i\in \mathcal{I}_p},\ (R_i^p)_{i\in \mathcal{I}_p},\ \Omega_p,\ \Phi^p\big),
\end{equation}
where:
\begin{itemize}[nosep,leftmargin=2em]
  \item $\mathcal{I}_p$ is a finite set of players;
  \item $A_i^p$ is a finite action set for player $i$, and $A^p=\prod_{i\in \mathcal{I}_p}A_i^p$ is the set of action profiles;
  \item $R_i^p:A^p\to\mathbb{R}$ is player $i$'s \emph{stage payoff}, the expected immediate reward $R_i^p(a)=\sum_{\omega}\phi_p(\omega\mid a)\,u_{i,p}(a,\omega)$;
  \item $\Omega_p$ is a finite outcome set;
  \item $\Phi^p=(\Phi^p_\omega)_{\omega\in\Omega_p}$ are the \emph{outcome-probability functions}, $\Phi^p_\omega(a)=\phi_p(\omega\mid a)$, so that $\sum_\omega\Phi^p_\omega(a)=1$ for every $a$.
\end{itemize}
A mixed strategy of player $i$ at $p$ is $\sigma_{i,p}\in\Delta(A_i^p)$, and $\sigma_p=(\sigma_{i,p})_{i\in \mathcal{I}_p}$ is a mixed profile. Any function $h:A^p\to\mathbb{R}$ extends multilinearly, $h(\sigma_p):=\sum_{a\in A^p}\prod_{i\in \mathcal{I}_p}\sigma_{i,p}(a_i)\,h(a)$.
\end{definition}

\begin{definition}[Outcome-probability matrices]\label{def:indicator}
When $|\mathcal{I}_p|=2$, $R_i^p$ and $\Phi^p_\omega$ are $|A_1^p|\times|A_2^p|$ matrices, $(\Phi^p_\omega)_{jk}=\phi_p(\omega\mid a_j,a_k)$, and $\sum_\omega\Phi^p_\omega=J$, the all-ones matrix. For mixed strategies $\alpha_1,\alpha_2$, the extension is bilinear: $h(\alpha_1,\alpha_2)=\alpha_1^\top h\,\alpha_2$. When $\phi_p$ is deterministic, each $\Phi^p_\omega$ is a $0$--$1$ indicator matrix. For more than two players, the same objects are tensors.
\end{definition}

\paragraph{The interface.} By \cref{ass:sequential}, every pair $(p,\omega)$ determines where play continues. Let $Z\subseteq P_R$ be the set of terminal places, with \emph{terminal values} $v_i:Z\to\mathbb{R}$.

\begin{definition}[Successor map, routing probabilities, place graph]\label{def:successor}
The \emph{successor map} $f:\{(p,\omega):p\in P_G,\ \omega\in\Omega_p\}\to P_G\cup Z$ sends $(p,\omega)$ to the game place activated, or to the terminal place reached, after $\G_p$ resolves with outcome $\omega$. For $q\in P_G\cup Z$, the \emph{routing probability} to $q$ is
\begin{equation}\label{eq:Psi}
  \Psi^p_q(a)=\sum_{\omega:\,f(p,\omega)=q}\Phi^p_\omega(a),\qquad \sum_q\Psi^p_q(a)=1 .
\end{equation}
The \emph{place graph} $\mathcal{D}$ has vertex set $P_G$ and an edge $p\to q$ whenever $\Psi^p_q\not\equiv0$ for some $q\in P_G$.
\end{definition}

The successor map is the compact notation \eqref{eq:compact}, computed from $\Out_p$ and the routing guards. Everything the players of $p$ need to know about the rest of the net passes through $\Psi^p$ and the values at its targets. Distinct outcomes with the same successor are strategically equivalent for continuation purposes. This is why the analysis below uses $\Psi$ rather than $\Phi$.

\paragraph{The global game.} The local games and the interface together define one dynamic game.

\begin{definition}[Global game]\label{def:global}
The \emph{global game} of a sequential OCGPN $\N$ is the stochastic game $\Gamma(\N)=\big(\mathcal{I},\ P_G\cup Z,\ (\G_p)_{p\in P_G},\ f,\ v,\ \gamma,\ p_0\big)$. The player set is the finite union $\mathcal{I}=\bigcup_{p}\mathcal{I}_p$. A player $i\notin\mathcal{I}_p$ has the single no-op action at $p$ and stage payoff $R_i^p\equiv0$. The initial game place $p_0$ is the one activated by $M_0$. Play proceeds in stages $\ell=0,1,\dots$ as follows:
\begin{enumerate}[label=(\arabic*),nosep]
  \item at stage $\ell$ the active place is $p_\ell$, and the players choose $a_\ell\in A^{p_\ell}$;
  \item each $i\in\mathcal{I}$ receives $R_i^{p_\ell}(a_\ell)$;
  \item an outcome $\omega_\ell\sim\phi_{p_\ell}(\cdot\mid a_\ell)$ is drawn, and $p_{\ell+1}=f(p_\ell,\omega_\ell)$;
  \item play stops at the first time $\tau$ with $p_\tau\in Z$, and each player receives the terminal value $v_i(p_\tau)$, discounted by $\gamma^\tau$.
\end{enumerate}
A \emph{behavior strategy} of player $i$ maps each history $(p_0,a_0,\dots,p_\ell)$ to a mixed action in $\Delta(A_i^{p_\ell})$. A \emph{stationary strategy} depends only on the current place: $\sigma_i=(\sigma_{i,p})_{p\in P_G}$. We write $\sigma=(\sigma_p)_{p\in P_G}$ for a stationary profile. Player $i$'s payoff from place $p$ is
\begin{equation}\label{eq:globalpayoff}
  U_i(\sigma;p)=\Ex^\sigma_p\Big[\sum_{\ell=0}^{\tau-1}\gamma^\ell R_i^{p_\ell}(a_\ell)+\gamma^\tau v_i(p_\tau)\,\ind\{\tau<\infty\}\Big].
\end{equation}
Since all stage payoffs and terminal values are finite in number, they are bounded, and \eqref{eq:globalpayoff} is well defined under \cref{ass:discount}.
\end{definition}

The global game is the place-level reading of the marking-level dynamics \eqref{eq:kernel}. Its states are the active game places, its stage games are the local games, and its transition law is $\Pr[p_{\ell+1}=q\mid p_\ell=p,a]=\Psi^p_q(a)$.

\paragraph{Policy evaluation.} For a stationary profile $\sigma$, define the \emph{routing matrix} $P^\sigma\in\mathbb{R}^{P_G\times P_G}$, the \emph{exit matrix} $\Xi^\sigma\in\mathbb{R}^{P_G\times Z}$, and the \emph{reward vectors} $\mathbf{r}_i^\sigma\in\mathbb{R}^{P_G}$ by
\begin{equation}\label{eq:routing}
  P^\sigma_{pq}=\Psi^p_q(\sigma_p),\qquad
  \Xi^\sigma_{pz}=\Psi^p_z(\sigma_p),\qquad
  \mathbf{r}_i^\sigma(p)=R_i^p(\sigma_p)+\gamma\sum_{z\in Z}\Xi^\sigma_{pz}\,v_i(z).
\end{equation}
Each row of $[P^\sigma\ \Xi^\sigma]$ sums to one.

\begin{lemma}[Policy evaluation]\label{lem:policy}
Under \cref{ass:sequential,ass:discount}, the value vector $V_i^\sigma:=U_i(\sigma;\cdot)\in\mathbb{R}^{P_G}$ is the unique solution of $V=\mathbf{r}_i^\sigma+\gamma P^\sigma V$, that is,
\begin{equation}\label{eq:neumann}
  V_i^\sigma=(I-\gamma P^\sigma)^{-1}\mathbf{r}_i^\sigma=\sum_{k\ge0}\gamma^k(P^\sigma)^k \mathbf{r}_i^\sigma .
\end{equation}
\end{lemma}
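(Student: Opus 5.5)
The plan is to derive the Bellman-type identity $V=\mathbf{r}_i^\sigma+\gamma P^\sigma V$ by a one-step (first-transition) decomposition of \eqref{eq:globalpayoff}. Then I will show that $I-\gamma P^\sigma$ is invertible, which gives both uniqueness and the Neumann series. Throughout, $\sigma$ is stationary, so under $\sigma$ the place process $(p_\ell)$ is a time-homogeneous Markov chain on $P_G\cup Z$ with transition law $\Pr[p_{\ell+1}=q\mid p_\ell=p]=\Psi^p_q(\sigma_p)$, absorbed at $Z$. By \cref{def:successor} and \eqref{eq:routing}, the restriction of this law to $P_G\times P_G$ is exactly $P^\sigma$ and its restriction to $P_G\times Z$ is $\Xi^\sigma$. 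The Markov property uses \cref{ass:sequential}: routing depends on $a$ only through $\omega$, and exactly one successor is activated.

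\textbf{Step 1 (well-definedness and boundedness).} For $\gamma<1$, the integrand in \eqref{eq:globalpayoff} is bounded by $\max|R|/(1-\gamma)+\max|v|$. Hence $U_i(\sigma;p)$ is finite, and dominated convergence justifies interchanging expectation and sums. In the case $\gamma=1$, which \cref{ass:discount} allows only for an acyclic place graph, every path from $p$ reaches $Z$ within $|P_G|$ steps. So $\tau\le|P_G|$ surely, and the same bound holds with a finite sum.

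\textbf{Step 2 (one-step decomposition).} Condition on $(a_0,p_1)$. The stage-$0$ term contributes $R_i^p(\sigma_p)$ by multilinearity. If $p_1=z\in Z$, then $\tau=1$ and the contribution is $\gamma v_i(z)$, weighted by $\Xi^\sigma_{pz}$. If $p_1=q\in P_G$, the Markov property and stationarity of $\sigma$ make the shifted tail equal to $\gamma$ times a fresh copy of \eqref{eq:globalpayoff} started at $q$. Its conditional expectation is therefore $\gamma U_i(\sigma;q)$, weighted by $P^\sigma_{pq}$. Summing these gives $V_i^\sigma(p)=\mathbf{r}_i^\sigma(p)+\gamma\sum_q P^\sigma_{pq}V_i^\sigma(q)$.

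\textbf{Step 3 (invertibility, the main point).} $P^\sigma$ is substochastic, since each of its rows sums to at most one because $[P^\sigma\ \Xi^\sigma]$ is stochastic. For $\gamma<1$ this gives $\|\gamma P^\sigma\|_\infty\le\gamma<1$. The series $\sum_k\gamma^k(P^\sigma)^k$ then converges to $(I-\gamma P^\sigma)^{-1}$, and $V\mapsto\mathbf{r}_i^\sigma+\gamma P^\sigma V$ is a sup-norm contraction, so its fixed point is unique. For $\gamma=1$ with $\mathcal{D}$ acyclic, order $P_G$ topologically. Then $P^\sigma$ is strictly upper triangular, hence nilpotent with $(P^\sigma)^{|P_G|}=0$. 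The series is finite and equals $(I-P^\sigma)^{-1}$, and uniqueness follows.

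The only step needing care is the acyclic $\gamma=1$ case. Its treatment rests on the observation that an edge $p\to q$ of $\mathcal{D}$ is present whenever $P^\sigma_{pq}>0$ for any $\sigma$, so nilpotency holds uniformly in $\sigma$. As an optional consistency check, expanding $(P^\sigma)^k\mathbf{r}_i^\sigma$ recovers the expected reward at stage $k$, plus the exit term at stage $k+1$, on the event that $\tau>k$. This matches \eqref{eq:globalpayoff} term by term.
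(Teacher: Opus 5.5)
Your proposal is correct and follows essentially the paper's proof. Both use a first-stage decomposition of \eqref{eq:globalpayoff} to obtain $V=\mathbf{r}_i^\sigma+\gamma P^\sigma V$, and both get invertibility from substochasticity of $P^\sigma$ when $\gamma<1$ (you bound $\|\gamma P^\sigma\|_\infty\le\gamma$, the paper bounds $\rho(\gamma P^\sigma)\le\gamma$), or from nilpotency under a topological order in the acyclic case $\gamma=1$; your inline proof of nilpotency and the explicit well-definedness step fill in what the paper delegates to \cref{prop:nilpotent} and \cref{def:global}.
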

\begin{proof}
Condition \eqref{eq:globalpayoff} on the first stage. The stage reward is $R_i^p(\sigma_p)$. With probability $\Xi^\sigma_{pz}$, play stops at $z$ after one stage, worth $\gamma v_i(z)$. With probability $P^\sigma_{pq}$, play continues from $q$, worth $\gamma V_i^\sigma(q)$ by stationarity. This gives $V=\mathbf{r}_i^\sigma+\gamma P^\sigma V$. Since $P^\sigma$ is substochastic, its spectral radius (largest modulus of an eigenvalue) satisfies $\rho(\gamma P^\sigma)\le\gamma<1$, so $I-\gamma P^\sigma$ is invertible and the Neumann series converges. If $\gamma=1$, the place graph is acyclic by \cref{ass:discount}, so $P^\sigma$ is nilpotent (\cref{prop:nilpotent}) and the same conclusion holds.
\end{proof}

\paragraph{Local and global equilibria.} The continuation values of the global game enter each local game through the following construction.

\begin{definition}[Augmented local game]\label{def:augmented}
Let $V=(V_i)_{i\in\mathcal{I}}$ with $V_i:P_G\to\mathbb{R}$, and define the \emph{extended value}
\begin{equation}
  \overline V_i(q)=\begin{cases}V_i(q),&q\in P_G,\\ v_i(q),&q\in Z.\end{cases}
\end{equation}
The \emph{augmented local game} $\G_p[V]$ has the players and actions of $\G_p$ and the payoffs
\begin{equation}\label{eq:Qmatrix}
  Q_i^p(V)(a) \;=\; R_i^p(a) + \gamma\sum_{\omega\in\Omega_p}\Phi^p_\omega(a)\,\overline V_i\big(f(p,\omega)\big)
  \;=\; R_i^p(a) + \gamma\sum_{q\in P_G\cup Z}\Psi^p_q(a)\,\overline V_i(q),\qquad a\in A^p .
\end{equation}
In the two-player case, \eqref{eq:Qmatrix} is the matrix identity $Q_i^p(V)=R_i^p+\gamma\sum_q \overline V_i(q)\,\Psi^p_q$.
\end{definition}

\begin{definition}[Equilibrium notions]\label{def:equilibria}
Fix a stationary profile $\sigma$.
\begin{enumerate}[label=(\roman*),nosep]
  \item $\sigma_p$ is a \emph{myopic local equilibrium} if it is a Nash equilibrium of the stage game $(R_i^p)_{i\in \mathcal{I}_p}$.
  \item $\sigma_p$ is an \emph{augmented local equilibrium at $V$} if it is a Nash equilibrium of $\G_p[V]$. We write $\mathcal{E}_p(V)$ for the set of these.
  \item $\sigma$ is a \emph{global equilibrium}, that is, a stationary Markov perfect equilibrium of $\Gamma(\N)$, if for every player $i$, every place $p\in P_G$, and every behavior strategy $\sigma_i'$, we have $U_i(\sigma;p)\ge U_i(\sigma_i',\sigma_{-i};p)$.
\end{enumerate}
The augmented correspondence $\mathcal{E}_p(V)$ depends on the continuation values of the whole net, whereas the admissibility correspondence $\mathcal{A}_p$ of \cref{def:ocgpn} is a modeling input. When $\mathcal{A}_p$ is a strict restriction, the theory below applies to the game in which player $i$'s action set at $p$ is restricted accordingly.
\end{definition}

\begin{theorem}[Local--global consistency]\label{thm:consistency}
Under \cref{ass:finite,ass:normalform,ass:sequential,ass:discount}, a stationary profile $\sigma^\star$ is a global equilibrium if and only if $\sigma^\star_p\in\mathcal{E}_p(V^{\sigma^\star})$ for every game place $p$, where $V^{\sigma^\star}=(V_i^{\sigma^\star})_{i\in\mathcal{I}}$.
\end{theorem}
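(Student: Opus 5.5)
The plan is to reduce the statement to the one-shot deviation principle for discounted stochastic games. Against the fixed stationary opponents $\sigma^\star_{-i}$, player $i$ faces a finite Markov decision process on $P_G\cup Z$. Its transitions are $\Psi^p_q(a_i,\sigma^\star_{-i,p})$, its rewards are $R_i^p(a_i,\sigma^\star_{-i,p})$, and its terminal rewards are $v_i$. The identity that links this MDP to the local games is that, by \cref{lem:policy} and \eqref{eq:Qmatrix},
\[
V_i^{\sigma^\star}(p)=Q_i^p(V^{\sigma^\star})(\sigma^\star_p)\quad\text{for every } p\in P_G .
\]
This holds because $\mathbf r_i^{\sigma}(p)+\gamma(P^\sigma V)(p)$ is exactly the multilinear extension of $R_i^p+\gamma\sum_q\Psi^p_q\overline V_i(q)$ evaluated at $\sigma_p$. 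In words, the augmented payoff of the prescribed profile at $p$ is the global value from $p$.

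\emph{Necessity.} Suppose $\sigma^\star$ is a global equilibrium, and suppose some player $i$ at some place $p$ has a profitable deviation $\sigma'_{i,p}$ in $\G_p[V^{\sigma^\star}]$. Build the behavior strategy that plays $\sigma'_{i,p}$ at the first stage (started at $p$) and $\sigma^\star_i$ thereafter. Conditioning on the first stage, exactly as in the proof of \cref{lem:policy}, its payoff from $p$ is $Q_i^p(V^{\sigma^\star})(\sigma'_{i,p},\sigma^\star_{-i,p})$. By the identity above, this exceeds $U_i(\sigma^\star;p)$, which contradicts \cref{def:equilibria}(iii).

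\emph{Sufficiency.} Suppose every $\sigma^\star_p$ is a Nash equilibrium of $\G_p[V^{\sigma^\star}]$. Then, for each $i$, the vector $V_i^{\sigma^\star}$ satisfies the Bellman optimality equation of $i$'s MDP:
\[
V_i^{\sigma^\star}(p)=\max_{a_i\in A_i^p}Q_i^p(V^{\sigma^\star})(a_i,\sigma^\star_{-i,p}).
\]
This uses the fact that a best response in mixed strategies attains the pure maximum by multilinearity. We then show that no behavior strategy $\sigma'_i$ (history dependent) does better. For $\gamma<1$, unroll the Bellman inequality $\overline V_i(p_\ell)\ge R_i^{p_\ell}(a_\ell)+\gamma\,\Ex[\overline V_i(p_{\ell+1})\mid h_\ell]$ along the play induced by $(\sigma'_i,\sigma^\star_{-i})$ for $N$ stages. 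Stopped at $\tau$, this gives
\[
V_i^{\sigma^\star}(p)\ge \Ex\Big[\sum_{\ell<\min(\tau,N)}\gamma^\ell R+\gamma^{\tau}v_i\,\ind\{\tau<N\}+\gamma^N\overline V_i(p_N)\ind\{\tau\ge N\}\Big].
\]
Boundedness of the finitely many payoffs kills the remainder as $N\to\infty$. For $\gamma=1$ the place graph is acyclic, so $\tau\le|P_G|$ surely, and the same telescoping terminates after finitely many steps. Equivalently, one can cite that the Bellman operator is a $\gamma$-contraction whose fixed point is the optimal MDP value, and that history-dependent strategies cannot beat it.

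I expect the main obstacle to be the sufficiency direction against \emph{non-stationary, history-dependent} deviations. That is precisely the one-shot deviation principle, and it needs the tail term to vanish. I would handle it by the telescoping bound with a bounded remainder, treating the acyclic $\gamma=1$ case separately through the finite horizon guaranteed by nilpotency (\cref{prop:nilpotent}). The remaining care points are bookkeeping. Players $i\notin\mathcal I_p$ have a single no-op action and satisfy the conditions trivially. Admissibility restrictions $\mathcal A_p$ are absorbed into the action sets, as noted after \cref{def:equilibria}. The finiteness in \cref{ass:finite,ass:normalform} guarantees the maxima exist.
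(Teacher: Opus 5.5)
Your proposal is correct and follows the paper's proof essentially step for step. Both use the identity $V_i^{\sigma^\star}(p)=Q_i^p(V^{\sigma^\star})(\sigma^\star_p)$ from \cref{lem:policy}, a one-shot deviation at the first stage for necessity, and the Bellman optimality equation of player $i$'s MDP against $\sigma^\star_{-i}$ for sufficiency, with the acyclic $\gamma=1$ case handled by the finite horizon. The one difference is that you verify directly, by telescoping the Bellman inequality, that no history-dependent deviation beats $V_i^{\sigma^\star}$; the paper instead uses the contraction property and cites the standard MDP result that the Bellman fixed point is optimal over all behavior strategies. Your route needs only the supersolution inequality, not uniqueness, but $R_i^{p_\ell}(a_\ell)$ must sit inside the conditional expectation given $h_\ell$.
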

\begin{proof}
By \cref{lem:policy}, $V_i^{\sigma^\star}(p)=\mathbf{r}_i^{\sigma^\star}(p)+\gamma(P^{\sigma^\star}V_i^{\sigma^\star})(p)=Q_i^p(V^{\sigma^\star})(\sigma^\star_p)$.

($\Rightarrow$) Suppose that at some $p$ a player $i\in \mathcal{I}_p$ has a mixed action $\alpha$ with $Q_i^p(V^{\sigma^\star})(\alpha,\sigma^\star_{-i,p})>Q_i^p(V^{\sigma^\star})(\sigma^\star_p)=V_i^{\sigma^\star}(p)$. Consider the strategy that plays $\alpha$ at the first visit to $p$ and follows $\sigma^\star_i$ afterwards. Its payoff from $p$ is exactly $Q_i^p(V^{\sigma^\star})(\alpha,\sigma^\star_{-i,p})$, which exceeds $U_i(\sigma^\star;p)$. So $\sigma^\star$ is not a global equilibrium.

($\Leftarrow$) Fix $i$ and the others' stationary strategies $\sigma^\star_{-i}$. Player $i$ then faces a finite Markov decision process on $P_G$ with bounded rewards. Its Bellman optimality equation is $W(p)=\max_{\alpha\in\Delta(A_i^p)}Q_i^p(W)(\alpha,\sigma^\star_{-i,p})$. The right side is a $\gamma$-contraction in $W$ (by \cref{lem:affine} below), or a finite backward recursion when the place graph is acyclic. Hence it has a unique solution $W^\star$, which equals player $i$'s optimal payoff over all behavior strategies~\cite{puterman1994}. By hypothesis, $\sigma^\star_{i,p}$ attains the maximum at $W=V_i^{\sigma^\star}$ for every $p$. Hence $V_i^{\sigma^\star}$ solves the optimality equation, so $V_i^{\sigma^\star}=W^\star$. No deviation of player $i$ improves on $\sigma^\star$, from any place.
\end{proof}

\Cref{thm:consistency} specializes the one-shot deviation principle for stochastic games to the place-level game of a sequential OCGPN. A global equilibrium is a collection of local equilibria, one per place, of the games \emph{augmented} by the continuation values that the collection itself generates. Myopic local equilibria are a different object, and they coincide with global equilibria only in special cases (\cref{cor:myopic} below).

\begin{proposition}[Existence and uniqueness of values]\label{prop:existence}
Under \cref{ass:finite,ass:normalform,ass:sequential,ass:discount}, a global equilibrium exists. If $|\mathcal{I}_p|=2$ for every $p$, with a common pair of players, and $R_2^p=-R_1^p$, $v_2=-v_1$ (the zero-sum case), then all global equilibria give player~1 the same value vector $V^\star$.
\end{proposition}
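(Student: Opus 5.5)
Existence will follow from a Kakutani fixed-point argument on stationary profiles, combined with \cref{thm:consistency}. The space $\Sigma=\prod_{p\in P_G}\prod_{i\in\mathcal{I}_p}\Delta(A_i^p)$ is a nonempty, compact, convex subset of a Euclidean space, since \cref{ass:finite,ass:normalform} make it a finite product of simplices. By \cref{lem:policy}, the map $\sigma\mapsto V^\sigma=(I-\gamma P^\sigma)^{-1}\mathbf{r}_i^\sigma$ is continuous. Indeed, $P^\sigma$ and $\mathbf{r}_i^\sigma$ are multilinear in $\sigma$, and matrix inversion is continuous on the set where $I-\gamma P^\sigma$ is invertible, which is all of $\Sigma$. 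In the acyclic case $\gamma=1$, nilpotency of $P^\sigma$ makes the Neumann series a finite polynomial, so continuity is immediate there too. The payoffs $Q_i^p(V^\sigma)(\alpha,\sigma_{-i,p})$ are therefore jointly continuous in $(\alpha,\sigma)$ and linear in $\alpha$.

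Define the correspondence $\beta(\sigma)=\prod_{p,i}\arg\max_{\alpha\in\Delta(A_i^p)}Q_i^p(V^\sigma)(\alpha,\sigma_{-i,p})$. Each factor is nonempty and convex, because it is the argmax of a linear function over a simplex, namely a face of that simplex. By Berge's maximum theorem, $\beta$ has a closed graph. Kakutani's theorem then yields a fixed point $\sigma^\star\in\beta(\sigma^\star)$. This says precisely that $\sigma^\star_p\in\mathcal{E}_p(V^{\sigma^\star})$ for every $p$, so by \cref{thm:consistency} $\sigma^\star$ is a global equilibrium. The main point to check is that $\beta$ evaluates the best response against the continuation values generated by the current profile, \emph{not} by the deviating one. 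This is exactly what \cref{thm:consistency} licenses, so no fixed point over values is needed.

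For the zero-sum case, I will define the Shapley operator $(\mathcal{B}W)(p)=\mathrm{val}\,\big(R_1^p+\gamma\sum_q\Psi^p_q\,\overline W(q)\big)$, where $\mathrm{val}$ denotes the minimax value of the matrix game and $\overline W$ extends $W$ by $v_1$ on $Z$. Because $\mathrm{val}$ is monotone and commutes with adding constants, and because $\sum_q\Psi^p_q=1$ entrywise, the operator $\mathcal{B}$ is a $\gamma$-contraction in the sup norm. This mirrors \cref{lem:affine}. If $\gamma=1$ and the place graph is acyclic, I instead evaluate $\mathcal{B}$ by backward recursion along a topological order. In either case $\mathcal{B}$ has a unique fixed point $V^\star$.

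Now take any global equilibrium $\sigma$. Since $v_2=-v_1$ and $R_2=-R_1$, we have $V_2^\sigma=-V_1^\sigma$, and $\G_p[V^\sigma]$ is a zero-sum matrix game. By \cref{thm:consistency}, $\sigma_p$ is a saddle point of that game. Hence $V_1^\sigma(p)=Q_1^p(V^\sigma)(\sigma_p)=\mathrm{val}(\cdots)=(\mathcal{B}V_1^\sigma)(p)$, where the first equality holds as in the proof of \cref{thm:consistency}. Thus $V_1^\sigma$ is a fixed point of $\mathcal{B}$, and uniqueness of that fixed point gives $V_1^\sigma=V^\star$ for every global equilibrium.
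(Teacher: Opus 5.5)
Your proof is correct, and for existence it takes a different route from the paper. The paper splits the two discount regimes. For $\gamma<1$ it cites the Fink--Takahashi theorem as a black box. For $\gamma=1$ on an acyclic place graph it points to the backward-induction construction of \cref{thm:backward}.

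You instead run a single Kakutani argument on the one-shot-deviation correspondence, in which each player best-responds in $\G_p[V^\sigma]$ with $V^\sigma$ generated by the current profile. You then let \cref{thm:consistency} upgrade the fixed point to a global equilibrium. This buys three things:
\begin{itemize}
\item it is self-contained, needing only \cref{lem:policy} and \cref{thm:consistency};
\item it treats both regimes uniformly, since $\sigma\mapsto V^\sigma$ is continuous whenever $I-\gamma P^\sigma$ is invertible;
\item it avoids showing that each player's optimal value against $\sigma_{-i}$ varies continuously, which a best-response-in-the-MDP formulation of Kakutani would require.
\end{itemize}
The paper's route is shorter. For $\gamma=1$ it is also constructive, yielding a subgame-perfect equilibrium with an explicit selection rule.

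For the zero-sum part you follow the paper: a Shapley fixed point combined with \cref{thm:consistency}. Your explicit backward-recursion uniqueness for acyclic $\gamma=1$ is worth keeping. The paper's appeal to \cref{thm:contraction}, which is stated only for $\gamma<1$, leaves that case implicit.

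One small precision in your contraction step. $W$ varies only on $P_G$, while $\overline W$ is pinned to $v_1$ on $Z$. So perturbing $W$ by $c$ shifts $Q_1^p(a)$ by $\gamma c\sum_{q\in P_G}\Psi^p_q(a)\le\gamma c$, which is not uniform in $a$. The $\gamma$-Lipschitz bound still follows from monotonicity of $\mathrm{val}$ together with its shift property applied to $\pm\gamma c$.
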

\begin{proof}
For $\gamma<1$, the global game is a finite discounted stochastic game with a finite player set and bounded payoffs, so a stationary equilibrium exists by the theorem of Fink and Takahashi~\cite{fink1964,takahashi1964}. For $\gamma=1$ with an acyclic place graph, \cref{thm:backward} below constructs one. In the zero-sum case, \cref{thm:contraction} below shows that $V^\star$ is the unique fixed point of the Shapley operator, which every equilibrium value must satisfy by \cref{thm:consistency}.
\end{proof}

\begin{remark}[Extensions beyond sequential OCGPNs]\label{rem:beyond}
The standing assumptions can be relaxed in three directions.
\begin{itemize}[nosep,leftmargin=1.2em]
  \item \emph{Extensive-form local games.} If $\G_p$ is a finite extensive-form game with perfect recall, as in running example~II, the results hold with actions replaced by behavior strategies of $\G_p$. The local LP is then the sequence-form program~\cite{koller1996efficient}.
  \item \emph{Private information.} If token colors carry private types, $\G_p$ becomes a Bayesian game. Its information sets must not reveal hidden colors merely because the full marking contains them. Global consistency may then require perfect Bayesian or sequential equilibrium, and independently computed local Bayesian equilibria are generally not enough.
  \item \emph{Concurrency and strategic scheduling.} Without \cref{ass:sequential}, the state of the global game is the full global state $s$, with kernel \eqref{eq:kernel}. Let $R_i(s,\xi,a,\omega)$ denote player $i$'s immediate reward when event $\xi$ fires at $s$ with profile $a$ and outcome $\omega$. The augmented payoff becomes
  \begin{equation}\label{eq:Q}
    Q_i(s,\xi,a) = \Ex_{\omega\sim\phi_\xi(\cdot\mid s,a)}\Big[R_i(s,\xi,a,\omega) + \gamma\,V_i\big(\mathrm{nx}_\xi(s,a,\omega)\big)\Big],
  \end{equation}
  and \cref{thm:consistency} holds with places replaced by global states, at the cost of a larger state space. An adversarial scheduler becomes an additional player who chooses $\xi$. The resulting game is a finite stochastic game~\cite{shapley1953}. The OCGPN provides a compositional representation of it, with typed, localized state components; exact value computation still requires an adequate Markov state.
\end{itemize}
\end{remark}

\subsection{Structural analysis: reachability, safety, and progress}
\label{sec:structural}

The results of this subsection hold for general OCGPNs under \cref{ass:finite}. \Cref{eq:state} is necessary for reachability but not sufficient, because it ignores the order of firings and the enabling conditions at intermediate markings. This one-sidedness is what makes it useful for safety: if \eqref{eq:state} has no solution, the marking is certainly unreachable.

\begin{proposition}[Invariants]\label{prop:invariants}
\begin{enumerate}[label=(\roman*),nosep]
  \item \textbf{P-invariant.} If $y\in\mathbb{R}^{|\widetilde P|}$ satisfies $y^\top C=0$, then $y^\top M=y^\top M_0$ for every reachable $M$.
  \item \textbf{T-invariant.} If $n\in\mathbb{N}^{|\widetilde T|}\setminus\{0\}$ satisfies $Cn=0$, then any firing sequence with count vector $n$ that starts at $M$ returns to $M$. Every cycle of the reachability graph has a count vector that is a T-invariant. A T-invariant need not be \emph{realizable}: there may be no marking from which a sequence with count vector $n$ can actually fire.
\end{enumerate}
\end{proposition}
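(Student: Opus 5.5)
Both parts rest on the state equation \eqref{eq:state}: every reachable marking satisfies $M=M_0+Cn$, where $n$ counts how often each unfolded column fired. We first check that this equation holds for every firing, including resolutions. An ordinary event $(t,b)$ adds exactly the column $\Post(t,b)-\Pre(t,b)$ of $C$. A resolution $(r_p,b,a,\omega)$ adds $\Out_p(b,a,\omega)-\Pre(r_p,b)$. This is the column indexed by $(r_p,b,\chi)$ with $\chi=\Out_p(b,a,\omega)$, by the construction of the unfolding. Hence a single firing from $M$ gives $M'=M+C\mathbf{e}_\xi$ for the corresponding unfolded column $\xi$. Summing along a sequence gives the state equation. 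The strategy profile, the outcome draw and the scheduler only decide \emph{which} column fires, so the argument is uniform over them.

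For (i), multiply the state equation by $y^\top$ to get $y^\top M=y^\top M_0+(y^\top C)n=y^\top M_0$. A cleaner write-up is by induction on the length of the firing sequence. The step is $y^\top M'=y^\top M+y^\top C\mathbf{e}_\xi=y^\top M$, and it uses only the one-step identity above. Nothing here depends on reachability being characterized by the state equation, only on its necessity.

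For (ii), the same summation applies starting from an arbitrary marking $M$ instead of $M_0$. A firing sequence with count vector $n$ that actually fires from $M$ ends at $M+Cn=M$. For the cycle claim, take a cycle $M\to\cdots\to M$ in the reachability graph with count vector $n$. It is nonzero because a cycle has at least one edge. The state equation gives $M=M+Cn$, so $Cn=0$, and $n$ is a T-invariant. For non-realizability a small counterexample suffices. Take a single transition $t$ with one input place $p$ and one output place $p$ (a self-loop), with $p$ empty and no other arc ever marking $p$. Then $n=\mathbf{e}_t$ satisfies $Cn=0$, but $t$ is never enabled from any reachable marking. If ``no marking at all'' is meant literally, use a guard that is identically false on marking-dependent data, so the column exists but never fires. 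Alternatively use two transitions $t_1:p\to q$ and $t_2:q\to p$ in a net where a P-invariant forces $p+q=0$.

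The only delicate point is the bookkeeping for resolution transitions. We must check that merging resolutions with the same output into one column $\chi$ keeps the incidence exact; it does, because the net change depends only on $(b,\chi)$. We must also check that marking-dependent guards affect only enabledness and never $C$. Once these are confirmed, both parts are immediate linear algebra.
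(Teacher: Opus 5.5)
Your proof is correct and follows the paper's route: multiply the state equation by $y^\top$ for (i), apply it from an arbitrary $M$ and to a cycle for (ii), and exhibit a never-enabled transition for non-realizability. Your explicit check that each resolution firing adds the column indexed by $(r_p,b,\chi)$, and that a cycle's count vector is nonzero, fills in details the paper leaves implicit. One small caution: your final alternative (a P-invariant forcing $p+q=0$) only rules out realization from \emph{reachable} markings, exactly like your self-loop example. So for the literal ``no marking at all'' reading, only your guard-based example works.
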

\begin{proof}
(i) Multiply \eqref{eq:state} on the left by $y^\top$. (ii) If a sequence with count vector $n$ fires from $M$, then \eqref{eq:state} gives $M'=M+Cn=M$. Conversely, a cycle returns to its start, so its count vector satisfies $Cn=0$. Realizability fails, for instance, when a transition in the support of $n$ is never enabled.
\end{proof}

\begin{proposition}[Farkas certificate of unreachability]\label{prop:farkas}
Let $B=\{M: M\ge m_B\}$ be a set of bad markings described by a lower bound $m_B$. Suppose there is a $y\ge0$ such that
\begin{equation}\label{eq:farkas}
  C^\top y\le 0\qquad\text{and}\qquad y^\top(m_B-M_0)>0 .
\end{equation}
Then no marking in $B$ is reachable. Conversely, if no such $y$ exists, the \emph{real} relaxation $\{n\in\mathbb{R}^{|\widetilde T|}:n\ge0,\ M_0+Cn\ge m_B\}$ is feasible.
\end{proposition}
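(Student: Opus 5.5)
The plan has two halves: a direct argument through the state equation \eqref{eq:state}, and a converse via a theorem of the alternative (Farkas' lemma).

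\textbf{Soundness.} Suppose some $M\in B$ is reachable by a firing sequence with count vector $n\in\mathbb{N}^{|\widetilde T|}$. Then $M=M_0+Cn$ and $M\ge m_B$. Multiplying by $y^\top$, using $y\ge0$ to keep the inequality, gives
\[
y^\top M_0+y^\top Cn \;=\; y^\top M\;\ge\; y^\top m_B .
\]
Since $C^\top y\le0$ and $n\ge0$, we have $y^\top Cn=(C^\top y)^\top n\le 0$. Hence $y^\top m_B\le y^\top M_0$, which contradicts $y^\top(m_B-M_0)>0$. This step uses only the necessity of the state equation. It holds for every strategy profile and scheduler, because the strategic layer only selects which columns fire.

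\textbf{Converse.} Suppose the real relaxation is infeasible: there is no $n\ge0$ with $-Cn\le M_0-m_B$. I would apply the Farkas variant for systems $An\le c$, $n\ge0$ with $A=-C$ and $c=M_0-m_B$. That system is infeasible iff there exists $y\ge0$ with
\[
y^\top A\ge0 \qquad\text{and}\qquad y^\top c<0 .
\]
Translated back, these read $y^\top C\le0$, i.e.\ $C^\top y\le0$, and $y^\top(M_0-m_B)<0$, i.e.\ $y^\top(m_B-M_0)>0$. This is exactly \eqref{eq:farkas}. Taking the contrapositive, if no certificate $y$ exists, the relaxation is feasible, as claimed.

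The only delicate point is choosing the right form of Farkas' lemma, the one with both $n\ge0$ and inequality constraints, and tracking the signs. One also has to note that the converse holds only over $\mathbb{R}$. A real solution $n$ need not be integral or realizable, so it does not imply reachability, which is why the statement is phrased in terms of the relaxation. Finiteness of $C$ comes from \cref{ass:finite}, so the lemma applies in finite dimensions.
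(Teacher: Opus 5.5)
Your proof is correct and follows the paper's argument essentially verbatim: soundness by pairing the state equation $M=M_0+Cn$ with $y\ge0$ to get $y^\top m_B\le y^\top M_0$, and the converse as Farkas' lemma applied to $Cn\ge m_B-M_0$, $n\ge0$. Your sign bookkeeping for the variant with $A=-C$, $c=M_0-m_B$ is right.
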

\begin{proof}
Suppose $M\in B$ were reachable with count vector $n\ge0$. Then $y^\top m_B\le y^\top M=y^\top M_0+(C^\top y)^\top n\le y^\top M_0$, which contradicts \eqref{eq:farkas}. The converse is Farkas' lemma for the system $Cn\ge m_B-M_0$, $n\ge0$.
\end{proof}

The certificate is sound, but the method is incomplete in two ways. Real feasibility does not imply integer feasibility, and integer feasibility of the state equation does not imply reachability. A missing certificate therefore proves nothing. The vector $y$ is a \emph{weighted token count that no firing can increase}, while the bad set requires it to increase. Finding $y$ is a linear program, so a safety certificate can be computed rather than argued. Let $\Reach(M_0)$ denote the set of markings reachable from $M_0$ by finite firing sequences. More generally, any property $\mathrm{Inv}$ that holds at $M_0$ and is preserved by every firing holds at every reachable marking:
\begin{equation}\label{eq:invariant}
  \mathrm{Inv}(M_0)\ \wedge\ \big(\mathrm{Inv}(M)\wedge M\xrightarrow{\xi}M' \Rightarrow \mathrm{Inv}(M')\big)\ \Longrightarrow\ \forall M\in\Reach(M_0):\ \mathrm{Inv}(M).
\end{equation}
Because columns are indexed by outputs, not by equilibria, this argument covers every equilibrium at once, provided the unfolded net contains a column for every output that can actually occur.

\begin{designrule}[Structural and incentive instruments]
Structural certificates depend only on $C$. A binding-level guard deletes columns of the unfolding, so it can create a certificate. An incentive instrument (a payoff change, a decoy, a subsidy) changes only which columns are \emph{likely} to fire, so it can never produce one. Incentive instruments change values; guards change reachability. This also separates the \emph{existence} of a safe equilibrium from \emph{universal} safety: an equilibrium-selection rule may permit unsafe outcomes, whereas a guard excludes them for every profile.
\end{designrule}

\paragraph{Progress, deadlock, and boundedness.} A marking is a \emph{deadlock} if no event is enabled while unfinished work remains. An infinite execution that never completes the work is a \emph{livelock}. Outcome-dependent branching can cause deadlock through missing tokens, guards with no matching branch, resource starvation, or inconsistent certificates. Infinite executions require care. In a \emph{bounded} net, whose reachability set is finite, an infinite execution must revisit some marking. It then contains a cycle, whose count vector is a T-invariant (\cref{prop:invariants}(ii)). In an unbounded net this fails. A transition that consumes one token from $p$ and returns two has $C=[1]$ and no nonzero T-invariant, yet it fires forever through the markings $1,2,3,\dots$. For bounded nets, whether an execution repeats a cycle forever is a \emph{strategic} question: it depends on the probability that equilibrium play selects the output columns of that cycle. \Cref{sec:values} makes this quantitative. In particular, \emph{the existence of a Nash equilibrium implies neither deadlock freedom, nor eventual success, nor termination.}

\subsection{Strategic analysis of a game place}
\label{sec:strategic}

This subsection and the next concern sequential OCGPNs. By \cref{thm:consistency}, equilibrium play at a place $p$ is Nash play of the augmented game $\G_p[V]$, for the continuation values $V$ that the equilibrium generates. Here we study how the augmentation \eqref{eq:Qmatrix} reshapes incentives at $p$, for a \emph{given} value vector $V$ (\cref{fig:backward}).

\begin{figure}[ht]
\centering
\begin{tikzpicture}[x=1cm,y=1cm]
  \node[gplace,label={[glbl]left:$\G_P$}] (gp) at (0,0) {};
  \node[rtrans,label={[lbl]below:$r_P$}] (r) at (1.4,0) {};
  \node[place,label={[lbl]below:$q_P$}] (q) at (2.7,0) {};
  \coordinate (fan) at (3.45,0);
  \node[transition,label={[guard]above:$[\mathrm{ready}]$}] (ta) at (4.3,1.1) {};
  \node[transition,label={[guard]below:$[\mathrm{notReady}]$}] (tr) at (4.3,-1.1) {};
  \node[gplace,label={[glbl]above:$\G_A$}] (ga) at (5.6,1.1) {};
  \node[gplace,label={[glbl]below:$\G_R$}] (gr) at (5.6,-1.1) {};
  \node[draw=okcol!70!black,line width=0.7pt,fill=okcol!7,rounded corners=2pt,font=\footnotesize,inner sep=4pt] (va) at (8.7,1.1) {$V(\G_A)$: equilibrium value};
  \node[draw=badcol!70!black,line width=0.7pt,fill=badcol!5,rounded corners=2pt,font=\footnotesize,inner sep=4pt] (vr) at (8.7,-1.1) {$V(\G_R)$: equilibrium value};
  \foreach \u/\v in {gp/r,r/q,ta/ga,tr/gr} \draw[arc] (\u)--(\v);
  \draw[arc] (q.east)--(fan)|-(ta); \draw[arc] (q.east)--(fan)|-(tr);
  \draw[darc,draw=inkcol!40] (ga)--(va); \draw[darc,draw=inkcol!40] (gr)--(vr);
  \draw[->,line width=0.9pt,draw=okcol!65!black,dashed] (va.north) |- node[pos=0.75,above,lbl]{$+\gamma\,V(\G_A)$ if $\omega=\mathrm{ready}$} (0,2.35) -- (gp.north);
  \draw[->,line width=0.9pt,draw=badcol!80!black,dashed] (vr.south) |- node[pos=0.75,below,lbl]{$+\gamma\,V(\G_R)$ if $\omega=\mathrm{notReady}$} (0,-2.35) -- (gp.south);
  \node[cbox,font=\scriptsize,align=center,anchor=east] at (-0.9,1.3) {solve $\G_P$ with\\ $Q_i = R_i+\gamma V_i(\text{successor})$};
\end{tikzpicture}
\caption{Continuation values flow backward along routing paths. The successor games are solved first. Their equilibrium values are discounted and added to the payoffs at $\G_P$, weighted by the probability of being routed to each successor. This produces the augmented game $\G_P[V]$ of \cref{def:augmented}. The figure uses the planning game of running example~I (\cref{sec:re1}); $V(\G)$ denotes the equilibrium value of game $\G$. Players at $\G_P$ are, in effect, choosing which game they will play next.}
\label{fig:backward}
\end{figure}

\begin{lemma}[Affinity and Lipschitz continuity]\label{lem:affine}
$Q_i^p(V)$ is affine in $V_i$ and depends on no other player's values. Moreover, $\max_{a}|Q_i^p(V)(a)-Q_i^p(V')(a)|\le\gamma\|V_i-V_i'\|_\infty$. Adding a constant $c$ to every extended value $\overline V_i(q)$, $q\in P_G\cup Z$, adds $\gamma c$ to every entry of $Q_i^p$, so it changes no best reply.
\end{lemma}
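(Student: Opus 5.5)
The plan is to read all three claims directly off the second form of the augmented payoff in \eqref{eq:Qmatrix},
\[
Q_i^p(V)(a)=R_i^p(a)+\gamma\sum_{q\in P_G\cup Z}\Psi^p_q(a)\,\overline V_i(q).
\]
The two facts that drive everything are that the stage payoff $R_i^p$ and the routing probabilities $\Psi^p_q$ do not depend on $V$, and that $\Psi^p_q(a)\ge 0$ with $\sum_q\Psi^p_q(a)=1$ by \eqref{eq:Psi}.

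\textbf{Affinity.} For each fixed $a$, the map $V\mapsto Q_i^p(V)(a)$ is the constant $R_i^p(a)$ plus a linear combination of the coordinates $V_i(q)$, $q\in P_G$. The terminal terms $v_i(z)$, $z\in Z$, are also constants. Only $\overline V_i$ appears in the formula, so $Q_i^p$ does not depend on $V_j$ for any $j\neq i$. The multilinear extension to mixed profiles (\cref{def:nflocal}) is a fixed convex combination of these pure-profile entries, so it inherits affinity in $V_i$.

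\textbf{Lipschitz bound.} Subtracting the formula at $V$ and at $V'$ cancels $R_i^p$ and the terminal terms, since $\overline V_i$ and $\overline V_i'$ agree on $Z$. This leaves
\[
Q_i^p(V)(a)-Q_i^p(V')(a)=\gamma\sum_{q\in P_G}\Psi^p_q(a)\bigl(V_i(q)-V_i'(q)\bigr).
\]
Its absolute value is at most $\gamma\sum_{q\in P_G}\Psi^p_q(a)\,\|V_i-V_i'\|_\infty$. The routing weights are nonnegative and the weights on $P_G$ sum to at most one, so the bound is $\gamma\|V_i-V_i'\|_\infty$. Taking the maximum over $a$ gives the claimed inequality.

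\textbf{Constant shift.} Replace $\overline V_i(q)$ by $\overline V_i(q)+c$ for every $q\in P_G\cup Z$. The extra term is $\gamma c\sum_{q\in P_G\cup Z}\Psi^p_q(a)=\gamma c$, because the full row sum is one. So every pure entry of $Q_i^p$ shifts by $\gamma c$, and by multilinearity so does every mixed entry. Player $i$'s best-reply correspondence against any $\sigma_{-i,p}$ is unchanged.

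None of the three steps is a real obstacle. The only point needing care is to use the sum over $P_G$ alone, which is substochastic, in the Lipschitz bound, and the full stochastic row in the shift claim, so that the equality is exactly $\gamma c$.
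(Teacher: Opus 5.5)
Your proposal is correct and follows essentially the same route as the paper: read all three claims off the second form of \eqref{eq:Qmatrix}, using that $\overline V_i=\overline V_i'$ on $Z$ and that the routing probabilities $\Psi^p_q(a)$ sum to one. The only cosmetic difference is that you bound the Lipschitz term by the substochastic sum over $P_G$, whereas the paper sums over all of $P_G\cup Z$, where the $Z$-terms vanish; this is the same estimate.
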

\begin{proof}
Affinity is immediate from \eqref{eq:Qmatrix}. For each $a$, $|Q_i^p(V)(a)-Q_i^p(V')(a)|\le\gamma\sum_q\Psi^p_q(a)\,\|V_i-V_i'\|_\infty=\gamma\|V_i-V_i'\|_\infty$, because the routing probabilities sum to one and $\overline V_i=\overline V_i'$ on $Z$. Replacing $\overline V_i$ by $\overline V_i+c$ adds $\gamma c\sum_q\Psi^p_q(a)=\gamma c$ to every entry.
\end{proof}

Incentives depend on payoff \emph{differences} between a player's own actions. For player $i\in\mathcal{I}_p$, two of its actions $a_i^+,a_i^-$, and any function $h:A^p\to\mathbb{R}$, define the \emph{difference operator}
\begin{equation}\label{eq:Dop}
  D_{a_{-i}}(h) := h(a_i^+,a_{-i})-h(a_i^-,a_{-i}),\qquad a_{-i}\in A_{-i}^p:=\textstyle\prod_{k\in \mathcal{I}_p\setminus\{i\}}A_k^p .
\end{equation}
For the row player of a bimatrix game, $a_{-i}=j$ indexes a column and $D_j$ is a row difference. For the column player, $D$ is a column difference. Because $\sum_q D_{a_{-i}}(\Psi^p_q)=D_{a_{-i}}(1)=0$, only differences between continuation values can matter.

\begin{definition}[Successor pivotality]\label{def:pivotal}
Successor $q\in P_G\cup Z$ is \emph{pivotal} for player $i$ against $a_{-i}$ if $D_{a_{-i}}(\Psi^p_q)\neq0$, that is, if switching from $a_i^-$ to $a_i^+$ against $a_{-i}$ changes the probability of being routed to $q$. The interface at $p$ is \emph{non-pivotal for player $i$} if no successor is pivotal against any $a_{-i}$, for any pair of $i$'s actions.
\end{definition}

Pivotality is defined for successors, not outcomes. If two outcomes lead to the same successor, a player may shift probability between them, so that each $\Phi^p_\omega$ changes, without changing $\Psi^p_q$ or any continuation payoff. Such outcome-level changes are strategically irrelevant for continuation.

\begin{theorem}[Dominance through the interface]\label{thm:dominance}
For every value vector $V$, action $a_i^+$ strictly dominates $a_i^-$ for player $i$ in $\G_p[V]$ if and only if
\begin{equation}\label{eq:dominance}
  D_{a_{-i}}(R_i^p) + \gamma\sum_{q\in P_G\cup Z} \overline V_i(q)\, D_{a_{-i}}(\Psi^p_q) \;>\; 0\qquad\text{for every } a_{-i}\in A^p_{-i} .
\end{equation}
In particular, if no successor is pivotal for $i$ against some $a_{-i}$, then the comparison between $a_i^+$ and $a_i^-$ against $a_{-i}$ is the same as in the stage game, whatever the continuation values.
\end{theorem}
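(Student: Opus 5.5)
The plan is to compute the payoff difference in the augmented game directly from \eqref{eq:Qmatrix} and then read off both claims. The first step is to fix the convention: ``$a_i^+$ strictly dominates $a_i^-$ in $\G_p[V]$'' means $Q_i^p(V)(a_i^+,a_{-i})>Q_i^p(V)(a_i^-,a_{-i})$ for every pure profile $a_{-i}\in A^p_{-i}$. For mixed opponent profiles the same strict inequality follows, because $Q_i^p(V)(\cdot,\sigma_{-i})$ is the multilinear extension of \cref{def:nflocal}, that is, a convex combination of the pure-profile values. So it suffices to work with pure $a_{-i}$, which is exactly how \eqref{eq:dominance} is quantified.

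The second step applies the difference operator \eqref{eq:Dop} to \eqref{eq:Qmatrix}. For fixed $V$, the quantities $\gamma$ and $\overline V_i(q)$ do not depend on $a$, and $D_{a_{-i}}$ is linear on functions $A^p\to\mathbb{R}$. Hence
\[
D_{a_{-i}}\big(Q_i^p(V)\big)=D_{a_{-i}}(R_i^p)+\gamma\sum_{q\in P_G\cup Z}\overline V_i(q)\,D_{a_{-i}}(\Psi^p_q).
\]
The sum is finite, since $P_G$ and $Z$ are finite under \cref{ass:finite}. Strict dominance is, by definition, the statement that the left side is positive for every $a_{-i}$. This gives the ``if and only if'' for each fixed $V$, and so for every $V$.

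For the ``in particular'' clause, suppose no successor is pivotal against a given $a_{-i}$. By \cref{def:pivotal}, $D_{a_{-i}}(\Psi^p_q)=0$ for all $q$, so the displayed identity reduces to $D_{a_{-i}}(Q_i^p(V))=D_{a_{-i}}(R_i^p)$ for every $V$. The sign of the comparison, whether $>$, $=$ or $<$, is therefore the same as in the stage game, whatever the continuation values.

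I do not expect a real obstacle. The only care needed is to state the convention for strict dominance, against pure versus mixed opponent play, so that the equivalence is exact. One might also remark, using $\sum_q D_{a_{-i}}(\Psi^p_q)=0$, that the criterion depends on $\overline V_i$ only through its differences, consistent with \cref{lem:affine}.
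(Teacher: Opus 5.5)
Your proposal is correct and matches the paper's proof: apply the linear operator $D_{a_{-i}}$ to the second form of \eqref{eq:Qmatrix} to read off the equivalence, and observe that the sum vanishes when no successor is pivotal. Your remark that strict dominance against pure $a_{-i}$ extends to mixed opponent play by multilinearity is a small clarification that the paper leaves implicit.
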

\begin{proof}
Strict dominance means $D_{a_{-i}}(Q_i^p(V))>0$ for every $a_{-i}$. Since $D_{a_{-i}}$ is linear, apply it to the second form of \eqref{eq:Qmatrix}. If $D_{a_{-i}}(\Psi^p_q)=0$ for all $q$, the sum vanishes and $D_{a_{-i}}(Q_i^p)=D_{a_{-i}}(R_i^p)$.
\end{proof}

\begin{corollary}[Best-reply thresholds for exogenous values]\label{cor:thresholds}
Hold the continuation values $\overline V_i$ fixed as $\gamma$ varies. Then $D_{a_{-i}}(Q_i^p)$ is affine in $\gamma$. So for each $a_{-i}$ there is at most one threshold at which player $i$'s comparison of $a_i^+$ and $a_i^-$ against $a_{-i}$ switches, namely
\begin{equation}
  \gamma^\star_{a_{-i}}=-\frac{D_{a_{-i}}(R_i^p)}{\sum_q \overline V_i(q)\,D_{a_{-i}}(\Psi^p_q)}
\end{equation}
when the denominator is nonzero. When it is zero, no discount factor changes that comparison.
\end{corollary}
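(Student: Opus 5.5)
The plan is to start from the identity already used in the proof of \cref{thm:dominance}: apply the linear difference operator $D_{a_{-i}}$ to the second form of \eqref{eq:Qmatrix}. Holding the extended values $\overline V_i(q)$ fixed (they are treated as exogenous, so they do not move with $\gamma$), this gives
\[
  D_{a_{-i}}\big(Q_i^p\big)(\gamma) \;=\; \underbrace{D_{a_{-i}}(R_i^p)}_{=:~\alpha} \;+\; \gamma\,\underbrace{\textstyle\sum_{q\in P_G\cup Z}\overline V_i(q)\,D_{a_{-i}}(\Psi^p_q)}_{=:~\beta}.
\]
Neither $\alpha$ nor $\beta$ depends on $\gamma$, because $R_i^p$ and $\Psi^p_q$ are primitives of the local game and the interface. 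So $\gamma\mapsto D_{a_{-i}}(Q_i^p)$ is affine, which is the first claim.

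Next I would read off the switching points. The comparison of $a_i^+$ with $a_i^-$ against $a_{-i}$ is determined by the sign of $\alpha+\gamma\beta$. If $\beta\neq0$, this affine function of $\gamma$ is strictly monotone. It therefore has exactly one root, $\gamma^\star_{a_{-i}}=-\alpha/\beta$, and its sign changes only there; the root may or may not fall inside the admissible range of $\gamma$. If $\beta=0$, the function is constant and equal to $\alpha$, so no discount factor changes the comparison, and it coincides with the stage-game comparison, as in the last sentence of \cref{thm:dominance}.

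The only delicate point is interpretive rather than technical: the values must be held fixed as $\gamma$ varies. Equilibrium values $V^\sigma$ from \cref{lem:policy} depend on $\gamma$ through $(I-\gamma P^\sigma)^{-1}$, and with them affinity would fail. I would state explicitly that the corollary concerns exogenous $\overline V_i$. I would also note that \cref{lem:affine} shows that shifting $\overline V_i$ by a constant leaves $\beta$ unchanged, since $\sum_q D_{a_{-i}}(\Psi^p_q)=0$. Hence the threshold depends only on value differences.
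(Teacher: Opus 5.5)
Your proposal is correct and is essentially the paper's argument. The corollary is read off from the left side of \eqref{eq:dominance} in \cref{thm:dominance}, which is affine in $\gamma$ once $\overline V_i$ is held fixed, with the root $-D_{a_{-i}}(R_i^p)/\sum_q \overline V_i(q)\,D_{a_{-i}}(\Psi^p_q)$ and the constant zero-denominator case exactly as you describe. Your caveat, that equilibrium values depend on $\gamma$ and break affinity, is precisely what the paper records in \cref{rem:endogenous}.
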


\begin{remark}[Endogenous values]\label{rem:endogenous}
\Cref{cor:thresholds} treats continuation values as exogenous. That is exact when the successors are terminal or their values do not depend on $\gamma$, as in variant A of running example~I (\cref{sec:re1}). In a genuine dynamic game, $\overline V_i=\overline V_i^\star(\gamma)$, and the comparison $D_{a_{-i}}(R_i^p)+\gamma\sum_q\overline V^\star_i(q;\gamma)D_{a_{-i}}(\Psi^p_q)$ need not be affine in $\gamma$. It may then have several switching points, which must be found by solving the global game for each $\gamma$, or by parametric analysis of the fixed point.
\end{remark}

\begin{corollary}[Global consistency of myopic play]\label{cor:myopic}
If the interface at $p$ is non-pivotal for every player of $p$, then $\G_p[V]$ and the stage game $\G_p$ have the same best-reply correspondences for every $V$. If this holds at every game place, then every profile of myopic local equilibria is a global equilibrium. Conversely, successor pivotality at $p$ is necessary for continuation values to change any best reply at $p$.
\end{corollary}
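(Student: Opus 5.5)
The plan is to read all three claims off the linear identity in the proof of \cref{thm:dominance}. For any $V$, player $i\in\mathcal{I}_p$, any pair of own actions $a_i^+,a_i^-$, and any $a_{-i}$, linearity of $D_{a_{-i}}$ applied to \eqref{eq:Qmatrix} gives
\[
D_{a_{-i}}\big(Q_i^p(V)\big)=D_{a_{-i}}(R_i^p)+\gamma\sum_{q}\overline V_i(q)\,D_{a_{-i}}(\Psi^p_q).
\]
Non-pivotality for $i$ means that $D_{a_{-i}}(\Psi^p_q)=0$ for every successor $q$, every $a_{-i}$, and every pair of actions. So the sum vanishes identically, and $Q_i^p(V)(a_i,a_{-i})-R_i^p(a_i,a_{-i})$ does not depend on $a_i$. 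It equals a function $c_i(a_{-i})$ only.

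\textbf{Step 1: same best replies.} Extend $c_i$ multilinearly to mixed profiles, so that $Q_i^p(V)(\alpha,\sigma_{-i,p})=R_i^p(\alpha,\sigma_{-i,p})+c_i(\sigma_{-i,p})$ for every mixed $\alpha\in\Delta(A_i^p)$. The added term does not depend on $\alpha$. Hence the maximizers over $\alpha$ coincide in $\G_p[V]$ and $\G_p$ for every opponent profile, and the best-reply correspondences agree. In particular $\mathcal{E}_p(V)$ equals the set of myopic local equilibria at $p$, for every $V$.

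\textbf{Step 2: global consistency.} Suppose non-pivotality holds at every place, and let $\sigma$ be a profile of myopic local equilibria. By Step 1, applied with $V=V^\sigma$ (which is well defined by \cref{lem:policy}), $\sigma_p\in\mathcal{E}_p(V^\sigma)$ for every $p$. \Cref{thm:consistency} then makes $\sigma$ a global equilibrium. Players not in $\mathcal{I}_p$ have only the no-op action at $p$, so they impose no condition there.

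\textbf{Step 3: the converse.} This is the contrapositive of Step 1, taken player by player. If no successor is pivotal for $i$ at $p$, then $i$'s best replies at $p$ are unchanged for every $V$. So if some $V$ changes a best reply of some player $i$ at $p$, some successor must be pivotal for that $i$, against some $a_{-i}$ and some pair of actions.

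The only delicate point is Step 1. Pivotality is defined through pure differences, while best replies are taken over mixed strategies. The key is that independence of $a_i$ holds entrywise, for every $a_{-i}$, so it survives the multilinear extension. I would check this carefully. The rest is bookkeeping.
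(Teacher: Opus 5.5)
Your proposal is correct and follows the paper's proof essentially step for step. Non-pivotality makes $Q_i^p(V)-R_i^p$ independent of $a_i$, so best replies coincide against every mixed $\sigma_{-i,p}$; global consistency then follows from \cref{thm:consistency} at $V^\sigma$; and the converse is the contrapositive. Your explicit multilinear-extension check and your remark about the no-op players only spell out details that the paper leaves implicit.
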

\begin{proof}
If $D_{a_{-i}}(\Psi^p_q)=0$ for all $q$, all $a_{-i}$, and all action pairs of $i$, then $Q_i^p(V)(a)-R_i^p(a)$ does not depend on $a_i$. So against any mixed $\sigma_{-i,p}$, the two payoff functions differ by a constant in $a_i$, and their best replies coincide. Global consistency then follows from \cref{thm:consistency}. The converse is the contrapositive.
\end{proof}

\begin{remark}[Interpretation for interface design]
\Cref{thm:dominance} turns interface design into a linear feasibility problem. To make $a_i^+$ dominant, choose the routing probabilities $\Psi^p_q$ and the successor values so that the left side of \eqref{eq:dominance} is positive for every $a_{-i}$. A necessary condition is that some successor is pivotal against every $a_{-i}$ with $D_{a_{-i}}(R_i^p)\le0$.
\end{remark}

\paragraph{Solving a local game.} In the zero-sum two-player case ($Q_2^p=-Q_1^p=:-Q$), the value and optimal strategies at $p$ come from one linear program:
\begin{equation}\label{eq:lp}
  \mathrm{val}(Q)=\max_{\alpha,\bar v}\ \bar v\quad\text{s.t.}\quad Q^\top \alpha\ge \bar v\,\mathbf{1},\ \ \mathbf{1}^\top \alpha=1,\ \ \alpha\ge0,
\end{equation}
where $\mathbf{1}$ is the all-ones vector, and its dual gives the column player's strategy. For general-sum bimatrix games, equilibria can be computed by support enumeration or by the Lemke--Howson algorithm. Since equilibria need not be unique, the solution layer must then include a selection rule.

\subsection{Global values and computation}
\label{sec:values}

\Cref{lem:policy} evaluates a \emph{fixed} stationary profile. This subsection computes \emph{equilibrium} values. In the zero-sum case, \cref{thm:consistency} reduces the global game to a fixed-point problem on the place graph, with one matrix game per place.

\begin{theorem}[Place-level Shapley operator]\label{thm:contraction}
Consider the zero-sum case with a common pair of players at every place and $\gamma<1$. Write $V\in\mathbb{R}^{P_G}$ for player~1's values (player~2's are $-V$), and define $\mathcal{B}:\mathbb{R}^{P_G}\to\mathbb{R}^{P_G}$ by $(\mathcal{B}V)(p)=\mathrm{val}\big(Q_1^p(V)\big)$. Then:
\begin{enumerate}[label=(\roman*),nosep]
  \item $\mathcal{B}$ is a $\gamma$-contraction in $\|\cdot\|_\infty$, with a unique fixed point $V^\star$;
  \item $V^\star(p)$ is the value of the global game $\Gamma(\N)$ started at $p$;
  \item a stationary profile $\sigma^\star$ is a global equilibrium if and only if, at every $p$, $\sigma^\star_p$ is a pair of optimal strategies of the matrix game $Q_1^p(V^\star)$;
  \item value iteration $V^{(k+1)}=\mathcal{B}V^{(k)}$ converges geometrically, and every iterate carries the computable error bound
  \begin{equation}\label{eq:certificate}
    \|V^{(k)}-V^\star\|_\infty\ \le\ \frac{\|\mathcal{B}V^{(k)}-V^{(k)}\|_\infty}{1-\gamma}.
  \end{equation}
\end{enumerate}
\end{theorem}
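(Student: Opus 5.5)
The plan follows Shapley's classical argument, transplanted to the place graph. The matrix games $Q_1^p(V)$ come from \cref{def:augmented}. \cref{lem:affine} and \cref{thm:consistency} carry most of the weight.

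For (i), I would use the fact that the value of a finite matrix game is monotone and $1$-Lipschitz in the sup norm of its entries. Concretely, if $|Q(a)-Q'(a)|\le\varepsilon$ for all $a$, then $|\mathrm{val}(Q)-\mathrm{val}(Q')|\le\varepsilon$. This holds because for any fixed $\alpha$, the inner minimum $\min_\beta \alpha^\top Q\beta$ moves by at most $\varepsilon$, and taking the max over $\alpha$ preserves the bound. Combining this with the Lipschitz estimate of \cref{lem:affine}, namely $\max_a|Q_1^p(V)(a)-Q_1^p(V')(a)|\le\gamma\|V-V'\|_\infty$, gives $|(\mathcal{B}V)(p)-(\mathcal{B}V')(p)|\le\gamma\|V-V'\|_\infty$ for every $p$. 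So $\mathcal{B}$ is a $\gamma$-contraction on the complete space $(\mathbb{R}^{P_G},\|\cdot\|_\infty)$, and Banach's fixed-point theorem yields the unique fixed point $V^\star$.

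For (ii) and (iii), I would first construct an equilibrium from $V^\star$. At each $p$, take a pair $\sigma^\star_p$ of optimal strategies of $Q_1^p(V^\star)$, which exist by the minimax theorem, for instance via the LP \eqref{eq:lp}. Then $Q_1^p(V^\star)(\sigma^\star_p)=V^\star(p)$. Reading this identity row by row gives $V^\star=\mathbf{r}_1^{\sigma^\star}+\gamma P^{\sigma^\star}V^\star$, so \cref{lem:policy} identifies $V^\star=V_1^{\sigma^\star}$, and player~2's value is $-V^\star$. Optimality of $\sigma^\star_p$ in the zero-sum matrix game is exactly the Nash condition for $\mathcal{G}_p[V^{\sigma^\star}]$. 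Hence \cref{thm:consistency} shows that $\sigma^\star$ is a global equilibrium.

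The saddle-point property over all behavior strategies then gives that $V^\star(p)$ is the value of $\Gamma(\mathcal{N})$ from $p$. Player~1 guarantees at least $V^\star(p)$ by playing $\sigma^\star_1$ against any behavior deviation of player~2, and symmetrically player~2 holds player~1 to at most $V^\star(p)$. This is (ii).

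For the "only if" direction of (iii), suppose $\sigma$ is any global equilibrium. By \cref{thm:consistency}, each $\sigma_p$ is a Nash equilibrium, and hence an optimal pair, of $Q_1^p(V^\sigma)$. Therefore $V^\sigma(p)=\mathrm{val}(Q_1^p(V^\sigma))$, so $V^\sigma$ is a fixed point of $\mathcal{B}$. By uniqueness, $V^\sigma=V^\star$, and $\sigma_p$ is optimal for $Q_1^p(V^\star)$. For the "if" direction, if each $\sigma_p$ is optimal for $Q_1^p(V^\star)$, the construction above applies verbatim to $\sigma$.

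For (iv), geometric convergence $\|V^{(k)}-V^\star\|_\infty\le\gamma^k\|V^{(0)}-V^\star\|_\infty$ is immediate from the contraction property. The bound \eqref{eq:certificate} follows from the triangle inequality:
\[
\|V^{(k)}-V^\star\|_\infty\le\|V^{(k)}-\mathcal{B}V^{(k)}\|_\infty+\|\mathcal{B}V^{(k)}-\mathcal{B}V^\star\|_\infty\le\|\mathcal{B}V^{(k)}-V^{(k)}\|_\infty+\gamma\|V^{(k)}-V^\star\|_\infty ,
\]
and rearranging gives \eqref{eq:certificate}.

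I expect the main care to be needed in (ii), specifically in justifying that the stationary saddle point is a value against arbitrary history-dependent behavior strategies. I would handle this exactly as in the ($\Leftarrow$) step of \cref{thm:consistency}. Once the opponent's stationary strategy is fixed, the deviating player faces a finite discounted MDP, whose optimum over behavior strategies is attained by the Bellman solution, which is $V^\star$ here~\cite{puterman1994}.
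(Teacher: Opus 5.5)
Your proposal is correct and follows the paper's proof essentially step for step. Part (i) uses nonexpansiveness of the matrix-game value, \cref{lem:affine}, and Banach; part (iii) uses \cref{lem:policy} and \cref{thm:consistency} in both directions; part (iv) uses the same triangle-inequality rearrangement. Your explicit saddle-point argument over behavior strategies for (ii) spells out what the paper compresses into the remark that all equilibria yield the value and that this is Shapley's theorem with game places as states.
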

\begin{proof}
(i) The value of a matrix game is nonexpansive in the max norm: if $\max_a|Q(a)-Q'(a)|\le\epsilon$, then $Q(\alpha_1,\alpha_2)$ and $Q'(\alpha_1,\alpha_2)$ differ by at most $\epsilon$ for all mixed $\alpha_1,\alpha_2$, so their max--min values do too. Combined with \cref{lem:affine}, this gives $|(\mathcal{B}V)(p)-(\mathcal{B}V')(p)|\le\gamma\|V-V'\|_\infty$, and Banach's fixed-point theorem applies. (ii)--(iii) In a zero-sum matrix game, Nash equilibria are exactly pairs of optimal strategies, and they all yield the value. If $\sigma^\star$ is a global equilibrium, then by \cref{thm:consistency}, $V^{\sigma^\star}(p)=\mathrm{val}(Q_1^p(V^{\sigma^\star}))$ for all $p$, so $V^{\sigma^\star}$ is a fixed point of $\mathcal{B}$ and equals $V^\star$. Conversely, if $\sigma^\star_p$ is optimal in $Q_1^p(V^\star)$ at every $p$, then $V^\star$ solves the evaluation equation of \cref{lem:policy} for $\sigma^\star$, so $V^{\sigma^\star}=V^\star$, and \cref{thm:consistency} applies. This is Shapley's theorem~\cite{shapley1953} with game places as states. (iv) $\|V^{(k)}-V^\star\|\le\|V^{(k)}-\mathcal{B}V^{(k)}\|+\|\mathcal{B}V^{(k)}-\mathcal{B}V^\star\|\le\|\mathcal{B}V^{(k)}-V^{(k)}\|+\gamma\|V^{(k)}-V^\star\|$.
\end{proof}

\begin{corollary}[Cost of a coarse summary]\label{cor:coarse}
Suppose some game places use summary matrices, and let $\widehat{\mathcal{B}}$ be the resulting operator, also a $\gamma$-contraction. Define the \emph{local mismatch} $\varepsilon_p:=|(\widehat{\mathcal{B}}V^\star)(p)-(\mathcal{B}V^\star)(p)|$, which is zero at places that keep their exact matrices. Then the fixed point $\widehat V$ of $\widehat{\mathcal{B}}$ satisfies $\|\widehat V-V^\star\|_\infty\le\max_p\varepsilon_p/(1-\gamma)$.
\end{corollary}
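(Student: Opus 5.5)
The plan is the standard perturbation bound for contractions, applied with the two operators $\mathcal{B}$ and $\widehat{\mathcal{B}}$ of \cref{thm:contraction}. First I check that $\widehat V$ exists and is unique. The summary operator $\widehat{\mathcal{B}}$ is a $\gamma$-contraction in $\|\cdot\|_\infty$ by hypothesis. This is also immediate from the proof of \cref{thm:contraction}(i): each summary matrix is again affine in $V$ with the $\gamma$-Lipschitz property of \cref{lem:affine}, and the value of a matrix game is nonexpansive. Banach's theorem therefore gives a unique fixed point $\widehat V=\widehat{\mathcal{B}}\widehat V$, alongside $V^\star=\mathcal{B}V^\star$.

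Next I insert the intermediate point $\widehat{\mathcal{B}}V^\star$ and use the triangle inequality:
\begin{equation*}
  \|\widehat V-V^\star\|_\infty=\|\widehat{\mathcal{B}}\widehat V-\mathcal{B}V^\star\|_\infty
  \le\|\widehat{\mathcal{B}}\widehat V-\widehat{\mathcal{B}}V^\star\|_\infty+\|\widehat{\mathcal{B}}V^\star-\mathcal{B}V^\star\|_\infty
  \le\gamma\|\widehat V-V^\star\|_\infty+\max_p\varepsilon_p .
\end{equation*}
The last term is exactly the sup over places of the local mismatch $\varepsilon_p$. Since $\gamma<1$ and both vectors are finite, I rearrange to obtain $(1-\gamma)\|\widehat V-V^\star\|_\infty\le\max_p\varepsilon_p$, which is the claim.

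Finally I record why $\varepsilon_p=0$ at exact places. There the two operators use the same matrix $Q_1^p(\cdot)$, so they agree pointwise at every $V$, in particular at $V^\star$. The only step needing care is that the mismatch is evaluated at the \emph{true} fixed point $V^\star$, not at $\widehat V$. Evaluating at $V^\star$ is what makes the contraction of $\widehat{\mathcal{B}}$, rather than that of $\mathcal{B}$, the property used in the first term. This choice should be kept as above; no real obstacle is expected.
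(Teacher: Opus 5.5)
Your proposal is correct and follows the paper's proof exactly: insert $\widehat{\mathcal{B}}V^\star$, apply the triangle inequality, use the $\gamma$-contraction of $\widehat{\mathcal{B}}$ on the first term and the local mismatches on the second, then rearrange. Your extra remarks on the existence of $\widehat V$ and on why $\varepsilon_p=0$ at exact places are harmless additions to the same argument.
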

\begin{proof}
$\|\widehat V-V^\star\|=\|\widehat{\mathcal{B}}\widehat V-\mathcal{B}V^\star\|\le\|\widehat{\mathcal{B}}\widehat V-\widehat{\mathcal{B}}V^\star\|+\|\widehat{\mathcal{B}}V^\star-\mathcal{B}V^\star\|\le\gamma\|\widehat V-V^\star\|+\max_p\varepsilon_p$. Rearrange. This is the multi-resolution bound of~\cite{yang2025general}.
\end{proof}

\begin{remark}[General-sum computation]\label{rem:generalsum}
Without the zero-sum structure, there is no contraction, and global equilibria can be multiple. \Cref{thm:consistency} still reduces the problem to place-level conditions. On an acyclic place graph, backward induction with a selection rule at each place constructs a global equilibrium (\cref{thm:backward}). On a cyclic graph, the components identified below can still be solved one at a time. Within a component, a fixed point must be found by a homotopy, learning, or a selection-guided iteration, and convergence is no longer guaranteed.
\end{remark}

\begin{proposition}[Backward induction as a finite Neumann series]\label{prop:nilpotent}
If the place graph $\mathcal{D}$ is acyclic, then ordering $P_G$ topologically makes $P^\sigma$ strictly upper triangular, hence nilpotent. The series \eqref{eq:neumann} then has at most $|P_G|$ nonzero terms, and solving \eqref{eq:neumann} by back-substitution is exactly backward induction. The procedure terminates even when $\gamma=1$.
\end{proposition}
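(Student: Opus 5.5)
The plan is to prove the three claims in order: triangularity, nilpotency with the finite series, and the identification of back-substitution with backward induction.

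\emph{Triangularity.} Since $\mathcal{D}$ is acyclic, it admits a topological order $p_1,\dots,p_n$ of $P_G$, $n=|P_G|$, with every edge $p_j\to p_k$ satisfying $j<k$. Fix any stationary $\sigma$. By \eqref{eq:routing}, $P^\sigma_{p_jp_k}=\Psi^{p_j}_{p_k}(\sigma_j)$, which is a multilinear combination of the values $\Psi^{p_j}_{p_k}(a)$. If $(j,k)$ is not an edge of $\mathcal{D}$, then $\Psi^{p_j}_{p_k}\equiv0$ by \cref{def:successor}, so this entry vanishes for every $\sigma$. In particular, the entry vanishes whenever $k\le j$. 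The case $k=j$ is included, because a self-loop $p\to p$ would be a cycle. Hence $P^\sigma$ is strictly upper triangular in this order, uniformly in $\sigma$.

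\emph{Nilpotency and the finite series.} A strictly upper triangular $n\times n$ matrix $N$ satisfies $(N^k)_{jl}=0$ unless $l\ge j+k$. One way to see this is by induction on $k$; another is that $(N^k)_{jl}$ sums over length-$k$ paths $j\to\cdots\to l$ in $\mathcal{D}$, and such paths strictly increase the index. Therefore $(P^\sigma)^n=0$, and $\sum_{k\ge0}\gamma^k(P^\sigma)^k$ has at most $n$ nonzero terms, $k=0,\dots,n-1$, for every $\gamma$, including $\gamma=1$. The finite sum is an inverse of $I-\gamma P^\sigma$ by the telescoping identity $(I-\gamma P^\sigma)\sum_{k<n}\gamma^k(P^\sigma)^k=I-\gamma^n(P^\sigma)^n=I$. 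Thus $V_i^\sigma=(I-\gamma P^\sigma)^{-1}\mathbf r_i^\sigma$ is well defined for $\gamma=1$. This closes the case $\gamma=1$ left open in \cref{lem:policy}. It remains to check that this solution equals $U_i(\sigma;\cdot)$. Acyclicity forces $\tau\le n$ almost surely, since each stage moves strictly forward in the order or exits to $Z$. Hence \eqref{eq:globalpayoff} is a finite expectation, and the first-stage conditioning of \cref{lem:policy} goes through with no convergence issue.

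\emph{Back-substitution equals backward induction.} Write $V=\mathbf r_i^\sigma+\gamma P^\sigma V$ row by row in the topological order:
\[
V(p_j)=\mathbf r_i^\sigma(p_j)+\gamma\sum_{k>j}\Psi^{p_j}_{p_k}(\sigma_j)\,V(p_k).
\]
Substituting the definition of $\mathbf r_i^\sigma$ shows this equals $Q_i^{p_j}(V)(\sigma_j)$ from \eqref{eq:Qmatrix}. Solving from $j=n$ down to $j=1$ is back-substitution. Each step uses only values already computed at successors together with the terminal values $v_i$, which is exactly backward induction on the place graph. The procedure takes $n$ steps and terminates whatever the value of $\gamma$.

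No step is a real obstacle. The only point needing care is the claim that absence of an edge forces the entry to vanish for every mixed $\sigma_j$, which follows from multilinearity, and the exclusion of self-loops from acyclicity.
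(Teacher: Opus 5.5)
Your proof is correct and follows the paper's argument. A topological order makes $P^\sigma$ strictly upper triangular, so $(P^\sigma)^{|P_G|}=0$, and back-substitution on $(I-\gamma P^\sigma)V=\mathbf{r}^\sigma$ starting from the last place is exactly backward induction. You replace the paper's remark that $I-\gamma P^\sigma$ is unit triangular with the telescoping identity, and your observation that $\tau\le|P_G|$ almost surely makes explicit why the solution equals $U_i(\sigma;\cdot)$ when $\gamma=1$.
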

\begin{proof}
Every edge goes from an earlier place to a later one, so $P^\sigma_{pq}\neq0$ only if $p$ precedes $q$. A strictly upper-triangular $|P_G|\times|P_G|$ matrix $N$ satisfies $N^{|P_G|}=0$. Back-substitution on the triangular system $(I-\gamma P^\sigma)V=\mathbf{r}^\sigma$ solves the last place first, which is the backward-induction order. Since $I-\gamma P^\sigma$ is unit triangular, it is invertible for every $\gamma$, including $\gamma=1$.
\end{proof}

\begin{theorem}[Backward induction on acyclic sequential OCGPNs]\label{thm:backward}
Let the place graph $\mathcal{D}$ be acyclic, and order $P_G$ topologically. Process the game places in reverse order. At each place $p$, the successors of $p$ have already been processed, so their extended values $\overline V(q)$ are fixed. Select any $\sigma_p\in\mathcal{E}_p(V)$ of the resulting augmented game $\G_p[V]$, which exists by Nash's theorem, and set $V_i(p):=Q_i^p(V)(\sigma_p)$. The resulting stationary profile $\sigma$ is a global equilibrium of $\Gamma(\N)$, for every $\gamma\in(0,1]$ and in the general-sum case. It is moreover subgame perfect: from every history, the continuation profile is a Nash equilibrium of the continuation game.
\end{theorem}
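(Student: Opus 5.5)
The plan is to reduce the claim to \cref{thm:consistency}, with a separate argument for the case $\gamma=1$, and then upgrade the conclusion to subgame perfection.

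\textbf{Step 1: the construction is well defined.} At each place $p$, all successors $q$ with $\Psi^p_q\not\equiv0$ lie strictly later in the topological order or in $Z$. Hence $\overline V(q)$ is already fixed when $p$ is processed. $\G_p[V]$ is then a finite normal-form game, so Nash's theorem supplies $\sigma_p\in\mathcal{E}_p(V)$.

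\textbf{Step 2: the constructed numbers are the values of $\sigma$.} The constructed vector satisfies $V_i(p)=Q_i^p(V)(\sigma_p)$ at every $p$. Expanding \eqref{eq:Qmatrix} with \eqref{eq:routing}, this reads $V_i(p)=\mathbf{r}_i^\sigma(p)+\gamma(P^\sigma V_i)(p)$. So $V_i$ solves the evaluation equation of \cref{lem:policy}. By \cref{prop:nilpotent}, $I-\gamma P^\sigma$ is unit triangular and invertible for every $\gamma\in(0,1]$. Therefore $V_i=V_i^\sigma$, including the case $\gamma=1$.

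\textbf{Step 3: global equilibrium.} Each $\sigma_p$ lies in $\mathcal{E}_p(V^\sigma)$, so for $\gamma<1$ the result follows directly from \cref{thm:consistency}.

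For $\gamma=1$, I would not rely on the contraction in that proof. Instead I would argue by induction in reverse topological order that $V_i^\sigma(p)$ is player $i$'s optimal payoff from $p$ against $\sigma_{-i}$, over all behavior strategies.
\begin{itemize}[nosep]
\item Because the graph is acyclic, every play from $p$ reaches $Z$ within at most $|P_G|$ stages. Payoffs are therefore finite sums, and the optimal value $W^\star$ satisfies the finite Bellman recursion $W^\star(p)=\max_\alpha Q_i^p(W^\star)(\alpha,\sigma_{-i,p})$.
\item Conditioning on the first action and outcome, any deviation's continuation from each successor $q$ is a behavior strategy from $q$. By the induction hypothesis it earns at most $V_i^\sigma(q)$.
\item So the deviation earns at most $\max_\alpha Q_i^p(V^\sigma)(\alpha,\sigma_{-i,p})$. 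This equals $V_i^\sigma(p)$ because $\sigma_p$ is a best reply in $\G_p[V]$.
\end{itemize}

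\textbf{Step 4: subgame perfection.} Fix any history ending at the active place $p_\ell$. Because $\sigma$ is stationary and payoffs from stage $\ell$ on depend only on the future path, the continuation game is $\Gamma(\N)$ started at $p_\ell$. The common factor $\gamma^\ell$ does not change incentives, and past rewards are sunk. The continuation profile is $\sigma$ itself, which is an equilibrium from every starting place by Step 3. If the history ends at a terminal place, the continuation game is trivial.

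I expect the main obstacle to be the $\gamma=1$ case. There, \cref{thm:consistency}'s proof invokes a $\gamma$-contraction that is unavailable. The fix is the finite-horizon induction in Step 3, which uses termination within $|P_G|$ stages on an acyclic graph so that all behavior-strategy payoffs are well defined and bounded. A minor point to check is the history-dependence of deviations. It is handled by conditioning on the first stage and applying the induction hypothesis to each continuation strategy separately.
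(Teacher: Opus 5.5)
Your proposal is correct and follows the paper's proof. Like the paper, you identify the constructed numbers with $V^\sigma$ via \cref{lem:policy} and \cref{prop:nilpotent}, invoke \cref{thm:consistency}, and obtain subgame perfection because, by stationarity, the continuation game after any history is $\Gamma(\N)$ started at the current place. Your separate induction for $\gamma=1$ is sound but not needed: the proof of \cref{thm:consistency} already covers the acyclic case by a finite backward recursion instead of the contraction, and your Step~3 spells out that clause.
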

\begin{proof}
By construction and \cref{prop:nilpotent}, the values $V$ computed in reverse topological order solve the evaluation equation of \cref{lem:policy} for $\sigma$, so $V=V^\sigma$. Each $\sigma_p$ is a Nash equilibrium of $\G_p[V^\sigma]$, so $\sigma$ is a global equilibrium by \cref{thm:consistency}. For subgame perfection, the continuation game after any history ending at place $p_\ell$ is the global game started at $p_\ell$, because stage payoffs and transitions depend only on the current place. Since $\sigma$ is stationary and no player gains by deviating from any place, its continuation is a Nash equilibrium there.
\end{proof}

With rework loops, $\mathcal{D}$ has cycles, but only inside its strongly connected components (SCCs). For a component $\mathcal{K}$ and fixed values $V_{\mathrm{out}}$ on the places downstream of $\mathcal{K}$, define the \emph{restricted operator} $\mathcal{B}_{\mathcal{K}}(\cdot\,;V_{\mathrm{out}}):\mathbb{R}^{\mathcal{K}}\to\mathbb{R}^{\mathcal{K}}$ by $\big(\mathcal{B}_{\mathcal{K}}(W;V_{\mathrm{out}})\big)(p)=\mathrm{val}\big(Q_1^p(W\oplus V_{\mathrm{out}})\big)$ for $p\in\mathcal{K}$, where $W\oplus V_{\mathrm{out}}$ agrees with $W$ on $\mathcal{K}$ and with $V_{\mathrm{out}}$ downstream. This yields an exact decomposition.

\begin{theorem}[SCC decomposition]\label{thm:scc}
Let $\mathcal{K}_1,\dots,\mathcal{K}_K$ be the SCCs of $\mathcal{D}$, listed in reverse topological order of the condensation of $\mathcal{D}$, the acyclic graph obtained by contracting each SCC to a single vertex (downstream components first). Then $V^\star|_{\mathcal{K}_k}$ is the unique fixed point of $\mathcal{B}_{\mathcal{K}_k}\big(\cdot\,;V^\star|_{\mathcal{K}_1\cup\dots\cup\mathcal{K}_{k-1}}\big)$. Each component can therefore be solved once, in this order. Singleton components without self-loops need a single local solve.
\end{theorem}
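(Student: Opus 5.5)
The plan is an induction over the components in the stated order, using three facts: the global fixed point $V^\star$ of \cref{thm:contraction}, the contraction estimate of \cref{lem:affine}, and the locality of $Q_1^p$. For $p\in\mathcal{K}_k$, every edge of $\mathcal{D}$ leaving $p$ ends either in $\mathcal{K}_k$ itself or in a component strictly downstream. Since the list is in reverse topological order of the condensation, those components are among $\mathcal{K}_1,\dots,\mathcal{K}_{k-1}$; terminal places are fixed by $v$. By \eqref{eq:Qmatrix}, $Q_1^p(V)$ depends on $V$ only through the $\overline V_1(q)$ with $\Psi^p_q\not\equiv 0$. Hence for $p\in\mathcal{K}_k$ and any $V$,
\[
(\mathcal{B}V)(p)=\big(\mathcal{B}_{\mathcal{K}_k}(V|_{\mathcal{K}_k};V|_{\mathcal{K}_1\cup\dots\cup\mathcal{K}_{k-1}})\big)(p).
\]
This is well defined because $\mathcal{B}_{\mathcal{K}_k}$ only reads $V_{\mathrm{out}}$ on places actually reachable in one step.

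The first step is existence and uniqueness of the component fixed point. Fix any $V_{\mathrm{out}}$. The map $W\mapsto\mathcal{B}_{\mathcal{K}_k}(W;V_{\mathrm{out}})$ is a $\gamma$-contraction on $(\mathbb{R}^{\mathcal{K}_k},\|\cdot\|_\infty)$. The argument copies part (i) of \cref{thm:contraction}: the matrix-game value is nonexpansive, and \cref{lem:affine} applied to $W\oplus V_{\mathrm{out}}$ versus $W'\oplus V_{\mathrm{out}}$ gives a sup-difference of at most $\gamma\|W-W'\|_\infty$, because the two vectors agree off $\mathcal{K}_k$. Banach's theorem then yields a unique fixed point.

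The second step is identification with $V^\star$. Take $V_{\mathrm{out}}=V^\star|_{\mathcal{K}_1\cup\dots\cup\mathcal{K}_{k-1}}$. Evaluating the displayed identity at $V=V^\star$ and using $\mathcal{B}V^\star=V^\star$ shows that $V^\star|_{\mathcal{K}_k}$ is a fixed point of the restricted operator. By uniqueness it is \emph{the} fixed point. No induction hypothesis is needed for this statement itself. Induction is only needed for the algorithmic reading: solving $\mathcal{K}_1,\mathcal{K}_2,\dots$ in order, each solve uses values already shown to equal $V^\star$ on earlier components, so each component is solved once.

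For a singleton $\{p\}$ without a self-loop, $\Psi^p_p\equiv 0$. Then $Q_1^p$ does not depend on $W$, the operator is constant, and its fixed point is the single value $\mathrm{val}(Q_1^p(V_{\mathrm{out}}))$, which is one local LP \eqref{eq:lp}.

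The main point requiring care is the locality claim: that an edge out of $\mathcal{K}_k$ lands only in earlier-listed components, so $\mathcal{B}_{\mathcal{K}_k}$ is well defined with only downstream data. This follows from the definition of the condensation, but it should be stated explicitly. Also, the theorem as written assumes $\gamma<1$ via \cref{thm:contraction}. If $\gamma=1$ on an acyclic graph, every component is a loopless singleton and the last remark covers it.
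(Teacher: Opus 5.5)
Your proposal is correct and follows the paper's proof. The shared steps are: locality of $Q_1^p$, which follows from the reverse topological order of the condensation; the $\gamma$-contraction of $\mathcal{B}_{\mathcal{K}_k}$, by the argument of \cref{thm:contraction}; identification with $V^\star$ by uniqueness of the fixed point; and a constant operator for singletons without self-loops. Where the paper only writes ``induction on $k$'', you spell it out and rightly note that the identification follows directly from $\mathcal{B}V^\star=V^\star$. Induction is needed only to show that solving the components exactly, one after another, reproduces $V^\star$.
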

\begin{proof}
By \eqref{eq:Qmatrix}, $Q^p$ for $p\in\mathcal{K}_k$ depends only on $V$ at places reachable from $p$. These lie in $\mathcal{K}_k$ or in earlier-listed components. Hence $\mathcal{B}_{\mathcal{K}_k}(\cdot\,;V_{\mathrm{out}})$ is well defined and, by the argument of \cref{thm:contraction}, a $\gamma$-contraction (\cref{thm:contraction}), whose fixed point coincides with $V^\star$ on $\mathcal{K}_k$. Induction on $k$ completes the proof. A singleton $\{p\}$ without a self-loop has $Q^p$ that is independent of $V(p)$, so one solve suffices.
\end{proof}

\begin{algorithm}[ht]
\caption{SCC-decomposed solution of an OCGPN (zero-sum, \cref{ass:sequential})}
\label{alg:scc}
\begin{algorithmic}[1]
\State \textbf{Input:} game places $P_G$ with matrices $\{R^p,\Psi^p_q\}$, terminal values $v$, $\gamma\in(0,1)$, tolerance $\varepsilon$
\State Build the place graph $\mathcal{D}$; compute its SCCs $\mathcal{K}_1,\dots,\mathcal{K}_K$ in reverse topological order; set $V\gets0$ on $P_G$
\For{$k=1,\dots,K$}
  \If{$\mathcal{K}_k=\{p\}$ and $p\not\to p$}
    \State $V(p)\gets\mathrm{val}\big(Q^p(V)\big)$ by the LP \eqref{eq:lp}\Comment{one local solve}
  \Else
    \Repeat\ \ $V|_{\mathcal{K}_k}\gets\mathcal{B}_{\mathcal{K}_k}\big(V|_{\mathcal{K}_k};V_{\mathrm{out}}\big)$\Comment{synchronous sweep}
    \Until{the last sweep changed $V|_{\mathcal{K}_k}$ by at most $\varepsilon(1-\gamma)$ in $\|\cdot\|_\infty$}\Comment{certificate \eqref{eq:certificate}}
  \EndIf
\EndFor
\State \Return $V$, optimal local strategies $\sigma_p$ of $Q_1^p(V)$, and the routing matrix $P^\sigma$ by \eqref{eq:routing}
\end{algorithmic}
\end{algorithm}

\begin{proposition}[Complexity]\label{prop:complexity}
Let $L(p)$ be the cost of one local solve at $p$, which is polynomial in $|A_1^p|\,|A_2^p|$ for \eqref{eq:lp}, or in the tree size for an extensive-form local game. Let $V^0=0$ be the initial vector of Algorithm~\ref{alg:scc}, and let $\mathrm{res}_k=\|\mathcal{B}_{\mathcal{K}_k}(V^0;V_{\mathrm{out}})-V^0\|_\infty$ be the initial residual on $\mathcal{K}_k$. If downstream components are solved exactly, Algorithm~\ref{alg:scc} returns, on each component, $V$ with $\|V-V^\star\|_\infty\le\varepsilon$ after at most
\begin{equation}
  N_k=\begin{cases}1,&\mathcal{K}_k\ \text{a singleton without self-loop},\\[2pt]
  \max\Big\{1,\ \Big\lceil \dfrac{\log\big(\varepsilon(1-\gamma)/\mathrm{res}_k\big)}{\log\gamma}\Big\rceil+1\Big\},&\text{otherwise},\end{cases}
\end{equation}
sweeps, at total cost $\sum_{k}N_k\sum_{p\in\mathcal{K}_k}L(p)$. This cost is additive in the local game sizes. It excludes the cost of constructing the unfolded net, which may itself be large when color or binding spaces are large (\cref{ass:finite}).
\end{proposition}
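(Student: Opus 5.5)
The argument treats each component separately and rests on two facts. One is the geometric decay of the residual under a $\gamma$-contraction. The other is the a posteriori bound \eqref{eq:certificate}, applied to the restricted operator. Fix $k$ and assume the downstream values $V_{\mathrm{out}}$ are exact, so $V_{\mathrm{out}}=V^\star$ on $\mathcal{K}_1\cup\dots\cup\mathcal{K}_{k-1}$. By \cref{thm:scc}, $\mathcal{B}_{\mathcal{K}_k}(\cdot\,;V_{\mathrm{out}})$ is then a $\gamma$-contraction on $\mathbb{R}^{\mathcal{K}_k}$. Its unique fixed point is $V^\star|_{\mathcal{K}_k}$.

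\emph{Singleton case.} Suppose $\mathcal{K}_k=\{p\}$ with no self-loop. Then $\Psi^p_p\equiv0$, so by \eqref{eq:Qmatrix} the matrix $Q_1^p(V)$ depends only on downstream values, which are exact. The single LP \eqref{eq:lp} therefore returns $\mathrm{val}(Q_1^p(V^\star))=V^\star(p)$ exactly, and $N_k=1$.

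\emph{Nontrivial component.} Let $W^{(m)}$ denote the $m$-th iterate on $\mathcal{K}_k$, starting from $W^{(0)}=V^0|_{\mathcal{K}_k}=0$. Define $\delta_m=\|W^{(m+1)}-W^{(m)}\|_\infty$. The contraction property gives $\delta_m\le\gamma^m\delta_0=\gamma^m\mathrm{res}_k$. The stopping rule halts once the last sweep changed $V$ by at most $\varepsilon(1-\gamma)$, that is, once $\delta_{m-1}\le\varepsilon(1-\gamma)$ after $m$ sweeps. This holds as soon as $\gamma^{m-1}\mathrm{res}_k\le\varepsilon(1-\gamma)$, i.e.\ $m-1\ge\log(\varepsilon(1-\gamma)/\mathrm{res}_k)/\log\gamma$, since $\log\gamma<0$ reverses the inequality. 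Hence at most $\lceil\cdot\rceil+1$ sweeps are needed. The $\max\{1,\cdot\}$ covers the case $\mathrm{res}_k\le\varepsilon(1-\gamma)$, where the logarithm is nonnegative and one sweep already suffices.

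\emph{Accuracy at termination.} This is the step needing care, because the certificate \eqref{eq:certificate} is stated for the iterate at which the residual is evaluated. At stopping, $\|\mathcal{B}_{\mathcal{K}_k}W^{(m-1)}-W^{(m-1)}\|=\delta_{m-1}\le\varepsilon(1-\gamma)$. The bound \eqref{eq:certificate} then gives $\|W^{(m-1)}-V^\star\|\le\varepsilon$. The returned iterate is closer still: $\|W^{(m)}-V^\star\|\le\gamma\|W^{(m-1)}-V^\star\|\le\varepsilon$. Either way the returned $V$ is within $\varepsilon$ on $\mathcal{K}_k$.

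\emph{Cost.} Each synchronous sweep evaluates $\mathrm{val}(Q_1^p(\cdot))$ once per $p\in\mathcal{K}_k$. Forming $Q_1^p$ from $R^p$ and $\Psi^p_q$ costs no more than the LP itself, so absorbing it gives cost $\sum_{p\in\mathcal{K}_k}L(p)$ per sweep. Summing over components yields $\sum_kN_k\sum_{p\in\mathcal{K}_k}L(p)$. Each place appears in exactly one component, so the total is a weighted sum rather than a product of local game sizes. Constructing $\mathcal{D}$ and its SCCs is linear in the size of the place graph. Unfolding the net is excluded by hypothesis.
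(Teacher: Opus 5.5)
Your proposal is correct and follows the paper's proof. The steps match: a singleton without self-loop needs one exact solve. Under the restricted $\gamma$-contraction, the change in the last sweep decays as $\gamma^{N-1}\mathrm{res}_k$. Applying \eqref{eq:certificate} to $\mathcal{B}_{\mathcal{K}_k}$ turns the stopping rule into $\varepsilon$-accuracy, which gives $N_k$. Your added step, $\|W^{(m)}-V^\star\|_\infty\le\gamma\|W^{(m-1)}-V^\star\|_\infty\le\varepsilon$, usefully makes explicit that the iterate actually returned is accurate, which the paper leaves implicit.
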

\begin{proof}
By \cref{thm:scc}, the components are solved in order. A singleton without a self-loop costs one solve. Inside any other component, a synchronous sweep applies the $\gamma$-contraction $\mathcal{B}_{\mathcal{K}_k}$, so after $N$ sweeps the change in the last sweep is at most $\gamma^{N-1}\mathrm{res}_k$. Requiring $\gamma^{N-1}\mathrm{res}_k\le\varepsilon(1-\gamma)$ and applying \eqref{eq:certificate} to $\mathcal{B}_{\mathcal{K}_k}$ gives $N_k$. In-place (Gauss--Seidel) sweeps are also $\gamma$-contractions in $\|\cdot\|_\infty$ for this operator~\cite{puterman1994}, and they typically converge faster, but the bound above is stated for the synchronous update that the algorithm prescribes.
\end{proof}

\paragraph{Termination and probabilities of reaching outcomes.} Undiscounted questions use the same routing matrix. The routing matrix $P^\sigma$ is the transient-to-transient block (game places), and the exit matrix $\Xi^\sigma$ is the transient-to-absorbing block (terminal places). Then
\begin{equation}\label{eq:absorb}
  \Pr[\text{absorb in } z\mid\text{start at } p] = \big[(I-P^\sigma)^{-1}\Xi^\sigma\big]_{pz},\qquad
  \mathbb{E}[\text{visits to } q\mid p] = \big[(I-P^\sigma)^{-1}\big]_{pq},
\end{equation}
whenever the spectral radius $\rho(P^\sigma)$ is less than one.

\begin{proposition}[Strategic termination]\label{prop:termination}
Under \cref{ass:sequential} and a fixed profile $\sigma$, play terminates with probability one from every game place if and only if $\rho(P^\sigma)<1$. If, from some game place, play follows a cycle $p_1\to\cdots\to p_m\to p_1$ of the place graph with probability one, then $\rho(P^\sigma)=1$, $I-P^\sigma$ is singular, and the net livelocks.
\end{proposition}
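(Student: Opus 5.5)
The plan is to read the sequential dynamics as an absorbing Markov chain and use standard facts about substochastic matrices. Under \cref{ass:sequential} and a fixed stationary $\sigma$, the active place evolves as a Markov chain on $P_G\cup Z$. Its transient-to-transient block is $P^\sigma$, and the terminal places are absorbing. Because each row of $[P^\sigma\ \Xi^\sigma]$ sums to one, the probability of still being at a game place after $k$ stages, starting from $p$, is $\big((P^\sigma)^k\mathbf{1}\big)(p)$. Termination with probability one from every game place is therefore equivalent to $(P^\sigma)^k\mathbf{1}\to0$. Since $P^\sigma\ge0$, this is in turn equivalent to $(P^\sigma)^k\to0$ entrywise: each entry of $(P^\sigma)^k$ is bounded by the corresponding row sum.

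For ($\Leftarrow$), I would use that $\rho(P^\sigma)<1$ gives $(P^\sigma)^k\to0$ by Gelfand's formula, or via a Jordan form, so the survival probabilities vanish. For ($\Rightarrow$), I would apply Perron--Frobenius for nonnegative matrices. There is a nonnegative eigenvector $x\neq0$ with $P^\sigma x=\rho x$, where $\rho=\rho(P^\sigma)$. Then $(P^\sigma)^k x=\rho^k x$. If $(P^\sigma)^k\to0$, the left side tends to zero, which forces $\rho<1$. I would then note that $\rho\le1$ always holds by substochasticity, and that this is the setting of \eqref{eq:absorb}. When $\rho<1$, $I-P^\sigma$ is invertible and the Neumann series gives the expected visit counts.

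For the second claim, suppose that from some place $p_1$ play follows the cycle $p_1\to\cdots\to p_m\to p_1$ with probability one. Let $S=\{p_1,\dots,p_m\}$. Then for each $p_j$ the routing mass $P^\sigma_{p_j p_{j+1}}$ equals $1$, indices taken mod $m$. So the principal submatrix of $P^\sigma$ on $S$ contains a permutation matrix, and since each of its rows already sums to one, it is exactly that permutation matrix. Hence $\mathbf{1}_S$ satisfies $P^\sigma\mathbf{1}_S\ge\mathbf{1}_S$, with equality on $S$. More directly, the vector $x=\mathbf{1}_S$ restricted to the closed class $S$ is a fixed point: the rows indexed by $S$ put all their mass inside $S$. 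So the submatrix $P^\sigma_{SS}$ is stochastic and has eigenvalue $1$. Monotonicity of the spectral radius for principal submatrices of nonnegative matrices gives $\rho(P^\sigma)\ge1$, and hence $\rho(P^\sigma)=1$. By Perron--Frobenius, $1$ is then an eigenvalue, so $I-P^\sigma$ is singular. Survival probability from $p_1$ is identically one, so play never reaches $Z$, which is the livelock.

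The main obstacle I expect is the interpretation of ``follows a cycle with probability one.'' I need this to give $P^\sigma_{p_jp_{j+1}}=1$ along the whole cycle, not merely that the cycle is traversed eventually. I would argue it from the hypothesis taken literally: almost surely the trajectory is $p_1p_2\cdots p_mp_1\cdots$. This forces each one-step transition probability to be $1$, using stationarity and the fact that each $p_j$ is visited with positive probability. The only other care needed is to justify the step from entrywise convergence to spectral radius, which the Perron eigenvector handles cleanly.
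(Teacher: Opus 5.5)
Your proposal is correct and takes the paper's route. The first claim is the standard absorbing-chain characterization; you phrase it through the survival probabilities $(P^\sigma)^k\mathbf{1}\to 0$ rather than convergence of $\sum_k (P^\sigma)^k$, which ties it to termination more directly. The second claim rests, as in the paper, on the principal submatrix on the cycle being a permutation matrix, and you also make explicit a step the paper leaves implicit: its eigenvalue one lifts to $\rho(P^\sigma)=1$ and to singularity of $I-P^\sigma$, via spectral-radius monotonicity and Perron--Frobenius, or equivalently because the cycle is a closed class.
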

\begin{proof}
The first statement is the standard characterization of absorbing Markov chains: $\sum_k (P^\sigma)^k$ converges if and only if $\rho(P^\sigma)<1$, and it then counts expected visits. If a cycle $p_1\to\cdots\to p_m\to p_1$ is followed with probability one, the corresponding principal submatrix of $P^\sigma$ is a permutation matrix, which has eigenvalue one. If, in addition, the net is bounded and each traversal restores the marking, the cycle's firing-count vector is a T-invariant by \cref{prop:invariants}(ii). This must be checked on $C$, since a traversal may consume resources or recolor tokens.
\end{proof}

\subsection{Composition and its matrix calculus}
\label{sec:composition}

Large strategic systems are built from small ones. This subsection makes composition precise and shows that each operator acts on the three matrix objects in a simple, predictable way. The aim is a calculus: properties established for the parts (invariants, equilibria, values, termination) combine into properties of the whole, without rebuilding the composite from scratch.

\paragraph{Open nets and interfaces.} Composition needs a notion of boundary.

\begin{definition}[Open OCGPN]\label{def:open}
An \emph{open} OCGPN is an OCGPN $\N$ together with two sets of places, the \emph{entry places} $E(\N)\subseteq P$ and the \emph{exit places} $X(\N)\subseteq P$. Each boundary place carries its color type. Tokens enter the net only at entry places and leave it only at exit places. The \emph{interface signature} of $\N$ is the pair of typed sets $\big(E(\N),X(\N)\big)$.
\end{definition}

When an open net is analyzed on its own, its exit places act as the terminal places of \cref{def:successor}, with terminal values supplied by whatever lies downstream. This is what makes the values of \cref{sec:values} usable as interface data.

\paragraph{Place fusion.} All six operators are built from one primitive, the identification of places, possibly together with a few added transitions.

\begin{definition}[Place fusion]\label{def:fusion}
Let $\N_1,\N_2$ have places $P_1,P_2$ and transitions $T_1,T_2$, and let $\iota\subseteq P_1\times P_2$ be a set of type-compatible pairs to be identified. The \emph{fusion} $\N_1+_\iota\N_2$ has place set $(P_1\sqcup P_2)/\!\sim_\iota$, where $\sim_\iota$ is the equivalence generated by $\iota$, transition set $T_1\sqcup T_2$, and the inherited arcs. Its \emph{fusion matrix} $\Lambda\in\{0,1\}^{|(P_1\sqcup P_2)/\sim_\iota|\times(|P_1|+|P_2|)}$ has $\Lambda_{[q],q'}=1$ if and only if $q'\in[q]$. Self-fusion of a single net, $\N+_\iota$, is defined in the same way with $\iota\subseteq P\times P$.
\end{definition}

\begin{lemma}[Fusion lemma]\label{lem:fusion}
Let $\N=\N_1+_\iota\N_2$, possibly with an additional set $T_0$ of new transitions whose incidence columns form a matrix $C_0$. Then
\begin{equation}\label{eq:fusionC}
  C=\Big[\,\Lambda\begin{pmatrix}C_1&0\\0&C_2\end{pmatrix}\ \ \Big|\ \ C_0\,\Big].
\end{equation}
Moreover:
\begin{enumerate}[label=(\roman*),nosep]
  \item a vector $y$ is a P-invariant of $\N$ if and only if $\Lambda^\top y=(y_1,y_2)$, where $y_1,y_2$ are P-invariants of $\N_1,\N_2$ that agree on every fused pair, and in addition $C_0^\top y=0$. The same holds with ``P-invariant'' replaced by ``Farkas vector'' ($C^\top y\le0$) and equalities by inequalities;
  \item if $n_1,n_2$ are T-invariants of $\N_1,\N_2$, then $(n_1,n_2,0)$ is a T-invariant of $\N$. The converse fails: $\N$ may have T-invariants $(n_1,n_2,n_0)$ whose parts are not T-invariants of the parts, because imbalances of the parts cancel on fused rows. Whether composition creates such invariants, and whether they are realizable, must be checked on $C$.
\end{enumerate}
\end{lemma}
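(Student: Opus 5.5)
I will first establish the block formula \eqref{eq:fusionC} column by column, and then read off (i) and (ii) by linear algebra on that formula. Fix an old transition $t\in T_1$ (the case $T_2$ is symmetric). In $\N_1$ its column of $C_1$ lists the net token change on each place of $P_1$. In the fusion, $t$ keeps all its arcs, but each place $q'$ is replaced by its class $[q']$. The net change on a class $[q]$ is therefore the sum of the changes on the members $q'\in[q]$ that $t$ touches. That sum is exactly the row $[q]$ of $\Lambda$ applied to the stacked vector whose $P_1$-block is the column of $C_1$ and whose $P_2$-block is zero. This gives the first block $\Lambda\,\mathrm{diag}(C_1,C_2)$. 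The new transitions in $T_0$ are defined directly on the fused places, so their columns form $C_0$. At the level of the unfolding, fusion of type-compatible places identifies unfolded places $(p,c)$ colorwise, so the same argument applies with $\widetilde P$ in place of $P$. I will remark on this briefly rather than redo it.

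For (i), I will transpose: $y^\top C=0$ holds if and only if $y^\top\Lambda\,\mathrm{diag}(C_1,C_2)=0$ and $y^\top C_0=0$. Set $(y_1,y_2):=\Lambda^\top y$. The first condition splits blockwise into $y_1^\top C_1=0$ and $y_2^\top C_2=0$, so $y_1$ and $y_2$ are P-invariants of the parts. Since $(\Lambda^\top y)_{q'}=y_{[q']}$, both $y_1$ and $y_2$ take the common value $y_{[q]}$ on every fused pair.

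Conversely, suppose $y_1,y_2$ are P-invariants that agree on fused pairs. Agreement means the pair is constant on each equivalence class of $\sim_\iota$. I will check that this is the same as agreement on every pair in $\iota$, because $\sim_\iota$ is the transitive closure of $\iota$, which needs one line. The pair then descends to a well-defined $y$ on the classes with $\Lambda^\top y=(y_1,y_2)$. Adding $C_0^\top y=0$ gives $y^\top C=0$. The Farkas variant is identical, because $\Lambda^\top$ has nonnegative entries and $y\ge 0$ is equivalent to $(y_1,y_2)\ge 0$. The condition $C^\top y\le 0$ then splits row-blockwise into $C_1^\top y_1\le0$, $C_2^\top y_2\le 0$, and $C_0^\top y\le 0$.

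For (ii), by \eqref{eq:fusionC} we have $C(n_1,n_2,0)=\Lambda(C_1n_1,C_2n_2)=\Lambda\cdot 0=0$, and nonnegativity and nonzeroness are inherited. For the failure of the converse, I will give a small counterexample. Take $\N_1$ with a single transition that produces one token into $p_1$, so $C_1=[1]$, and $\N_2$ with a single transition that consumes one token from $p_2$, so $C_2=[-1]$. Fuse $p_1\sim p_2$ with $T_0=\emptyset$. Then $C=[1\ \ -1]$, and $(1,1)$ is a T-invariant, although neither part has a nonzero T-invariant.

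The only step needing care is well-definedness of $y$ from $(y_1,y_2)$ in the converse of (i). Concretely, the statement's phrase ``agree on every fused pair'' must be read as agreement across whole equivalence classes, including identifications inside $P_1$ chained through $P_2$. I expect to handle this by noting that agreement on generating pairs propagates along chains of $\iota$.
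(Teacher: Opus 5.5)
Your proposal is correct and takes the same route as the paper. You obtain the block formula by summing the changes on fused places (the action of $\Lambda$), read (i) off $y^\top C=\big((\Lambda^\top y)^\top\mathrm{diag}(C_1,C_2),\,y^\top C_0\big)$, and get (ii) by direct multiplication; your descent of $(y_1,y_2)$ along chains of $\iota$, the nonnegativity check for Farkas vectors, and the concrete counterexample $C=[1\ \ {-1}]$ sharpen steps the paper only asserts (it merely notes that $\Lambda(C_1n_1,C_2n_2)+C_0n_0=0$ allows $C_kn_k\neq0$).
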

\begin{proof}
A column of $C_k$ records token changes on places of $P_k$. After fusion, the change on a class $[q]$ is the sum of the changes on its members, which is the action of $\Lambda$. New transitions contribute their own columns, giving \eqref{eq:fusionC}. For (i), $y^\top C=\big((\Lambda^\top y)^\top\mathrm{diag}(C_1,C_2),\ y^\top C_0\big)$, and $\Lambda^\top y$ assigns to each original place the weight of its class. So the vector $\Lambda^\top y$ automatically agrees on fused pairs, and $y^\top C=0$ if and only if $y_1^\top C_1=0$, $y_2^\top C_2=0$, and $y^\top C_0=0$. The Farkas case is identical with inequalities. For (ii), $C(n_1,n_2,0)=\Lambda(C_1n_1,C_2n_2)=0$. Conversely, a T-invariant of $\N$ only requires $\Lambda(C_1n_1,C_2n_2)+C_0n_0=0$, which allows $C_kn_k\neq0$.
\end{proof}

\Cref{fig:fusion} shows the block structure for the most common case, in which an exit place of one net is fused with an entry place of another.

\begin{figure}[ht]
\centering
\begin{tikzpicture}[x=1cm,y=1cm,
  blk/.style={draw=inkcol!60,line width=0.6pt,font=\footnotesize,inner sep=0pt,anchor=north west}]
  \draw[line width=0.7pt,dashed,draw=rescol!70!black,rounded corners=4pt] (-0.2,-0.9) rectangle (6.1,1.3);
  \draw[line width=0.7pt,dashed,draw=gamecol!80!black,rounded corners=4pt] (5.1,-1.05) rectangle (11.4,1.45);
  \node[lbl,anchor=north west,text=rescol!70!black] at (-0.1,1.25) {$\N_1$};
  \node[lbl,anchor=north east,text=gamecol!70!black] at (11.3,1.4) {$\N_2$};
  \node[place,label={[lbl]below:entry $e_1$}] (e1) at (0.55,0) {};
  \node[transition] (t1) at (1.55,0) {};
  \node[gplace] (g1) at (2.65,0) {};
  \node[rtrans] (r1) at (3.75,0) {};
  \node[place,fill=okcol!18,draw=okcol!60!black,label={[lbl,text=okcol!45!black,fill=white,inner sep=1.5pt]above:$x_1\equiv e_2$}] (s) at (5.6,0) {};
  \node[transition] (t2) at (6.9,0) {};
  \node[gplace] (g2) at (8.0,0) {};
  \node[rtrans] (r2) at (9.1,0) {};
  \node[place,label={[lbl]below:exit $x_2$}] (x2) at (10.4,0) {};
  \foreach \u/\v in {e1/t1,t1/g1,g1/r1,r1/s,s/t2,t2/g2,g2/r2,r2/x2} \draw[arc] (\u)--(\v);
  \begin{scope}[shift={(2.6,-1.9)}]
    \def\cw{2.2}
    \node[lbl,anchor=south] at (1.1,0.05) {columns $T_1$};
    \node[lbl,anchor=south] at (3.3,0.05) {columns $T_2$};
    \node[blk,minimum width=2.2cm,minimum height=0.75cm,fill=rescol!14] at (0,0) {};
    \node[font=\footnotesize] at (1.1,-0.375) {$C_1^{\circ}$};
    \node[blk,minimum width=2.2cm,minimum height=0.75cm,fill=white] at (2.2,0) {};
    \node[font=\footnotesize,text=inkcol!70] at (3.3,-0.375) {$0$};
    \node[blk,minimum width=2.2cm,minimum height=0.5cm,fill=okcol!18] at (0,-0.75) {};
    \node[font=\footnotesize] at (1.1,-1.0) {$C_1^{H}$};
    \node[blk,minimum width=2.2cm,minimum height=0.5cm,fill=okcol!18] at (2.2,-0.75) {};
    \node[font=\footnotesize] at (3.3,-1.0) {$C_2^{H}$};
    \node[blk,minimum width=2.2cm,minimum height=0.75cm,fill=white] at (0,-1.25) {};
    \node[font=\footnotesize,text=inkcol!70] at (1.1,-1.625) {$0$};
    \node[blk,minimum width=2.2cm,minimum height=0.75cm,fill=gamecol!16] at (2.2,-1.25) {};
    \node[font=\footnotesize] at (3.3,-1.625) {$C_2^{\circ}$};
    \node[lbl,anchor=east] at (-0.1,-0.375) {rows $P_1\setminus H$};
    \node[lbl,anchor=east] at (-0.1,-1.0) {fused rows $H$};
    \node[lbl,anchor=east] at (-0.1,-1.625) {rows $P_2\setminus H$};
    \node[lbl,anchor=west,align=left] at (4.65,-1.0) {$C=\Lambda\begin{pmatrix}C_1&0\\0&C_2\end{pmatrix}$:\\the fusion matrix $\Lambda$ adds\\the rows of fused places};
  \end{scope}
\end{tikzpicture}
\caption{Composition by place fusion. The exit place $x_1$ of $\N_1$ is identified with the entry place $e_2$ of $\N_2$. The composite incidence matrix keeps the blocks of the parts and adds their rows on the fused place (green), as in \eqref{eq:fusionC}. Invariants of the parts that agree on the fused place therefore glue into invariants of the composite (\cref{lem:fusion}).}
\label{fig:fusion}
\end{figure}

\paragraph{The six operators.} With fusion available, the operators of \cref{fig:composition} have the following formal definitions.

\begin{definition}[Composition operators]\label{def:operators}
Let $\N,\N_1,\N_2,\N_3$ be open OCGPNs.
\begin{enumerate}[label=(\alph*),nosep]
  \item \textbf{Sequential composition} $\N_1;\N_2:=\N_1+_\iota\N_2$, where $\iota$ is a type-preserving bijection between $X(\N_1)$ and $E(\N_2)$; we write $\iota(x)$ for the entry place fused with the exit place $x$. Then $E(\N_1;\N_2)=E(\N_1)$ and $X(\N_1;\N_2)=X(\N_2)$.
  \item \textbf{Conditional choice} $\N_1\oplus_\omega\N_2$: a result place $q$ and two routing transitions $t_1,t_2$ with disjoint guards $[\omega\in O_1]$, $[\omega\in O_2]$, $O_1\cap O_2=\emptyset$, leading from $q$ to $E(\N_1)$ and $E(\N_2)$. The entry is $q$, and the exit is $X(\N_1)\cup X(\N_2)$.
  \item \textbf{Parallel composition} $\N_1\parallel_{H}\N_2:=\N_1+_\iota\N_2$, where $\iota$ identifies only a set $H$ of shared resource places. Boundaries are unions.
  \item \textbf{Synchronization} $\mathrm{join}(\N_1,\N_2;\N_3)$: a new transition $j$ whose inputs are one exit place of each of $\N_1,\N_2$ and whose output is an entry place of $\N_3$, with a guard that checks matching identifiers.
  \item \textbf{Feedback} $\mathrm{fb}_{x\to e}(\N):=\N+_\iota$ with $\iota=\{(x,e)\}$, $x\in X(\N)$, $e\in E(\N)$: a failed result re-enters an earlier game.
  \item \textbf{Dynamic instantiation}: a game place $p$ whose local game is drawn from a template family $\{\G_{p,\vartheta}\}_{\vartheta\in\Theta_p}$, with $\vartheta$ read from a field of the active token's color. In the unfolded net, $p$ is replaced by the instances $\{p_\vartheta\}_{\vartheta\in\Theta_p}$.
\end{enumerate}
\end{definition}

\begin{figure}[ht]
\centering
\begin{subfigure}[t]{0.48\textwidth}\centering
\begin{tikzpicture}
  \node[sub] (n1) at (0,0) {$\N_1$};
  \node[place,label={[lbl]above:interface}] (i) at (2.4,0) {};
  \node[ctok=okcol] at (i) {};
  \node[sub] (n2) at (4.8,0) {$\N_2$};
  \draw[arc] (n1)--node[above,lbl]{exit}(i); \draw[arc] (i)--node[above,lbl]{entry}(n2);
  \node[lbl,below=6mm of i] {colors must match: $\mathrm{Col}_{\mathrm{exit}}=\mathrm{Col}_{\mathrm{entry}}$};
\end{tikzpicture}
\caption{Sequential $\N_1;\N_2$}
\end{subfigure}\hfill
\begin{subfigure}[t]{0.48\textwidth}\centering
\begin{tikzpicture}[x=1cm,y=1cm]
  \node[place] (q) at (0,0) {}; \node[ctok=okcol] at (q) {};
  \coordinate (fan) at (0.7,0);
  \node[transition,label={[guard]above:$[\omega\in O_1]$}] (t1) at (1.5,0.8) {};
  \node[transition,label={[guard]below:$[\omega\in O_2]$}] (t2) at (1.5,-0.8) {};
  \node[sub,minimum width=1.2cm,minimum height=0.9cm] (n1) at (3.1,0.8) {$\N_1$};
  \node[sub,minimum width=1.2cm,minimum height=0.9cm] (n2) at (3.1,-0.8) {$\N_2$};
  \draw[arc] (q.east)--(fan)|-(t1); \draw[arc] (q.east)--(fan)|-(t2);
  \draw[arc] (t1)--(n1); \draw[arc] (t2)--(n2);
\end{tikzpicture}
\caption{Choice $\N_1\oplus_\omega\N_2$}
\end{subfigure}

\vspace{5mm}
\begin{subfigure}[t]{0.48\textwidth}\centering
\begin{tikzpicture}
  \node[gplace] (g1) at (-0.5,1.0) {}; \node[rtrans] (r1) at (1.3,1.0) {}; \node[place] (o1) at (2.6,1.0) {};
  \node[gplace] (g2) at (-0.5,-1.0) {}; \node[rtrans] (r2) at (1.3,-1.0) {}; \node[place] (o2) at (2.6,-1.0) {};
  \node[place,tokens=3,label={[lbl]right:shared resource}] (sr) at (1.3,0) {};
  \draw[arc] (g1)--(r1); \draw[arc] (r1)--(o1); \draw[arc] (g2)--(r2); \draw[arc] (r2)--(o2);
  \draw[arc] (sr)--(r1.south); \draw[arc] (sr)--(r2.north);
  \node[glbl] at (-1.4,1.0) {$\N_1$}; \node[glbl] at (-1.4,-1.0) {$\N_2$};
\end{tikzpicture}
\caption{Parallel $\N_1\parallel_H\N_2$ with a shared place}
\end{subfigure}\hfill
\begin{subfigure}[t]{0.48\textwidth}\centering
\begin{tikzpicture}[x=1cm,y=1cm]
  \node[place,label={[lbl]left:result 1}] (a) at (0,0.8) {}; \node[ctok=okcol] at (a) {};
  \node[place,label={[lbl]left:result 2}] (b) at (0,-0.8) {}; \node[ctok=okcol] at (b) {};
  \node[transition] (j) at (1.6,0) {};
  \node[gplace,label={[glbl]right:joint game}] (g) at (3.9,0) {};
  \draw[arc] (a)-|(j); \draw[arc] (b)-|(j);
  \draw[arc] (j) -- node[below,guard]{$[\mathrm{id}_1{=}\mathrm{id}_2]$} (g);
\end{tikzpicture}
\caption{Synchronization (join)}
\end{subfigure}

\vspace{5mm}
\begin{subfigure}[t]{0.48\textwidth}\centering
\begin{tikzpicture}
  \node[gplace,label={[glbl]below:$\G_1$}] (g) at (0,0) {};
  \node[rtrans] (r) at (1.3,0) {};
  \node[place] (o) at (2.6,0) {};
  \node[transition,label={[guard]right:$[\mathrm{fail}]$}] (f) at (2.6,1.3) {};
  \node[transition,label={[guard]above:$[\mathrm{pass}]$}] (s) at (3.9,0) {};
  \draw[arc] (g)--(r); \draw[arc] (r)--(o); \draw[arc] (o)--(s); \draw[arc] (o)--(f);
  \draw[arc,badcol] (f.north) -- ++(0,0.35) -| node[above,lbl,pos=0.25,text=badcol]{revision token} (g.north);
  \draw[arc] (s) -- ++(0.8,0);
\end{tikzpicture}
\caption{Feedback / repair loop}
\end{subfigure}\hfill
\begin{subfigure}[t]{0.48\textwidth}\centering
\begin{tikzpicture}
  \node[transition] (t) at (0,0) {};
  \node[gplace,label={[glbl]below:template $\G_\vartheta$}] (g) at (2.8,0) {};
  \node[ctok=gamecol] at (g) {};
  \draw[arc] (t) -- node[above,lbl]{$\tok{\mathrm{type},\vartheta,\mathrm{id}}$} (g);
  \node[cbox,font=\scriptsize,align=left] (c) at (2.8,1.4) {$\vartheta$ sets players,\\ actions, payoffs};
  \draw[darc,draw=gamecol!80!black] (c)--(g);
\end{tikzpicture}
\caption{Dynamic game instantiation}
\end{subfigure}
\caption{The six composition operators of \cref{def:operators}. Panels (a)--(b) are standard workflow constructs. Panels (c)--(d) are where Petri nets do more than game trees: true concurrency and joins over shared resources. Panels (e)--(f) add loops and outcome-parameterized games.}
\label{fig:composition}
\end{figure}

\Cref{tab:calculus} summarizes how each operator acts on the three matrix objects. The rest of the subsection proves these entries and draws their consequences.

\begin{table}[ht]
\centering
\caption{The composition calculus at a glance: the effect of each operator on the incidence matrix, the local matrices, and the place graph and routing matrix.}
\label{tab:calculus}
\footnotesize
\begin{tabular}{@{}>{\raggedright\arraybackslash}p{2.1cm}>{\raggedright\arraybackslash}p{3.7cm}>{\raggedright\arraybackslash}p{3.6cm}>{\raggedright\arraybackslash}p{4.6cm}@{}}
\toprule
Operator & Incidence matrix $C$ & Local matrices $(R^p,\Phi^p)$ & Place graph and routing $P^\sigma$\\\midrule
sequential & fusion of exit and entry rows & unchanged & block upper triangular; downstream enters through entry values\\
choice & two new guarded columns & one indicator term per branch at the choosing place & a fork in the place graph\\
parallel & block diagonal, shared rows added & unchanged unless shared places are read & product of the parts (concurrent state)\\
synchronization & one column with several inputs & unchanged & join of two branches; success probability multiplies\\
feedback & self-fusion; possible new T-invariants, to be checked & unchanged & a back edge, enlarging an SCC\\
instantiation & one row/column set per parameter & one local game per parameter & instances replace the template place\\
\bottomrule
\end{tabular}
\end{table}

\begin{theorem}[Matrix calculus of composition]\label{thm:composition}
Let $\N_1,\N_2$ be open OCGPNs with incidence matrices $C_1,C_2$, local matrices $\{R^p,\Phi^p_\omega\}$, and, for stationary profiles, routing matrices $P_1^\sigma,P_2^\sigma$.
\begin{enumerate}[label=(\roman*),nosep]
  \item \textbf{Parallel.} Order the rows of $\N_1\parallel_H\N_2$ as $(P_1\setminus H,\ H,\ P_2\setminus H)$, unfolded by color, and write $C_k^{\circ}$ and $C_k^{H}$ for the rows of $C_k$ on the private places $P_k\setminus H$ and on the shared places $H$. Then
  \begin{equation}\label{eq:parallelC}
    C=\begin{pmatrix} C_1^{\,\circ} & 0\\ C_1^{\,H} & C_2^{\,H}\\ 0 & C_2^{\,\circ}\end{pmatrix}.
  \end{equation}
  If $y_1,y_2$ are P-invariants (respectively, Farkas vectors) of the parts that agree on $H$, then the glued vector $y$ is a P-invariant (respectively, satisfies $C^\top y\le0$) of the composition.
  \item \textbf{Sequential.} In $\N_1;\N_2$, with game places ordered $(\N_1,\N_2)$ and blocks indexed by subnet in brackets,
  \begin{equation}\label{eq:seqP}
    P^\sigma=\begin{pmatrix}P_{[11]}&P_{[12]}\\0&P_{[22]}\end{pmatrix},\qquad
    V_{[2]}^\sigma=(I-\gamma P_{[22]})^{-1}\mathbf{r}_{[2]},\qquad
    V_{[1]}^\sigma=(I-\gamma P_{[11]})^{-1}\big(\mathbf{r}_{[1]}+\gamma P_{[12]}V_{[2]}^\sigma\big),
  \end{equation}
  Hence $\N_2$ affects $\N_1$ only through its \emph{entry values} $P_{[12]}V_{[2]}^\sigma$.
  \item \textbf{Choice.} At the choosing place $p$, the branches enter \eqref{eq:Qmatrix} additively, one term $\gamma V(f(p,\omega))\Phi^p_\omega$ per outcome. \Cref{thm:dominance} applies branch by branch.
  \item \textbf{Synchronization.} A join is a single column with several input rows. A P-invariant must assign the output place the sum of the weights of the joined places.
  \item \textbf{Feedback.} A feedback fusion adds an edge to the place graph and may merge components into one SCC. \Cref{prop:nilpotent} no longer applies to that component, but \cref{thm:contraction,thm:scc} do. Feedback creates a cyclic \emph{dependency}. It does not by itself create a T-invariant: a traversal may consume resources or recolor tokens, and a T-invariant must be verified on the composite incidence matrix (\cref{lem:fusion}(ii)). Termination is decided by the routing matrix (\cref{prop:termination}).
\end{enumerate}
\end{theorem}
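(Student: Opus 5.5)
The plan is to prove the five items one at a time. Each reduces to an earlier result: the fusion lemma (\cref{lem:fusion}) for the structural items (i), (iv) and (v), and the evaluation equation of \cref{lem:policy} together with the augmented payoff \eqref{eq:Qmatrix} for the value items (ii), (iii) and (v).

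For (i), I would write $\N_1\parallel_H\N_2=\N_1+_\iota\N_2$ with $\iota$ identifying exactly $H$ and no new transitions, so $C_0$ is empty. With rows ordered as $(P_1\setminus H,H,P_2\setminus H)$, the fusion matrix $\Lambda$ is the identity on private rows and adds the two copies of each shared row. Applying \eqref{eq:fusionC} to $\mathrm{diag}(C_1,C_2)$, split into $C_k^\circ$ and $C_k^H$, gives \eqref{eq:parallelC}. The gluing statement is \cref{lem:fusion}(i) read backwards. If $y_1,y_2$ agree on $H$, the glued $y$ satisfies $\Lambda^\top y=(y_1,y_2)$, hence $y^\top C=(y_1^\top C_1,\,y_2^\top C_2)$, which is zero or nonpositive as required.

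For (iv), a join transition $j$ has input places $x_1,x_2$ and an output entry place $e$ of $\N_3$. Its column in $C_0$ is $\mathbf{e}_e-\mathbf{e}_{x_1}-\mathbf{e}_{x_2}$, unfolded by color with matching identifiers. The extra condition $C_0^\top y=0$ of \cref{lem:fusion}(i) then reads $y_e=y_{x_1}+y_{x_2}$, which is the claim.

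For (ii), no routing edge goes from a game place of $\N_2$ back to one of $\N_1$. This holds because tokens enter $\N_2$ only at entry places, and those carry no arcs back into $\N_1$ (\cref{def:open}). Hence the lower-left block of $P^\sigma$ vanishes. Splitting $V=\mathbf{r}+\gamma P^\sigma V$ from \cref{lem:policy} by blocks gives two equations. The second, $V_{[2]}=\mathbf{r}_{[2]}+\gamma P_{[22]}V_{[2]}$, does not involve $V_{[1]}$. The first is $V_{[1]}=\mathbf{r}_{[1]}+\gamma P_{[11]}V_{[1]}+\gamma P_{[12]}V_{[2]}$. Both diagonal blocks are principal submatrices of a substochastic matrix, so $\rho(\gamma P_{[kk]})\le\gamma<1$ and $I-\gamma P_{[kk]}$ is invertible. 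For $\gamma=1$ on an acyclic graph I would use nilpotency from \cref{prop:nilpotent}. Solving gives \eqref{eq:seqP}. The only dependence of $V_{[1]}$ on $\N_2$ is through $P_{[12]}V_{[2]}$. One point needs care here: $\mathbf{r}_{[2]}$ must include exits of $\N_2$, while exits of $\N_1$ have been fused into entries and so counted in $P_{[12]}$ rather than in $\Xi$.

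For (iii), at the choosing place I would write \eqref{eq:Qmatrix} in its first, outcome-indexed form. The disjoint guards $O_1,O_2$ make $f(p,\omega)$ single-valued. The sum over $\Omega_p$ therefore splits into the branch sets, each term being $\gamma\,\overline V(f(p,\omega))\Phi^p_\omega$. Linearity of $D_{a_{-i}}$ then lets \cref{thm:dominance} be evaluated as a sum of per-branch contributions.

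For (v), self-fusion adds a place-graph edge from the game place feeding $x$ to the one fed by $e$. If a path already runs back, this closes a cycle and merges the SCCs on it. \Cref{prop:nilpotent} then fails for that component because it is cyclic. \Cref{thm:contraction,thm:scc} still apply because they need only $\gamma<1$. The T-invariant caveat is \cref{lem:fusion}(ii), and termination is \cref{prop:termination}.

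The main obstacle I expect is making (ii) rigorous about the block structure. The argument must show that open-net boundary conditions exclude all back edges, and it must track how exit places of $\N_1$ turn from terminal places, contributing to $\Xi$, into routing into $\N_2$, contributing to $P_{[12]}$. The remaining items are bookkeeping on \cref{lem:fusion}.
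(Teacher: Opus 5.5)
Your proposal is correct and follows the paper's proof essentially item by item: (i) and (iv) go through \cref{lem:fusion} and the single join column, (ii) through the vanishing lower-left block of $P^\sigma$ and block back-substitution in the evaluation equation, (iii) through linearity of \eqref{eq:Qmatrix} in the sum over outcomes, and (v) through the added place-graph edge together with \cref{lem:fusion}(ii) and \cref{prop:termination}. Your extra remarks on the invertibility of $I-\gamma P_{[kk]}$ and on the exits of $\N_1$ moving from $\Xi^\sigma$ into $P_{[12]}$ make explicit details that the paper leaves implicit.
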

\begin{proof}
(i) This is \cref{lem:fusion} with $\iota$ identifying $H$ and no new transitions: the row blocks of $\Lambda\,\mathrm{diag}(C_1,C_2)$ are those of \eqref{eq:parallelC}, and the gluing statement is \cref{lem:fusion}(i). (ii) Exit places of $\N_1$ are entry places of $\N_2$, and no arc leads back. So no game place of $\N_2$ has a successor in $\N_1$, and the lower-left block of $P^\sigma$ vanishes. Block back-substitution in $(I-\gamma P^\sigma)V=\mathbf{r}$ gives \eqref{eq:seqP}: the inverse in the second equation multiplies the whole right-hand side, including the downstream term. (iii) This is linearity of \eqref{eq:Qmatrix} in the sum over $\omega$. (iv) If the join column $c$ consumes from $q_1,\dots,q_\ell$ and produces in $q$, then $y^\top c=y_q-\sum_k y_{q_k}$, which must vanish. (v) The fused pair $(x,e)$ creates an edge from every game place that can reach $x$ to every game place reachable from $e$, so every place on a resulting cycle of $\mathcal{D}$ lies in one SCC. The remaining statements are \cref{lem:fusion}(ii) and \cref{prop:termination}.
\end{proof}

\paragraph{Sequential composition and equilibria.} \Cref{thm:composition}(ii) concerns a fixed profile. The following result shows that equilibria themselves compose, also in the general-sum case.

\begin{theorem}[Compositional equilibrium]\label{thm:seqeq}
Let $\N=\N_1;\N_2$ satisfy \cref{ass:normalform,ass:sequential,ass:discount}. Fix a global equilibrium $\sigma_{[2]}$ of $\N_2$, with value vectors $V^{\sigma_{[2]}}$, and let $\N_1[\sigma_{[2]}]$ be the net $\N_1$ in which each exit place $z$ is terminal with value $v_i(z):=V_i^{\sigma_{[2]}}(\iota(z))$, the value of the entry place of $\N_2$ fused with $z$. Then $(\sigma_{[1]},\sigma_{[2]})$ is a global equilibrium of $\N$ if and only if $\sigma_{[1]}$ is a global equilibrium of $\N_1[\sigma_{[2]}]$.
\end{theorem}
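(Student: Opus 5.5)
The plan is to reduce everything to \cref{thm:consistency}, applied three times: to $\N$, to $\N_2$, and to $\N_1[\sigma_{[2]}]$. I write $\sigma=(\sigma_{[1]},\sigma_{[2]})$ for the joint stationary profile on $P_G=P_G^{(1)}\sqcup P_G^{(2)}$.

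\textbf{Step 1: block structure and values.} By \cref{thm:composition}(ii), no game place of $\N_2$ routes into $\N_1$, so $P^\sigma$ is block upper triangular. The $\N_2$ block of $V^\sigma$ solves $V_{[2]}=\mathbf{r}_{[2]}+\gamma P_{[22]}V_{[2]}$, which involves only $\sigma_{[2]}$ and $\N_2$'s own data. By uniqueness in \cref{lem:policy}, $V^\sigma|_{P_G^{(2)}}=V^{\sigma_{[2]}}$, the value vector of $\sigma_{[2]}$ in $\N_2$ alone.

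For the $\N_1$ block, I would check that the second equation of \eqref{eq:seqP} is exactly the evaluation equation of $\sigma_{[1]}$ in $\N_1[\sigma_{[2]}]$. The argument runs as follows. A routing probability $\Psi^p_{\iota(z)}$ from $p\in P_G^{(1)}$ to an entry game place of $\N_2$ becomes, in $\N_1[\sigma_{[2]}]$, the exit probability $\Xi_{pz}$ to the terminal place $z$, whose value is $v_i(z)=V_i^{\sigma_{[2]}}(\iota(z))$. Routing to genuine terminal places of $\N$ is unchanged. Hence $\gamma P_{[12]}V_{[2]}$ plus the old exit terms equals the new exit term in \eqref{eq:routing}, and so $V^\sigma|_{P_G^{(1)}}=V^{\sigma_{[1]}}_{\N_1[\sigma_{[2]}]}$.

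\textbf{Step 2: augmented games coincide.} Write $W$ for the combined value vector. For $p\in P_G^{(1)}$, the payoff \eqref{eq:Qmatrix} in $\N$ at $W$ equals the payoff in $\N_1[\sigma_{[2]}]$ at $W|_{P_G^{(1)}}$. This holds because the extended value $\overline V_i$ of $\N_1[\sigma_{[2]}]$ at the terminal $z$ is by definition $W_i(\iota(z))$, which is the extended value of $\N$ at $\iota(z)$. For $p\in P_G^{(2)}$, the augmented game in $\N$ at $W$ equals the augmented game of $\N_2$ at $W|_{P_G^{(2)}}$, since by Step 1 successors stay in $\N_2$ or in its exits. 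Players absent from a subnet have no-op actions and zero payoffs, so the player sets cause no discrepancy.

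\textbf{Step 3: conclude.} By \cref{thm:consistency}, $\sigma$ is a global equilibrium of $\N$ if and only if $\sigma_p\in\mathcal{E}_p(V^\sigma)$ for all $p$. On $P_G^{(2)}$ this condition holds automatically. By Steps 1 and 2 it is the condition of \cref{thm:consistency} for $\sigma_{[2]}$ in $\N_2$, which holds because $\sigma_{[2]}$ is an equilibrium there. On $P_G^{(1)}$ the condition is, by Steps 1 and 2, exactly the condition of \cref{thm:consistency} for $\sigma_{[1]}$ in $\N_1[\sigma_{[2]}]$. That is equivalent to $\sigma_{[1]}$ being a global equilibrium of $\N_1[\sigma_{[2]}]$, which gives both directions.

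\textbf{Main obstacle.} I expect the only delicate point to be verifying that $\N_1[\sigma_{[2]}]$ and $\N_2$ satisfy the hypotheses of \cref{thm:consistency}. Two issues need care. First, with $\gamma=1$, acyclicity of $\N$ passes to the subnets, so \cref{ass:discount} is preserved. Second, the bookkeeping of discounting at the fused boundary must be consistent. In $\N$, reaching $\iota(z)$ contributes $\gamma V(\iota(z))$ one stage later, and in $\N_1[\sigma_{[2]}]$ the terminal value $v(z)$ is also discounted by one stage, so the two agree provided the fused exit place introduces no extra stage.
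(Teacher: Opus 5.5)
Your proposal is correct and follows essentially the paper's argument. You reduce to \cref{thm:consistency}, use the block-triangular routing matrix of \eqref{eq:seqP} to identify the values on the $\N_2$ block with $V^{\sigma_{[2]}}$ and on the $\N_1$ block with those of $\sigma_{[1]}$ in $\N_1[\sigma_{[2]}]$, and then match the augmented games place by place. Your extra bookkeeping (genuine terminal places of $\N_1$, no-op players, one-stage discounting at the fused boundary, and inheritance of \cref{ass:discount} by the subnets) makes explicit what the paper's shorter proof leaves implicit.
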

\begin{proof}
By \cref{thm:consistency}, $(\sigma_{[1]},\sigma_{[2]})$ is a global equilibrium of $\N$ if and only if, at every game place, the local profile is an equilibrium of the augmented game at $V^{(\sigma_{[1]},\sigma_{[2]})}$. At places of $\N_2$, the augmented games involve only values at places of $\N_2$ and its own exits, because no edge leads back to $\N_1$. So these conditions hold exactly because $\sigma_{[2]}$ is a global equilibrium of $\N_2$, and $V^{(\sigma_{[1]},\sigma_{[2]})}$ coincides with $V^{\sigma_{[2]}}$ there. At places of $\N_1$, the augmented games involve values at places of $\N_1$ and at the entry places of $\N_2$, and the latter equal the terminal values of $\N_1[\sigma_{[2]}]$. By \eqref{eq:seqP}, the values at places of $\N_1$ also coincide with those of $\N_1[\sigma_{[2]}]$ under $\sigma_{[1]}$. So the conditions at places of $\N_1$ are exactly the conditions of \cref{thm:consistency} for $\sigma_{[1]}$ in $\N_1[\sigma_{[2]}]$.
\end{proof}

\Cref{thm:seqeq} is an \emph{interface contract}. A subnet is summarized, for everything upstream, by the equilibrium values it offers at its entry places. Downstream equilibrium selection is a parameter of the upstream problem, which makes the dependence of upstream behavior on downstream selection explicit. Variant B of running example~I (\cref{sec:re1}) is an instance: the planning game sees $\G_A$ and $\G_R$ only through $v_A$ and $v_R$. This is also the formal counterpart of grounding in multi-resolution games~\cite{yang2025general}.

\paragraph{Conditional choice and comparative statics.} Recall the routing probability $\Psi^p_q$ of \eqref{eq:Psi}.

\begin{proposition}[Monotone comparative statics of the interface]\label{prop:monotone}
Let $V'$ differ from $V$ only in player $i$'s value at one successor $q$ of $p$, with $V_i'(q)\ge V_i(q)$. If $D_{a_{-i}}(\Psi^p_q)\ge0$ for every $a_{-i}$, that is, if $a_i^+$ never lowers the probability of reaching $q$, then
\begin{equation}
  D_{a_{-i}}\big(Q_i^p(V')\big)-D_{a_{-i}}\big(Q_i^p(V)\big)=\gamma\big(V_i'(q)-V_i(q)\big)\,D_{a_{-i}}(\Psi^p_q)\ \ge\ 0 .
\end{equation}
In particular, if $a_i^+$ strictly dominates $a_i^-$ in $\G_p[V]$, it still does in $\G_p[V']$.
\end{proposition}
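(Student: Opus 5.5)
The plan is to compute the difference $Q_i^p(V')-Q_i^p(V)$ entrywise from the second form of \eqref{eq:Qmatrix}, apply the linear operator $D_{a_{-i}}$ of \eqref{eq:Dop}, and then sign the result.

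First, the setup. By \cref{lem:affine}, $Q_i^p(V)$ depends only on player $i$'s values and is affine in them. The hypothesis says that $\overline V'_i$ and $\overline V_i$ agree at every target except $q$. Here I read ``successor $q$'' as a game place, so that $\overline V'_i(q)=V'_i(q)$. If $q$ were a terminal place, the same argument would apply with $v_i$ in place of $V_i$. For every profile $a\in A^p$ this gives
\[
Q_i^p(V')(a)-Q_i^p(V)(a)=\gamma\sum_{q'\in P_G\cup Z}\Psi^p_{q'}(a)\big(\overline V'_i(q')-\overline V_i(q')\big)=\gamma\big(V'_i(q)-V_i(q)\big)\Psi^p_q(a).
\]

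Second, apply $D_{a_{-i}}$ to both sides. Since $D_{a_{-i}}$ is linear, the scalar factor $\gamma(V'_i(q)-V_i(q))$ comes out, and the displayed identity of the proposition follows. Its sign is nonnegative for three reasons:
\begin{itemize}[nosep]
  \item $\gamma>0$ by \cref{ass:discount};
  \item $V'_i(q)-V_i(q)\ge0$ by hypothesis;
  \item $D_{a_{-i}}(\Psi^p_q)\ge0$ for every $a_{-i}$ by hypothesis.
\end{itemize}

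Third, the dominance claim. By \cref{thm:dominance}, strict dominance of $a_i^+$ over $a_i^-$ in $\G_p[V]$ means that $D_{a_{-i}}(Q_i^p(V))>0$ for every $a_{-i}\in A^p_{-i}$. Adding the nonnegative increment from the second step gives $D_{a_{-i}}(Q_i^p(V'))>0$ for every $a_{-i}$. By \cref{thm:dominance} again, $a_i^+$ strictly dominates $a_i^-$ in $\G_p[V']$.

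The main obstacle is only bookkeeping. One has to make sure that changing a single value changes exactly one term of the sum in \eqref{eq:Qmatrix}. Different outcomes with the same successor $q$ are already aggregated in $\Psi^p_q$, so that single term is the whole effect. One also has to read ``successor'' so that the extended value $\overline V_i$ is the one that changes. Beyond this, no fixed-point argument is needed, because $V$ and $V'$ are treated as given, as in \cref{thm:dominance}.
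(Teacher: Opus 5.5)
Your proposal is correct and follows the paper's own proof: you subtract the two instances of \eqref{eq:Qmatrix}, note that only the term for $q$ survives because the outcomes routed to $q$ are already aggregated in $\Psi^p_q$, pull the scalar through the linear operator $D_{a_{-i}}$, and sign the result. Your explicit derivation of the dominance claim and your note on the case where $q$ is terminal spell out what the paper leaves implicit.
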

\begin{proof}
Subtract \eqref{eq:Qmatrix} for $V$ from \eqref{eq:Qmatrix} for $V'$. Only the outcomes routed to $q$ contribute, and $D_{a_{-i}}$ is linear.
\end{proof}

\Cref{prop:monotone} says when improving a branch helps. Raising the value of a successor encourages exactly the actions that make that successor more likely. If an action has zero pivotality for $q$ against some $a_{-i}$, as in variant B of running example~I against $\act{L}$, improving $q$ cannot change that comparison.

\paragraph{Feedback and loop reduction.} A feedback loop turns a backward recursion into a fixed point. For a component entered through one place, the fixed point reduces to a scalar formula.

\begin{proposition}[Loop reduction]\label{prop:loop}
Fix a stationary profile $\sigma$ and a game place $p$, for example the entry of a feedback loop. Let $\tau^+\ge1$ be the first time after $0$ at which play returns to $p$ or leaves the component of $p$. Define
\begin{equation}
  \mathcal{U}_i(p)=\Ex^\sigma_p\Big[\sum_{\ell=0}^{\tau^+-1}\gamma^\ell R_i^{p_\ell}(a_\ell)+\gamma^{\tau^+} \overline V_i^\sigma(p_{\tau^+})\ind\{p_{\tau^+}\neq p\}\Big],\qquad
  \mathcal{R}(p)=\Ex^\sigma_p\big[\gamma^{\tau^+}\ind\{p_{\tau^+}=p\}\big].
\end{equation}
Then $\mathcal{R}(p)\le\gamma<1$ and
\begin{equation}\label{eq:loop}
  V_i^\sigma(p)=\frac{\mathcal{U}_i(p)}{1-\mathcal{R}(p)} .
\end{equation}
For a single place with a self-loop, $\mathcal{R}(p)=\gamma P^\sigma_{pp}$ and $\mathcal{U}_i(p)=\mathbf{r}_i^\sigma(p)+\gamma\sum_{q\neq p}P^\sigma_{pq}V_i^\sigma(q)$.
\end{proposition}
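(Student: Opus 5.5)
The plan is a renewal (first-return) decomposition of the value $V_i^\sigma(p)=U_i(\sigma;p)$ from \eqref{eq:globalpayoff}, applying the strong Markov property of the place-level chain at the stopping time $\tau^+$. Under a stationary profile $\sigma$, the active place $p_\ell$ is a time-homogeneous Markov chain on $P_G\cup Z$ with transition law $\Psi^{p}_q(\sigma_p)$. Since $\tau^+$ is a stopping time for this chain, the future after $\tau^+$ is distributed as a fresh run started at $p_{\tau^+}$.

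Concretely, I will split the discounted sum in \eqref{eq:globalpayoff} at $\tau^+$:
\begin{equation*}
V_i^\sigma(p)=\Ex^\sigma_p\Big[\sum_{\ell=0}^{\tau^+-1}\gamma^\ell R_i^{p_\ell}(a_\ell)\Big]+\Ex^\sigma_p\big[\gamma^{\tau^+}\,\overline V_i^\sigma(p_{\tau^+})\big].
\end{equation*}
To justify this, I apply the strong Markov property at $\tau^+$ and use bounded convergence, since rewards are bounded and $\gamma<1$. On $\{\tau^+=\infty\}$ the factor $\gamma^{\tau^+}$ is read as $0$. If play hits $Z$ before $\tau^+$, then $\tau^+$ is the hitting time and $\overline V^\sigma_i=v_i$ there, so the terminal reward is accounted for. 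I then split the second expectation according to whether $p_{\tau^+}=p$ or $p_{\tau^+}\neq p$. The part with $p_{\tau^+}\neq p$ combines with the reward sum to give $\mathcal{U}_i(p)$. The part with $p_{\tau^+}=p$ equals $V_i^\sigma(p)\,\mathcal{R}(p)$, because $\overline V^\sigma_i(p)=V_i^\sigma(p)$ is a constant. This yields the scalar equation $V_i^\sigma(p)=\mathcal{U}_i(p)+\mathcal{R}(p)V_i^\sigma(p)$.

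Next comes the bound $\mathcal{R}(p)\le\gamma$. Since $\tau^+\ge1$, we have $\gamma^{\tau^+}\le\gamma$, and the indicator is at most one. Hence $1-\mathcal{R}(p)\ge1-\gamma>0$, and dividing gives \eqref{eq:loop}. The main obstacle is the case $\gamma=1$ allowed by \cref{ass:discount}. There the place graph is acyclic, so return to $p$ is impossible and $\mathcal{R}(p)=0$. The bound $\mathcal{R}(p)\le\gamma<1$ should therefore be read under $\gamma<1$, and I will remark that the formula holds trivially in the acyclic case.

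For the self-loop specialization, take $p$ as a singleton in its component. Then $\tau^+=1$ always: after one stage, play is either back at $p$ or outside the component, possibly in $Z$. The definitions then give $\mathcal{R}(p)=\gamma P^\sigma_{pp}$. They also give
\begin{equation*}
\mathcal{U}_i(p)=R_i^p(\sigma_p)+\gamma\sum_{z}\Xi^\sigma_{pz}v_i(z)+\gamma\sum_{q\neq p}P^\sigma_{pq}V_i^\sigma(q),
\end{equation*}
which is $\mathbf{r}_i^\sigma(p)+\gamma\sum_{q\neq p}P^\sigma_{pq}V^\sigma_i(q)$ by \eqref{eq:routing}. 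As a cross-check, this agrees with solving row $p$ of \cref{lem:policy} for $V_i^\sigma(p)$.
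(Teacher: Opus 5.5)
Your proposal is correct and follows the paper's proof. Both decompose the payoff at the first-return/exit time $\tau^+$ to get $V_i^\sigma(p)=\mathcal{U}_i(p)+\mathcal{R}(p)V_i^\sigma(p)$, use $\tau^+\ge1$ to bound $\mathcal{R}(p)\le\gamma$ and solve, and read the self-loop case off $\tau^+=1$. You also fill in points the paper leaves implicit: the event $\{\tau^+=\infty\}$, and the case $\gamma=1$, where acyclicity forces $\mathcal{R}(p)=0$ so that only the weaker bound $\mathcal{R}(p)<1$ is available.
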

\begin{proof}
Decompose the payoff \eqref{eq:globalpayoff} at time $\tau^+$. If play has left the component of $p$, the remaining payoff is $\overline V_i^\sigma(p_{\tau^+})$ by stationarity. If play has returned to $p$, the remaining payoff is $V_i^\sigma(p)$, discounted by $\gamma^{\tau^+}$. Taking expectations gives $V_i^\sigma(p)=\mathcal{U}_i(p)+\mathcal{R}(p)V_i^\sigma(p)$. Since $\tau^+\ge1$, $\mathcal{R}(p)\le\gamma<1$, so we may solve for $V_i^\sigma(p)$. The single-place case is $\tau^+=1$.
\end{proof}

\Cref{eq:loop} is the renewal form of \eqref{eq:neumann}. Its denominator $1-\mathcal{R}(p)$ measures how strongly the loop discounts repeated attempts. As $\mathcal{R}(p)\to1$, the value is dominated by repetition, which is the quantitative face of the livelock in \cref{prop:termination}. The value of the lateral game in running example~II, \eqref{eq:VL}, is the single-place case.

\paragraph{Synchronization.} A join turns independent branches into a conjunction.

\begin{proposition}[Fork--join probabilities]\label{prop:forkjoin}
Consider $\mathrm{join}(\N_1,\N_2;\N_3)$ preceded by a fork that starts $\N_1$ and $\N_2$ together. Assume the branches share no places, and their games are resolved independently under stationary profiles. Let $\theta_k$ be the probability that branch $k$ ends in the pass place feeding the join, computed by \eqref{eq:absorb} on the branch's own routing matrix. Then the join fires with probability $\theta_1\theta_2$. Exactly one pass token is left without its partner with probability $\theta_1(1-\theta_2)+\theta_2(1-\theta_1)$.
\end{proposition}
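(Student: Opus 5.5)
The plan is to reduce the join event to the conjunction of two independent absorption events, one per branch, and then read off the probabilities from \eqref{eq:absorb}. First, I fix the fork's output. It places one active-instance token at the entry of $\N_1$ and one at the entry of $\N_2$, with matching identifiers. Each branch then evolves as its own sequential OCGPN: its pass place, its fail exits, and any other exits play the role of terminal places $Z$ in the sense of \cref{def:successor}. Because the branches share no places, each firing of a transition of $\N_k$ consumes and produces only tokens on places of $\N_k$. The incidence matrix of the composite restricted to the branches is therefore block diagonal, as in \eqref{eq:parallelC} with $H=\emptyset$. So the marking of one branch never enables or disables a transition of the other, and the scheduler's interleaving affects only the order of events, not which output columns fire within a branch.

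Second, I justify that the branch outcomes are independent. By hypothesis, the games in each branch are resolved independently under stationary profiles. I therefore model the randomness as two independent sources, one driving each branch's sequence of profiles $a\sim\sigma_p$ and outcomes $\omega\sim\phi_p$. Since routing depends only on $\omega$ (\cref{ass:sequential}) and the profiles are stationary, the sequence of places visited in branch $k$ is a Markov chain with transient block $P_k^\sigma$ and exit block $\Xi_k^\sigma$. This holds regardless of how the scheduler interleaves the two branches. The event $E_k$ that branch $k$ is absorbed in its pass place is measurable with respect to branch $k$'s randomness alone. By \eqref{eq:absorb}, it has probability $\theta_k=[(I-P_k^\sigma)^{-1}\Xi_k^\sigma]_{e_k,\mathrm{pass}_k}$, where $e_k$ is the entry place of branch $k$. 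Independence of the two sources gives $\Pr[E_1\cap E_2]=\theta_1\theta_2$.

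Third, I identify the events with the firing of the join. The join transition $j$ needs one token in each pass place, with matching identifiers. Its guard holds because both tokens descend from the same forked instance. So $j$ fires if and only if $E_1\cap E_2$ occurs. Here I will also note that tokens in the pass places are not consumed by any other transition, since the branches are disjoint and $j$ is the only consumer; without this, a pass token could be diverted. Exactly one orphaned pass token corresponds to the symmetric difference $(E_1\setminus E_2)\cup(E_2\setminus E_1)$. Its probability is $\theta_1(1-\theta_2)+\theta_2(1-\theta_1)$ by independence, because the two pieces are disjoint.

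The main obstacle is the independence step. The global process is concurrent, so \cref{ass:sequential} does not hold for the composite, and the scheduler $\kappa$ could in principle correlate the branches through the state it observes. I would handle this by a coupling argument: pre-sample, for each branch, i.i.d.\ streams of profile and outcome draws indexed by visit number, and show that any interleaving chosen by $\kappa$ produces the same per-branch place trajectories. Independence is then inherited from the streams. A related subtlety is nontermination. If $\rho(P_k^\sigma)=1$ for some branch, the complement of $E_k$ includes livelock. I would accept this, since ``ends in the pass place'' is then exactly the absorption probability $\theta_k$, which is still given by the series in \eqref{eq:absorb}.
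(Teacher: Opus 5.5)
Your proposal is correct and follows the paper's own argument. Independence of the two branches gives the product distribution on their absorbing outcomes; the join fires exactly when both branches end in their pass places; and exactly one certificate is stranded exactly when one branch passes and the other does not. Your coupling of per-branch draw streams, and your observation that the join is the only consumer of the pass places, make explicit what the paper's three-line proof leaves implicit.

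The one thing to tighten is your aside on $\rho(P_k^\sigma)=1$, which goes beyond the statement. In that case \eqref{eq:absorb} must be read as the entrywise series $\sum_{m\ge0}(P_k^\sigma)^m\Xi_k^\sigma$, because $I-P_k^\sigma$ is then singular. You would also need a scheduler that eventually fires every enabled transition. Otherwise a livelocking branch can starve the other one, and your claim that interleaving leaves the per-branch trajectories intact holds only for prefixes.
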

\begin{proof}
By independence, the pair of absorbing outcomes of the two branches has the product distribution. The join fires if and only if both branches end in their pass places. Exactly one pass token waits forever if and only if exactly one branch passes.
\end{proof}

The second probability quantifies the structural defect found in running example~I (\cref{sec:re1-composition}): a pass certificate stranded by its partner's failure. It also shows that synchronization is a strategic bottleneck. If each branch's players choose effort, a branch's contribution to joint success is scaled by the other branch's pass probability. This is a complementarity that a single-branch analysis misses.

\paragraph{Parallel composition.} Without shared places, parallel subnets can be analyzed separately.

\begin{proposition}[Independent parallel composition]\label{prop:parallel}
Let $\N=\N_1\parallel_\emptyset\N_2$, with additively separable payoffs, where each subnet discounts its own stages (separate clocks), and with any scheduler that eventually fires every enabled transition. If $\sigma_k$ is a global equilibrium of $\N_k$ for $k=1,2$, then playing $\sigma_1$ in $\N_1$ and $\sigma_2$ in $\N_2$ is an equilibrium of $\N$. Each player's payoff is $U_i=U_i^{(1)}+U_i^{(2)}$, and the probabilities of the pairs of terminal places factorize.
\end{proposition}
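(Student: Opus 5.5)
The plan is to reduce the composite to a product structure and then verify the equilibrium property by a best-reply argument in each coordinate separately. Since $H=\emptyset$, \cref{thm:composition}(i) gives a block-diagonal incidence matrix $C=\mathrm{diag}(C_1,C_2)$. Every firing of $\N$ is then a firing of exactly one part, and it changes only that part's marking. So the global state of $\N$ is a pair $(s_1,s_2)$. The scheduler $\kappa$ only interleaves events of the two parts. Because each subnet discounts its own stages, the interleaving does not change the discount applied to a given stage of $\N_k$: that discount is $\gamma^{\ell}$, where $\ell$ counts stages of $\N_k$ only. The fairness hypothesis (every enabled transition eventually fires) guarantees that each part's own sequence of events is eventually executed in full. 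The first step is therefore to show that the projection of any execution of $\N$ onto $\N_k$ is an execution of $\N_k$, and that under the profile $(\sigma_1,\sigma_2)$ its law is exactly the law of $\N_k$ under $\sigma_k$, whatever the scheduler does. To prove this, I would couple the outcome draws: attach to the $\ell$-th resolution of each place of $\N_k$ an independent draw from $\phi_p$. The scheduler then only reorders the events without altering which outcomes occur.

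Given this projection lemma, additive separability gives $U_i=U_i^{(1)}+U_i^{(2)}$. The factorization of terminal pairs follows because the two projected chains are driven by independent randomness, namely the separate action and outcome draws. This is the same independence argument as in \cref{prop:forkjoin}.

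For the equilibrium claim, fix a player $i$ and an arbitrary deviation $\sigma_i'$ in $\N$. The deviation may condition on the history of both parts. Its payoff splits as $U_i^{(1)}(\sigma_i',\sigma_{-i})+U_i^{(2)}(\sigma_i',\sigma_{-i})$. For the $\N_1$ term, I would treat the $\N_2$-history, together with the scheduler's interleaving, as an exogenous randomization device. Its law does not depend on $i$'s actions in $\N_1$, because the others play $\sigma_{2}$, which is stationary in $\N_2$, and $\N_2$ has no places shared with $\N_1$. Conditional on any realization of this device, $i$'s play in $\N_1$ is a behavior strategy of $\N_1$ against $\sigma_{1,-i}$. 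By the equilibrium property of $\sigma_1$, which holds against all behavior strategies (\cref{def:equilibria}), that behavior strategy earns at most $U_i^{(1)}(\sigma_1)$. Averaging over the device bounds the $\N_1$ term. The $\N_2$ term is handled symmetrically, and adding the two bounds gives the claim.

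The main obstacle is the projection lemma in the presence of a scheduler that may depend on the joint state. I must make sure that $\kappa$ cannot correlate the parts or leak information that changes the law of one part's trajectory. With separate clocks and no shared places, $\kappa$ affects only the order of events. To make this rigorous, I would use the coupling with pre-drawn outcome sequences indexed by per-part stage counts. Under that coupling each part's path is a measurable function of its own draws and its players' strategy, independent of the interleaving.
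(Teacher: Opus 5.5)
Your overall route is the paper's: with no shared places neither subnet enables or disables the other, the scheduler only interleaves, separate clocks make each subnet's payoff stream distributed as in isolation, and factorization is independence. Your per-stage coupling is a good way to make the projection and factorization claims rigorous. Pre-draw the players' mixing as well as the outcomes, since $\phi_p$ depends on $a$.

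\textbf{The gap is in the deviation step.} There you go beyond the paper's one-line argument, which only considers a player who deviates within one subnet, to handle a $\sigma_i'$ that conditions on both histories. You assert that the $\N_2$-history together with the scheduler's interleaving is an exogenous device whose law does not depend on $i$'s play in $\N_1$, because the others play the stationary $\sigma_2$. That is false in the generality of the statement, for two reasons.
\begin{itemize}
  \item If $i$ plays in both subnets, $i$'s own deviating play in $\N_2$ can react to $\N_1$ outcomes.
  \item $\kappa(\xi\mid s)$ may depend on the joint marking, so the interleaving can encode $\N_1$ outcomes. For instance, the scheduler might fire an $\N_2$ event only right after an $\N_1$ success.
\end{itemize}
Conditioning on a full realization of such a device changes the conditional law of the others' actions and of the outcomes in $\N_1$, and it can even reveal later $\N_1$ outcomes. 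So $i$'s conditional play is not a behavior strategy of $\N_1$ facing $\sigma_{1,-i}$ and $\phi$. The pointwise bound by $U_i^{(1)}(\sigma_1)$ can then fail for some realizations. The averaged inequality is still true, but your argument does not prove it.

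\textbf{The repair uses your own coupling.} Condition not on realized histories but on the primitive randomness outside $\N_1$: the pre-drawn draws that drive $\N_2$, player $i$'s private randomization, and the scheduler's randomization. By construction these are independent of the draws that drive $\N_1$. Given a realization, an induction over the joint execution shows that every $\N_2$ event, every scheduling choice, and every action of $i$ is a deterministic function of the $\N_1$ history realized so far. Hence $i$'s play in $\N_1$ is a pure behavior strategy of $\N_1$, while the others' actions and the outcomes there keep their laws $\sigma_{1,-i}$ and $\phi$. Separate clocks then bound the conditional $\N_1$ payoff by $U_i^{(1)}(\sigma_1)$, and averaging closes the step.

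Equivalently, \cref{thm:consistency} gives $Q_i^p(V^{\sigma_1})(\alpha,\sigma_{1,-i,p})\le V_i^{\sigma_1}(p)$ for every mixed $\alpha$. This inequality makes $\sum_{m<\ell}\gamma^m R_i^{p_m}(a_m)+\gamma^\ell\,\overline V_i^{\sigma_1}(p_\ell)$, indexed by $\N_1$-stages and stopped at absorption, a supermartingale for the joint filtration, which yields the same bound.
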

\begin{proof}
With no shared places, the firings of $\N_1$ neither enable nor disable those of $\N_2$. The scheduler changes only the interleaving, not the sequence of stages within each subnet. Under separate clocks, each subnet's payoff stream is therefore distributed as in isolation. A player who deviates in one subnet changes only that subnet's payoff, and gains nothing there because $\sigma_k$ is an equilibrium. Factorization follows from independence.
\end{proof}

With shared places, \cref{thm:composition}(i) still glues structural invariants. Strategically, however, the game played at one place may depend on the marking of the shared places, as in the coupling through security-operations-center (SOC) capacity in running example~II (\cref{sec:re2}). \Cref{ass:sequential} then fails, and values must be computed on the product state (\cref{rem:beyond}).

\paragraph{Dynamic instantiation and state size.} Instantiation is the operator through which earlier outcomes reshape later games, such as the defect count in running example~I or a stolen credential in running example~II.

\begin{proposition}[Local cost of instantiation]\label{prop:instantiation}
Suppose an artifact with values in $\Theta$ is produced at a game place $p_{\mathrm{prod}}$ and read at a game place $p_{\mathrm{use}}$. Let the \emph{artifact window} $\mathcal{W}\subseteq P_G$ be the set of game places that lie on a path of the place graph from $p_{\mathrm{prod}}$ to $p_{\mathrm{use}}$, excluding $p_{\mathrm{prod}}$. Then the unfolded place graph of the Markov-state analysis has
\begin{equation}
  |P_G|+(|\Theta|-1)\,|\mathcal{W}|
\end{equation}
game places. By contrast, an encoding that carries the artifact in the state at every game place has $|P_G|\,|\Theta|$ states.
\end{proposition}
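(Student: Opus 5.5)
The plan is to construct the Markov-state unfolding of the place graph explicitly and count its vertices by partitioning $P_G$ into the window $\mathcal{W}$ and its complement. The state at a game place must contain the artifact value $\vartheta\in\Theta$ exactly when that value can still affect future play. By the dynamic-instantiation operator of \cref{def:operators}(f), a place whose local game reads $\vartheta$ is replaced by one instance $p_\vartheta$ per value. I will show that this is needed at, and only at, the places of $\mathcal{W}$. Every other place keeps a single copy.

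The first step is to show that one copy suffices outside the window. Take a game place $q\notin\mathcal{W}$. Either $q=p_{\mathrm{prod}}$, where the artifact has not yet been produced, so $\vartheta$ is not part of the state. Or $q$ lies on no path from $p_{\mathrm{prod}}$ to $p_{\mathrm{use}}$. In the second case one of two things holds. If $q$ is not reachable from $p_{\mathrm{prod}}$, play reaches $q$ before or without producing the artifact. If $p_{\mathrm{use}}$ is not reachable from $q$, then no local game, routing probability $\Psi$, or continuation value from $q$ onward depends on $\vartheta$. In both cases the augmented payoffs $Q_i^q(V)$ of \cref{def:augmented} are the same for every $\vartheta$, because \eqref{eq:Qmatrix} only involves values at places reachable from $q$, as in the proof of \cref{thm:scc}. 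So $\vartheta$ can be dropped from the Markov state at $q$.

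The second step covers places in the window. If $q\in\mathcal{W}$, then $q$ is reachable from $p_{\mathrm{prod}}$ and $p_{\mathrm{use}}$ is reachable from $q$. The token arriving at $q$ must therefore carry $\vartheta$, since it is folded into the active token's color by \cref{ass:sequential}. The continuation from $q$ reaches $p_{\mathrm{use}}$, whose game depends on $\vartheta$. So $q$ is replaced by $|\Theta|$ instances $q_\vartheta$. Routing from $p_{\mathrm{prod}}$ sends outcome $\omega$ to the instance indexed by the produced value. Routing inside $\mathcal{W}$ preserves the index. Routing out of $\mathcal{W}$ goes to the single-copy places. Counting gives $(|P_G|-|\mathcal{W}|)+|\Theta|\,|\mathcal{W}|=|P_G|+(|\Theta|-1)|\mathcal{W}|$. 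The naive encoding carries $\vartheta$ at all $|P_G|$ places and so has $|P_G|\,|\Theta|$ states.

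The main obstacle is making the first step rigorous. A place $q$ reachable from $p_{\mathrm{prod}}$ that cannot reach $p_{\mathrm{use}}$ is harmless only if $\vartheta$ affects nothing except the game at $p_{\mathrm{use}}$. This rules out, for instance, terminal values that depend on $\vartheta$. I would state this explicitly as the reading of ``read at $p_{\mathrm{use}}$''. With that assumption, I would close the argument by a backward induction on the condensation of $\mathcal{D}$, applying \cref{thm:scc} to conclude that values outside $\mathcal{W}$ are independent of $\vartheta$.
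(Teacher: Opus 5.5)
Your proposal is correct and follows the paper's own argument, spelled out in more detail. The artifact is folded into the active token's color only between its production and its use. So exactly the places of $\mathcal{W}$ need $|\Theta|$ instances, and every other place keeps one; here $\mathcal{W}$ is read, as you do, as the places reachable from $p_{\mathrm{prod}}$ that can still reach $p_{\mathrm{use}}$. You also usefully make explicit that $\vartheta$ must affect nothing outside $p_{\mathrm{use}}$.

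Two small refinements. First, your claim that the artifact ``has not yet been produced'' at $p_{\mathrm{prod}}$ is another implicit hypothesis of the statement, shared with the paper: play must not return to $p_{\mathrm{prod}}$ while carrying the artifact. Second, the appeal to \cref{thm:scc}, which is stated for the zero-sum case, is unnecessary. The set of places that cannot reach $p_{\mathrm{use}}$ is closed under successors, so the continuation game from such a $q$ is literally the same for every $\vartheta$, in the general-sum case too.
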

\begin{proof}
Under \cref{ass:sequential}, the artifact token is folded into the active token's color from its production until its use. So exactly the places in $\mathcal{W}$ must distinguish its $|\Theta|$ values, and every other place keeps a single instance.
\end{proof}

In running example~II (\cref{sec:re2}), the credential is produced at $p_L$ and read at the adjacent $p_A$, so $\mathcal{W}=\{p_A\}$, and only the asset game doubles. The artifact is still part of the Markov state where it matters: colored tokens do not remove the information an exact analysis needs, but they \emph{localize} it to the artifact window.

\paragraph{A compositional analysis procedure.} Taken together, these results give a procedure for analyzing a composite net from its parts.
\begin{enumerate}[label=(\arabic*),nosep]
  \item Analyze each part structurally. Compute its P-invariants and Farkas certificates, and glue them across fused places by \cref{lem:fusion}.
  \item Order the parts by the condensation of the place graph. Solve the downstream parts first, and pass their entry values upstream (\cref{thm:seqeq}).
  \item Within a part, solve loops by \cref{thm:contraction} or \cref{prop:loop}, joins by \cref{prop:forkjoin}, and independent parallel parts separately (\cref{prop:parallel}).
  \item Use \cref{prop:monotone} and \cref{thm:dominance} to redesign interfaces where the selected equilibrium is undesirable, and \cref{prop:instantiation} to control the cost of carrying artifacts.
\end{enumerate}

\section{Running example I: strategic maintenance}
\label{sec:re1}

The first running example is cooperative in structure: all players are on the system's side, but their incentives conflict. It illustrates each part of \cref{sec:theory} in turn. The central lesson is that the \emph{shape} of the outcome map, not the size of downstream rewards, decides whether incentives can be fixed.

\subsection{Model}
\label{sec:re1-model}

A manufacturing machine shows an anomalous vibration signature. A plant operator $O$ and a contracted maintenance provider $S$ meet at a \emph{planning game} $\G_P$. Each chooses high preparation $\act{H}$ (full diagnostic effort, accurate telemetry, a reserved production window) or low preparation $\act{L}$ (cutting effort). High preparation is costly, but joint high preparation improves the readiness of the repair. The stage payoffs $(R_O^P,R_S^P)$ to the operator and the provider are those in \cref{tab:stage}. For these payoffs alone, $\act{L}$ strictly dominates $\act{H}$ for each player: against $\act{H}$, low effort yields $3>2$; against $\act{L}$, it yields $0>-1$. So $(\act{L},\act{L})$ is the unique \emph{myopic} equilibrium, a prisoner's dilemma.

\begin{table}[ht]
\centering
\caption{Immediate (stage) payoffs $R^P$ of the planning game $\G_P$. The numbers are illustrative, not empirically estimated.}
\label{tab:stage}
\begin{tabular}{c|cc}
\toprule
Operator $\backslash$ Provider & $\act{H}$ & $\act{L}$\\\midrule
$\act{H}$ & $(2,2)$ & $(-1,3)$\\
$\act{L}$ & $(3,-1)$ & $(0,0)$\\\bottomrule
\end{tabular}
\end{table}

We consider two variants that share this stage game and differ only in \emph{what the outcome of $\G_P$ unlocks downstream}.

\paragraph{Variant A: graded routing.} Let $k\in\{0,1,2\}$ be the number of players who choose $\act{L}$. The outcome token is $\tok{\mathrm{job},\mathrm{evidence},k}$. If $k=0$, transition $t_V$ routes it to a joint verification game $\G_V$. Otherwise $t_R$ routes it to a corrective-recovery game instantiated with the defect count $k$, which we denote $\G_R^{(k)}$ (\cref{tab:routingA}). This is dynamic game instantiation with parameter $\vartheta=k$. After verification, an approval transition $t_a$ leads to the guarded execution stage. In $\G_V$, both parties choose \emph{Complete} ($d_i=0$) or \emph{Delay} ($d_i=1$). In $\G_R$, they choose \emph{Remediate} ($d_i=0$) or \emph{Stall} ($d_i=1$). As an illustrative specification, let
\begin{equation}\label{eq:downstreamA}
  u_i^{V}(d) = 4 - 3d_i - d_{-i},\qquad
  u_i^{R}(d;k) = -2k - 3d_i - d_{-i}.
\end{equation}

\begin{table}[ht]
\centering
\caption{Outcome-guarded routing in variant A.}
\label{tab:routingA}
\begin{tabular}{lll}
\toprule
Current game & Outcome guard & Transition / next game\\\midrule
$\G_P$ & $(\act{H},\act{H})$, $k=0$ & $t_V\to\G_V$; token $\tok{\mathrm{job},\mathrm{evidence},k{=}0}$\\
$\G_P$ & $(\act{H},\act{L})$ or $(\act{L},\act{H})$, $k=1$ & $t_R\to\G_R$; token $\tok{\mathrm{job},\mathrm{incomplete},k{=}1}$\\
$\G_P$ & $(\act{L},\act{L})$, $k=2$ & $t_R\to\G_R$; token $\tok{\mathrm{job},\mathrm{incomplete},k{=}2}$\\
$\G_V$ & verification complete & $t_a\to$ approval gate; execution only if safe\\
$\G_R$ & remediation complete & terminal (rework ends the episode)\\\bottomrule
\end{tabular}
\end{table}

\paragraph{Variant B: binary routing with a rework loop.} Now the outcome is $\omega_P=\mathrm{ready}$ only under $(\act{H},\act{H})$; all other profiles give $\omega_P=\mathrm{notReady}$. A ready token activates an \emph{authorization game} $\G_A$; a not-ready token activates a recovery game $\G_R$, whose repair token returns to $\G_P$ (\cref{fig:examplenet}). In $\G_A$, the operator and an independent verifier have distinct objectives. The operator may request execution. The verifier authorizes only if a machine-state predicate $\mathrm{Safe}(\zeta)$ and the required provenance certificate are both satisfied. Execution is controlled by two separate mechanisms. The authorization token produced by $r_A$ is in the \emph{preset} of the execution transition $t_E$, so execution needs an available authorization. The evidence predicate $\mathrm{Safe}(\zeta)\wedge\mathrm{cert}(\zeta)$, where $\zeta$ is the evidence record bound to $t_E$, is a \emph{guard} on $t_E$, so execution also needs valid evidence; figures abbreviate it as $[\mathrm{Safe}\wedge\mathrm{cert}]$. The verifier's incentives determine whether authorization is granted, but they cannot bypass either mechanism. In $\G_R$, the operator and provider renegotiate corrective actions, additional diagnostics, or task reassignment. \Cref{sec:re1-strategic} first compares incentives at $\G_P$ for \emph{prescribed} continuation values $v_A$ and $v_R$, which is a local illustration. \Cref{sec:re1-values} then solves the cyclic global game with $\gamma<1$, as \cref{ass:discount} requires.

\begin{figure}[ht]
\centering
\begin{tikzpicture}[x=0.96cm,y=1cm]
  \node[place,label={[lbl]above:work order}] (s0) at (0,0) {}; \node[ctok=rescol] at (s0) {};
  \node[transition,label={[lbl]below:$t_0$}] (t0) at (1.3,0) {};
  \node[gplace,label={[glbl]above:$\G_P$\\[-1pt]plan}] (gp) at (2.6,0) {};
  \node[rtrans,label={[lbl]below:$r_P$}] (rp) at (3.9,0) {};
  \node[place,label={[lbl]below:$q_P$}] (op) at (5.1,0) {};
  \coordinate (fan) at (5.85,0);
  \node[transition,label={[guard]above:$[\mathrm{ready}]$},label={[lbl]below:$t_A$}] (ta) at (6.7,1.8) {};
  \node[gplace,label={[glbl]below:$\G_A$\\[-1pt]authorize}] (ga) at (8.0,1.8) {};
  \node[rtrans,label={[lbl]below:$r_A$}] (ra) at (9.3,1.8) {};
  \node[place,label={[lbl,align=center]below:auth\\[-1pt]token}] (au) at (10.5,1.8) {};
  \node[transition,label={[lbl]below:$t_E$}] (te) at (11.8,1.8) {};
  \node[gplace,label={[glbl]below:$\G_E$\\[-1pt]execute}] (ge) at (13.1,1.8) {};
  \node[place,tokens=1,label={[lbl]left:evidence}] (ev) at (11.8,3.4) {};
  \node[guard,anchor=west] at (12.0,2.65) {$[\mathrm{Safe}\wedge\mathrm{cert}]$};
  \node[font=\scriptsize,text=okcol!55!black,anchor=south] at (8.0,2.3) {$V=(5,5)$};
  \node[transition,label={[guard]below:$[\mathrm{notReady}]$},label={[lbl]above:$t_R$}] (tr) at (6.7,-1.8) {};
  \node[gplace,label={[glbl]below:$\G_R$\\[-1pt]recover}] (gr) at (8.0,-1.8) {};
  \node[rtrans,label={[lbl]below:$r_R$}] (rr) at (9.3,-1.8) {};
  \node[place,label={[lbl]below:repaired}] (rep) at (10.5,-1.8) {};
  \node[transition,label={[lbl]below:$t_\ell$}] (tl) at (11.8,-1.8) {};
  \node[font=\scriptsize,text=badcol!80!black,anchor=south] at (8.0,-1.3) {$V=(0,0)$};
  \foreach \u/\v in {s0/t0,t0/gp,gp/rp,rp/op,ta/ga,ga/ra,ra/au,au/te,te/ge,tr/gr,gr/rr,rr/rep,rep/tl}
    \draw[arc] (\u)--(\v);
  \draw[arc] (op.east) -- (fan) |- (ta);
  \draw[arc] (op.east) -- (fan) |- (tr);
  \draw[arc] (ev) -- (te);
  \draw[arc,badcol] (tl.east) -- ++(0.45,0) |- (2.6,-3.35) -- (gp.south);
  \node[lbl,text=badcol,anchor=north] at (7.2,-3.4) {rework: the repair token updates budget and equipment condition};
\end{tikzpicture}
\caption{The OCGPN of variant B. The success path (top) needs both the \emph{ready} outcome and a guard that checks independent evidence, so incentives alone cannot fire $t_E$. The recovery path (bottom) closes a rework loop back to $\G_P$.}
\label{fig:examplenet}
\end{figure}

\paragraph{Semantics (\cref{sec:semantics}).} \Cref{fig:trace} traces one successful execution of variant B through the five kinds of object that the semantics keeps apart: (a) \emph{choices}, (b) \emph{outcomes}, (c) \emph{token production}, (d) \emph{enabled transitions}, and (e) \emph{realized game instances}. Keeping them distinct avoids the misleading claim that an outcome alone fires an arbitrary transition even when the required resources are absent.

\begin{figure}[ht]
\centering
\begin{tikzpicture}[x=2.33cm,y=0.95cm,
  ev/.style={draw,rounded corners=2pt,fill=white,font=\scriptsize,align=center,inner sep=2pt,minimum height=6mm,text width=2.0cm}]
  \foreach \y/\n/\c in {0/{(a) choices}/gamecol,-1/{(b) outcomes}/gamecol,-2/{(c) tokens}/rescol,-3/{(d) transitions}/rescol,-4/{(e) game instances}/okcol}{
    \fill[\c!7] (-0.05,\y-0.45) rectangle (5.45,\y+0.45);
    \node[font=\footnotesize\bfseries,anchor=east] at (-0.1,\y) {\n};
  }
  \foreach \x/\s in {0.55/1,1.65/2,2.75/3,3.85/4,4.95/5} \node[font=\scriptsize,text=inkcol] at (\x,0.75) {step \s};
  \node[ev] (a1) at (0.55,0) {$O{:}\act{H}$,\ $S{:}\act{H}$};
  \node[ev] (b1) at (0.55,-1) {$\omega_P=\mathrm{ready}$};
  \node[ev] (c2) at (1.65,-2) {$\tok{\mathrm{wo}7,\mathrm{ready}}$};
  \node[ev] (d2) at (1.65,-3) {$r_P$ fires};
  \node[ev] (d3) at (2.75,-3) {$t_A$ fires\\ $[\mathrm{ready}]$};
  \node[ev] (e3) at (2.75,-4) {$\G_A(\mathrm{wo}7)$};
  \node[ev] (a4) at (3.85,0) {verifier\\authorizes};
  \node[ev] (c4) at (3.85,-2) {$\tok{\mathrm{wo}7,\mathrm{auth}}$};
  \node[ev] (d5) at (4.95,-3) {$t_E$ fires\\ $[\mathrm{Safe}\wedge\mathrm{cert}]$};
  \node[ev] (e5) at (4.95,-4) {$\G_E(\mathrm{wo}7)$};
  \node[ev] (e1) at (0.55,-4) {$\G_P(\mathrm{wo}7)$};
  \foreach \u/\v in {a1/b1,d2/c2,d3/e3,a4/c4,d5/e5} \draw[->,thick,inkcol!70] (\u)--(\v);
  \draw[->,thick,inkcol!70] (b1.south) |- (d2.west); \draw[->,thick,inkcol!70] (c2.east) -| (d3.north); \draw[->,thick,inkcol!70] (c4.east) -| (d5.north);
  \draw[->,thick,inkcol!70] (e1.west) -- ++(-0.08,0) |- (a1.west);
  \draw[->,thick,inkcol!70] (e3.east) -| ($(a4.west)+(-0.1,0)$) -- (a4.west);
\end{tikzpicture}
\caption{Swim-lane trace of one successful execution of variant B. Each arrow is a causal dependency. Choices produce outcomes, outcomes produce tokens, tokens (together with guards) enable transitions, and transitions create the next game instance. The work-order identifier $\mathrm{wo}7$ in the token color ties the instances together.}
\label{fig:trace}
\end{figure}

\subsection{Structural analysis}
\label{sec:re1-structural}

Unfold the net of \cref{fig:examplenet} by splitting the result place $q_P$ into $q_P^{\mathrm{r}}$ (ready) and $q_P^{\mathrm{n}}$ (not ready), and $r_P$ into the columns $r_P^{\mathrm{r}}$ and $r_P^{\mathrm{n}}$. The routing transitions are $t_A$ (guard $[\mathrm{ready}]$) and $t_R$ (guard $[\mathrm{notReady}]$), and $t_\ell$ returns the repair token to $\G_P$.

\paragraph{Conservation.} Let $y$ be $1$ on every place that can hold the work-order token and $0$ on the evidence place. Every column moves the work-order token from one such place to another, so $y^\top C=0$. By \cref{prop:invariants}(i), there is always exactly one work-order token.

\paragraph{Safety.} The net provides three distinct guarantees, each independent of the players' strategies.
\begin{enumerate}[label=(S\arabic*),nosep]
  \item \emph{Execution requires an authorization token.} The authorization place is in the preset of $t_E$, and only $r_A$ produces into it. The rows of the state equation \eqref{eq:state} for the authorization and execution places read
  \begin{equation}
    M(\mathrm{auth}) = n_{r_A}-n_{t_E}\ge 0,\qquad M(\G_E)=n_{t_E},
  \end{equation}
  so every reachable marking satisfies $n_{t_E}\le n_{r_A}$: executions never outnumber authorizations.
  \item \emph{Execution requires valid evidence.} The guard $[\mathrm{Safe}(\zeta)\wedge\mathrm{cert}(\zeta)]$ depends only on the binding, so the unfolding contains no column of $t_E$ for evidence that fails it. A marking with an execution token bound to invalid evidence therefore has a zero row in $C$ and no producing column, and the certificate $y=\mathbf{e}_{(\G_E,\,\mathrm{invalid})}$ of \cref{prop:farkas} proves it unreachable.
  \item \emph{Authorizations are not duplicated.} The authorization token carries the work-order identifier, so the authorization place is one of the work-order places of the conservation invariant above. Hence $M(\mathrm{auth})\le1$ at every reachable marking.
\end{enumerate}
\Cref{fig:safety} shows the corresponding reachability picture.

\begin{figure}[ht]
\centering
\begin{tikzpicture}[x=1cm,y=1cm,
  mk/.style={circle,draw=inkcol!80,line width=0.7pt,fill=white,font=\footnotesize,inner sep=0pt,minimum size=7mm},
  e/.style={->,line width=0.7pt,draw=inkcol}]
  \draw[line width=0.7pt,draw=inkcol!60,fill=softgray,rounded corners=6pt] (-0.5,-2.45) rectangle (13.4,2.75);
  \node[font=\footnotesize\bfseries,anchor=north west] at (-0.4,2.68) {all markings $\M$};
  \draw[line width=0.8pt,draw=okcol!70!black,fill=okcol!7,rounded corners=6pt] (-0.15,-2.05) rectangle (7.15,2.0);
  \node[font=\footnotesize,text=okcol!45!black,anchor=north] at (3.5,1.92) {reachable markings, all satisfying $I(M)$};
  \draw[line width=0.8pt,draw=badcol!80!black,fill=badcol!7,rounded corners=6pt] (9.9,-1.1) rectangle (13.05,1.75);
  \node[font=\footnotesize\bfseries,text=badcol!80!black,anchor=north] at (11.475,1.65) {bad set $B$};
  \node[font=\scriptsize,text=badcol!80!black,align=center] at (11.475,0.1) {execution token\\without authorization};
  \node[mk,fill=rescol!15] (m0) at (0.75,0) {$M_0$};
  \node[mk] (m1) at (2.6,0.8) {$M_1$};
  \node[mk] (m3) at (4.3,0.8) {$M_3$};
  \node[mk] (m5) at (6.0,0.8) {$M_5$};
  \node[mk] (m2) at (2.6,-0.8) {$M_2$};
  \node[mk] (m4) at (4.3,-0.8) {$M_4$};
  \draw[e] (m0.north) |- node[pos=0.72,above,font=\scriptsize]{$(\act{H},\act{H})$} (m1);
  \draw[e] (m0.south) |- node[pos=0.72,below,font=\scriptsize]{$(\act{L},\act{L})$} (m2);
  \draw[e] (m1) -- (m3); \draw[e] (m3) -- (m5); \draw[e] (m2) -- (m4);
  \draw[e] (m4.south) -- (4.3,-1.6) -- node[below,font=\scriptsize,pos=0.55]{rework} (0.12,-1.6) -- (0.12,0) -- (m0.west);
  \node[transition,fill=badcol!85!black] (tE) at (8.5,0.8) {};
  \draw[->,line width=0.8pt,dashed,draw=badcol] (m5) -- (tE);
  \draw[->,line width=0.8pt,dashed,draw=badcol] (tE) -- (9.9,0.8);
  \draw[line width=1.2pt,draw=badcol] ($(tE)+(-0.22,-0.22)$) -- ($(tE)+(0.22,0.22)$);
  \draw[line width=1.2pt,draw=badcol] ($(tE)+(-0.22,0.22)$) -- ($(tE)+(0.22,-0.22)$);
  \node[font=\scriptsize,text=badcol!85!black,anchor=south] at (8.5,1.25) {$t_E$ without auth};
  \node[font=\scriptsize,text=badcol!85!black,anchor=north,align=center] at (8.5,0.3) {disabled by guard\\$[\mathrm{Safe}\wedge\mathrm{cert}]$};
\end{tikzpicture}
\caption{Safety by construction in variant B. Every reachable marking, under \emph{every} admissible equilibrium (both the $(\act{H},\act{H})$ and $(\act{L},\act{L})$ branches), satisfies the invariant. An execution without authorization is excluded because $t_E$ needs the authorization token in its preset (S1), and an execution with invalid evidence is excluded by its guard (S2). The argument never mentions payoffs, so it is independent of equilibrium selection.}
\label{fig:safety}
\end{figure}

\paragraph{Progress.} By conservation, the net is bounded. The vector
\begin{equation}\label{eq:tinv}
  n^{\circlearrowleft}=\mathbf{e}_{r_P^{\mathrm{n}}}+\mathbf{e}_{t_R}+\mathbf{e}_{r_R}+\mathbf{e}_{t_\ell}
\end{equation}
satisfies $Cn^{\circlearrowleft}=0$ and is realized by the rework cycle. Since the net is bounded, an infinite execution must repeat a cycle of the reachability graph (\cref{sec:structural}), and the rework cycle is the only one. Whether it does depends on how often equilibrium play selects the column $r_P^{\mathrm{n}}$, which \cref{sec:re1-values} computes.

\subsection{Strategic analysis}
\label{sec:re1-strategic}

\Cref{thm:dominance} analyzes both variants with one computation. Outcomes of $\G_P$ lead to distinct successors in both variants, so the routing probabilities $\Psi$ coincide with the outcome-probability matrices $\Phi$, which here are $0$--$1$ indicators. Write $E_{XY}$ for the $2\times2$ indicator of the profile $(X,Y)$, take the operator as the row player, and set $a^+=\act{H}$, $a^-=\act{L}$. In both variants, $D_{\act{H}}(R)=2-3=-1$ and $D_{\act{L}}(R)=-1-0=-1$.

\paragraph{Variant A.} In the downstream games \eqref{eq:downstreamA}, the compliant action $d_i=0$ strictly dominates, so each has the unique equilibrium $(0,0)$, with values $V(\G_V)=4$ and $V(\G_R^{(k)})=-2k$. The outcome indicators of $\G_P$ are $\Phi_V=E_{HH}$, $\Phi_{R,1}=E_{HL}+E_{LH}$, and $\Phi_{R,2}=E_{LL}$. By \eqref{eq:Qmatrix}, the augmented matrix decomposes as
\begin{equation}\label{eq:decompA}
  Q(\gamma) = \begin{pmatrix}2&-1\\3&0\end{pmatrix}
  + \gamma\left[\,4\begin{pmatrix}1&0\\0&0\end{pmatrix}
  -2\begin{pmatrix}0&1\\1&0\end{pmatrix}
  -4\begin{pmatrix}0&0\\0&1\end{pmatrix}\right],
\end{equation}
with row differences $D(\Phi_V)=(1,0)$, $D(\Phi_{R,1})=(-1,1)$, and $D(\Phi_{R,2})=(0,-1)$. Hence
\begin{equation}\label{eq:DQA}
  D_{\act{H}}(Q)=-1+6\gamma,\qquad D_{\act{L}}(Q)=-1+2\gamma .
\end{equation}

\paragraph{Variant B.} The indicators are $\Phi_{\mathrm{ready}}=E_{HH}$ and $\Phi_{\mathrm{notReady}}=J-E_{HH}$. Against column $\act{L}$, $D_{\act{L}}(\Phi_{\mathrm{ready}})=D_{\act{L}}(\Phi_{\mathrm{notReady}})=0$: no successor is pivotal. \Cref{fig:decomposition} shows both decompositions.

\begin{figure}[ht]
\centering
\begin{tikzpicture}[x=1cm,y=1cm,
  cell/.style={draw=inkcol!45,line width=0.5pt,minimum height=0.6cm,inner sep=0pt,font=\scriptsize},
  op/.style={font=\small,inner sep=1pt},
  hd/.style={font=\scriptsize,align=center,anchor=south,inner sep=1pt},
  zero/.style={text=inkcol},
  piv/.style={draw=badcol,line width=1.2pt}]
  \newcommand{\mg}[9]{%
    \begin{scope}[shift={#1}]
      \node[cell,minimum width=\cw cm,fill=#6] at (0,0) {#2};
      \node[cell,minimum width=\cw cm,fill=#7] at (\cw,0) {#3};
      \node[cell,minimum width=\cw cm,fill=#8] at (0,-0.6) {#4};
      \node[cell,minimum width=\cw cm,fill=#9] at (\cw,-0.6) {#5};
    \end{scope}}
  \newcommand{\dg}[5]{%
    \begin{scope}[shift={#1}]
      \node[cell,minimum width=\cw cm,fill=softgray,#4] at (0,0) {#2};
      \node[cell,minimum width=\cw cm,fill=softgray,#5] at (\cw,0) {#3};
    \end{scope}}
  \def\fG{okcol!28}\def\fO{gamecol!30}\def\fR{badcol!22}\def\fW{white}

  \node[font=\footnotesize\bfseries,anchor=west] at (-1.2,1.55) {Variant A: graded routing};
  \def\cw{0.95}
  \mg{(0,0)}{$2$}{$-1$}{$3$}{$0$}{\fW}{\fW}{\fW}{\fW}
  \node[hd] at (0.475,0.42) {stage $R$};
  \node[font=\scriptsize,anchor=east,text=inkcol] at (-0.5,0) {$\act{H}$};
  \node[font=\scriptsize,anchor=east,text=inkcol] at (-0.5,-0.6) {$\act{L}$};
  \node[op] at (2.0,-0.3) {$+\ \gamma\,\Big[$};
  \node[op,anchor=east] at (2.85,-0.3) {$4$};
  \mg{(3.4,0)}{$1$}{$0$}{$0$}{$0$}{\fG}{\fW}{\fW}{\fW}
  \node[hd] at (3.875,0.42) {$\Phi_V$\\[-1pt]to $\G_V$, $V{=}4$};
  \node[op,anchor=east] at (5.6,-0.3) {$-\,2$};
  \mg{(6.15,0)}{$0$}{$1$}{$1$}{$0$}{\fW}{\fO}{\fO}{\fW}
  \node[hd] at (6.625,0.42) {$\Phi_{R,1}$\\[-1pt]to $\G_R$, $k{=}1$};
  \node[op,anchor=east] at (8.35,-0.3) {$-\,4$};
  \mg{(8.9,0)}{$0$}{$0$}{$0$}{$1$}{\fW}{\fW}{\fW}{\fR}
  \node[hd] at (9.375,0.42) {$\Phi_{R,2}$\\[-1pt]to $\G_R$, $k{=}2$};
  \node[op] at (10.55,-0.3) {$\Big]$};
  \node[op] at (10.95,-0.3) {$=$};
  \def\cw{1.3}
  \mg{(11.8,0)}{$2{+}4\gamma$}{$-1{-}2\gamma$}{$3{-}2\gamma$}{$-4\gamma$}{\fW}{\fW}{\fW}{\fW}
  \node[hd] at (12.45,0.42) {augmented $Q(\gamma)$};
  \node[font=\scriptsize,anchor=east] at (-0.5,-1.6) {$D(\cdot)$};
  \def\cw{0.95}
  \dg{(0,-1.6)}{$-1$}{$-1$}{}{}
  \dg{(3.4,-1.6)}{$1$}{$0$}{}{}
  \dg{(6.15,-1.6)}{$-1$}{$1$}{}{}
  \dg{(8.9,-1.6)}{$0$}{$-1$}{}{}
  \def\cw{1.3}
  \dg{(11.8,-1.6)}{$-1{+}6\gamma$}{$-1{+}2\gamma$}{draw=okcol!65!black,line width=1pt}{draw=okcol!65!black,line width=1pt}
  \node[font=\scriptsize,text=okcol!45!black,anchor=north] at (12.45,-1.95) {$\act{H}$ dominant $\iff\gamma>\tfrac12$};
  \foreach \x in {0.475,3.875,6.625,9.375,12.45} \draw[->,line width=0.6pt,draw=inkcol!50] (\x,-0.95) -- (\x,-1.28);

  \begin{scope}[yshift=-4.5cm]
  \node[font=\footnotesize\bfseries,anchor=west] at (-1.2,1.55) {Variant B: binary routing};
  \def\cw{0.95}
  \mg{(0,0)}{$2$}{$-1$}{$3$}{$0$}{\fW}{\fW}{\fW}{\fW}
  \node[hd] at (0.475,0.42) {stage $R$};
  \node[font=\scriptsize,anchor=east,text=inkcol] at (-0.5,0) {$\act{H}$};
  \node[font=\scriptsize,anchor=east,text=inkcol] at (-0.5,-0.6) {$\act{L}$};
  \node[op] at (1.95,-0.3) {$+\ \gamma\,\Big[$};
  \node[op,anchor=east] at (2.85,-0.3) {$v_A$};
  \mg{(3.4,0)}{$1$}{$0$}{$0$}{$0$}{\fG}{\fW}{\fW}{\fW}
  \node[hd] at (3.875,0.42) {$\Phi_{\mathrm{ready}}$\\[-1pt]to $\G_A$};
  \node[op,anchor=east] at (5.6,-0.3) {$+\,v_R$};
  \mg{(6.15,0)}{$0$}{$1$}{$1$}{$1$}{\fW}{\fR}{\fR}{\fR}
  \node[hd] at (6.625,0.42) {$\Phi_{\mathrm{notReady}}$\\[-1pt]to $\G_R$};
  \node[op] at (7.8,-0.3) {$\Big]$};
  \node[op] at (8.2,-0.3) {$=$};
  \def\cw{1.3}
  \mg{(9.05,0)}{$2{+}\gamma v_A$}{$-1{+}\gamma v_R$}{$3{+}\gamma v_R$}{$\gamma v_R$}{\fW}{\fW}{\fW}{\fW}
  \node[hd] at (9.7,0.42) {augmented $Q$};
  \node[font=\scriptsize,anchor=east] at (-0.5,-1.6) {$D(\cdot)$};
  \def\cw{0.95}
  \dg{(0,-1.6)}{$-1$}{$-1$}{}{}
  \dg{(3.4,-1.6)}{$1$}{$0$}{}{piv}
  \dg{(6.15,-1.6)}{$-1$}{$0$}{}{piv}
  \def\cw{1.3}
  \dg{(9.05,-1.6)}{$-1{+}\gamma\Delta v$}{$-1$}{}{piv}
  \foreach \x in {0.475,3.875,6.625,9.7} \draw[->,line width=0.6pt,draw=inkcol!50] (\x,-0.95) -- (\x,-1.28);
  \node[font=\scriptsize,text=badcol!85!black,align=left,anchor=west] at (11.3,-1.0) {column $\act{L}$: no outcome is\\pivotal, so $D_{\act{L}}(Q)=-1$\\for every $\gamma$, $v_A$, $v_R$};
  \end{scope}
\end{tikzpicture}
\caption{Matrix decomposition of the continuation-augmented game at $\G_P$ (row player's payoffs), as in \eqref{eq:Qmatrix}. Each outcome contributes an indicator matrix $\Phi_\omega$ (shaded cells mark the profiles that lead to that outcome), weighted by the value of the game it routes to. The row-difference operator $D(\cdot)$ is linear, so the incentive to choose $\act{H}$ over $\act{L}$ is the stage difference plus the weighted differences of the indicators (\cref{thm:dominance}). In variant A every column has a pivotal outcome, so the incentive can be reversed. In variant B both indicators have zero difference against $\act{L}$ (red), so no continuation value can change that comparison.}
\label{fig:decomposition}
\end{figure}

The three propositions below are now direct consequences of the general theory. Since the game is symmetric, the column player's analysis is identical.

\begin{proposition}[Equilibrium reversal in variant A]\label{prop:reversal}
At $\gamma=0.75$, $(\act{H},\act{H})$ is the unique equilibrium of the augmented game at $\G_P$, although $(\act{L},\act{L})$ is the unique equilibrium of the stage game. The equilibrium route therefore changes from recovery ($t_R$) to verification ($t_V$).
\end{proposition}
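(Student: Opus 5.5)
The plan is to apply \cref{thm:dominance} directly to the augmented matrix \eqref{eq:decompA} at $\gamma=0.75$, using the row differences \eqref{eq:DQA} already computed, and then to finish with symmetry and a strict-dominance argument. First I confirm that the continuation values entering $Q(\gamma)$ are the equilibrium values of the downstream games. In $\G_V$ and $\G_R^{(k)}$ the compliant action $d_i=0$ strictly dominates by \eqref{eq:downstreamA}. So each downstream game has the unique equilibrium $(0,0)$, with $V(\G_V)=4$ and $V(\G_R^{(k)})=-2k$. These successors are terminal in the sense of the episode, so the values are exogenous and do not depend on $\gamma$ (\cref{rem:endogenous}), and \cref{cor:thresholds} applies exactly. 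Therefore $Q(0.75)$ is the augmented game $\G_P[V]$ of \cref{def:augmented} at the relevant values.

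Next, I evaluate \eqref{eq:DQA} at $\gamma=0.75$. This gives $D_{\act{H}}(Q)=-1+4.5=3.5>0$ and $D_{\act{L}}(Q)=-1+1.5=0.5>0$. By \cref{thm:dominance}, $\act{H}$ therefore strictly dominates $\act{L}$ for the operator against both provider actions. Because the payoffs are symmetric, the augmented payoff of the provider is the transpose of the operator's. The same column differences hold, so $\act{H}$ is strictly dominant for the provider as well.

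Then I conclude uniqueness. A finite game in which each player has a strictly dominant action has a unique Nash equilibrium, including among mixed profiles, because any positive weight on a strictly dominated action is a profitable deviation. Hence $(\act{H},\act{H})$ is the unique equilibrium of $\G_P[V]$. For the stage game, the row differences are $D_{\act{H}}(R)=D_{\act{L}}(R)=-1<0$, so $\act{L}$ is strictly dominant and $(\act{L},\act{L})$ is its unique equilibrium, as already noted after \cref{tab:stage}.

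Finally, I read off the route. Under $(\act{H},\act{H})$ we have $k=0$, and \cref{tab:routingA} fires $t_V$ into $\G_V$. Under $(\act{L},\act{L})$ we have $k=2$, and $t_R$ fires into $\G_R$. Combined with \cref{thm:backward} on this acyclic graph, this makes the reversal a global equilibrium statement. The only delicate point is justifying that the downstream values are exactly $4$ and $-2k$ independently of $\gamma$. Once that is settled, the rest is arithmetic.
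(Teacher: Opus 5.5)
Your proposal is correct and follows the paper's proof. You evaluate \eqref{eq:DQA} at $\gamma=0.75$ to get $3.5>0$ and $0.5>0$, invoke \cref{thm:dominance} (using symmetry for the provider), observe that the stage-game differences are both $-1$, and read the routes $t_V$ and $t_R$ off \cref{tab:routingA}; your extra checks on the downstream values and on uniqueness among mixed profiles make explicit what the paper leaves implicit.
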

\begin{proof}
By \eqref{eq:DQA}, $D_{\act{H}}(Q)=3.5>0$ and $D_{\act{L}}(Q)=0.5>0$, so $\act{H}$ strictly dominates by \cref{thm:dominance}. At $\gamma=0$, both differences equal $-1$, so $\act{L}$ strictly dominates. By the guards in \cref{tab:routingA}, $(\act{L},\act{L})$ enables $t_R$ and $(\act{H},\act{H})$ enables $t_V$.
\end{proof}

\begin{proposition}[Phase diagram of variant A]\label{prop:phase}
The successors of $\G_P$ are terminal for the purposes of $\G_P$, so the continuation values are exogenous and \cref{cor:thresholds} is exact. For $\gamma\in[0,1]$, the equilibria of the augmented game at $\G_P$ are:
\begin{enumerate}[label=(\roman*),nosep]
  \item $\gamma<1/6$: $\act{L}$ is strictly dominant, and $(\act{L},\act{L})$ is the unique equilibrium (myopic regime);
  \item $\gamma=1/6$: $(\act{L},\act{L})$ is a strict equilibrium, and $(\act{H},\act{H})$ is a weak equilibrium, since $\act{H}$ and $\act{L}$ tie against $\act{H}$;
  \item $1/6<\gamma<1/2$: $(\act{H},\act{H})$ and $(\act{L},\act{L})$ are strict equilibria, and there is a third, symmetric mixed equilibrium in which each player chooses $\act{H}$ with probability
  \begin{equation}\label{eq:qstar}
    q^\star(\gamma)=\frac{1-2\gamma}{4\gamma}\in(0,1)
  \end{equation}
  (coordination regime);
  \item $\gamma=1/2$: $\act{H}$ is weakly dominant, $(\act{H},\act{H})$ is a strict equilibrium, and $(\act{L},\act{L})$ is a weak equilibrium;
  \item $\gamma>1/2$: $\act{H}$ is strictly dominant, and $(\act{H},\act{H})$ is the unique equilibrium (reversal regime).
\end{enumerate}
\end{proposition}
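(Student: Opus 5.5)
The plan is to work directly from the augmented matrix $Q(\gamma)$ in \eqref{eq:decompA} and the row differences \eqref{eq:DQA}. Because the game is symmetric, the provider's column differences equal the operator's row differences, $D_{\act{H}}(Q)=-1+6\gamma$ against $\act{H}$ and $D_{\act{L}}(Q)=-1+2\gamma$ against $\act{L}$. Exogeneity comes first. The successors $\G_V$ and $\G_R^{(k)}$ are terminal for $\G_P$, and their values $4$ and $-2k$ come from dominant-strategy equilibria that do not involve $\gamma$. So $\overline V$ is fixed as $\gamma$ varies, and \cref{cor:thresholds} applies exactly. Each difference is affine in $\gamma$ with a single root: $1/6$ against $\act{H}$ and $1/2$ against $\act{L}$. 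These two thresholds split $[0,1]$ into the five regimes listed in the statement.

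Next I would take the regimes one at a time using the signs of the two differences. For $\gamma<1/6$ both differences are negative, so $\act{L}$ strictly dominates by \cref{thm:dominance} and the equilibrium is unique. For $\gamma>1/2$ both are positive, so $\act{H}$ strictly dominates and again the equilibrium is unique.

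For $1/6<\gamma<1/2$ the signs differ: $\act{H}$ is the strict best reply to $\act{H}$, and $\act{L}$ is the strict best reply to $\act{L}$. This makes the game a $2\times2$ coordination game, and both pure diagonal profiles are strict equilibria. For the mixed equilibrium, suppose the opponent plays $\act{H}$ with probability $q$. The operator's incentive to play $\act{H}$ is then linear in $q$:
\[
q(-1+6\gamma)+(1-q)(-1+2\gamma)=-1+2\gamma+4\gamma q.
\]
Setting this to zero gives $q^\star=(1-2\gamma)/(4\gamma)$, which lies in $(0,1)$ exactly when $1/6<\gamma<1/2$. By symmetry the column player mixes with the same probability. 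I would also check that no other equilibria exist. Any equilibrium with a pure strategy on one side forces a strict best reply on the other side, and hence one of the two diagonal profiles. Any fully mixed equilibrium requires indifference for both players, which pins down $q^\star$ uniquely.

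The boundary cases come last. At $\gamma=1/6$, $D_{\act{H}}(Q)=0$ and $D_{\act{L}}(Q)=-2/3$. So $(\act{L},\act{L})$ is strict, and $(\act{H},\act{H})$ is an equilibrium with a tie. At $\gamma=1/2$, $D_{\act{H}}(Q)=2$ and $D_{\act{L}}(Q)=0$. Then $\act{H}$ is weakly dominant, $(\act{H},\act{H})$ is strict, and $(\act{L},\act{L})$ is weak. The statement lists only these equilibria for the boundary cases, so I would note the degenerate mixed continuum and state it explicitly. At $\gamma=1/6$, a player is indifferent when the opponent plays $\act{H}$ for sure, so profiles where one player mixes and the other plays $\act{H}$ also satisfy the equilibrium conditions; the same happens at $\gamma=1/2$ with $\act{L}$. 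I expect this boundary bookkeeping, not the threshold calculation, to be the main obstacle, so I would present the statement's list as the equilibria in pure strategies, plus the symmetric ones.
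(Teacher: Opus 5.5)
Your treatment of the five regimes follows the paper's proof. You use the signs of $D_{\act{H}}(Q)=-1+6\gamma$ and $D_{\act{L}}(Q)=-1+2\gamma$ from \eqref{eq:DQA}, the thresholds $1/6$ and $1/2$ from \cref{cor:thresholds}, symmetry, and the indifference equation for $q^\star(\gamma)$. Your check that case (iii) has exactly three equilibria is correct, and it is a little more complete than the paper's argument.

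The error is in your last paragraph. There is no continuum of mixed equilibria at $\gamma=1/6$ or $\gamma=1/2$, and the statement's list does not need to be reinterpreted. You checked only that the mixing player is indifferent, not that the pure player is best-responding. At $\gamma=1/6$, let the row player play $\act{H}$ while the column player plays $\act{H}$ with probability $q<1$. The row player's incentive for $\act{H}$ is $q\cdot 0+(1-q)(-2/3)<0$. So $\act{H}$ is not a best reply and the profile is not an equilibrium; by symmetry the same holds with the roles swapped. At $\gamma=1/2$, the incentive against a $q$-mix is $2q$, which is strictly positive for $q>0$. So the player assigned $\act{L}$ strictly prefers $\act{H}$ whenever the opponent mixes non-degenerately. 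As written, your proposal would add a false claim to an otherwise correct proof.

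To close all five regimes at once, use the best-reply argument you already apply in (iii). Against an opponent who plays $\act{H}$ with probability $q$, the incentive is $g(q)=-1+2\gamma+4\gamma q$. For $\gamma>0$ it is strictly increasing in $q$, so a player is indifferent at most at $q=q^\star(\gamma)$.

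At $\gamma=1/6$ the root is $q^\star=1$. A player is willing to mix only if the opponent plays $\act{H}$ for sure. But the opponent's $\act{H}$ is a best reply only if the first player also plays $\act{H}$ for sure, since $g(p)<0$ for $p<1$.

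At $\gamma=1/2$ the root is $q^\star=0$, and the symmetric argument applies with $\act{L}$, since $g(p)>0$ for $p>0$.

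So the equilibrium set at both boundaries is exactly $\{(\act{L},\act{L}),(\act{H},\act{H})\}$. This matches the paper's observation that $q^\star$ collapses to $(\act{H},\act{H})$ at $\gamma=1/6$ and to $(\act{L},\act{L})$ at $\gamma=1/2$. Replace the continuum paragraph with this argument.
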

\begin{proof}
By \cref{cor:thresholds} and \eqref{eq:DQA}, the comparison against $\act{H}$ switches sign at $\gamma^\star_{\act{H}}=1/6$, and the comparison against $\act{L}$ at $\gamma^\star_{\act{L}}=1/2$. The game is symmetric. Cases (i), (v), and the pure equilibria in (ii)--(iv) follow from the signs of $D_{\act{H}}(Q)$ and $D_{\act{L}}(Q)$. For the mixed equilibrium, if the opponent plays $\act{H}$ with probability $q$, a player is indifferent if and only if $qD_{\act{H}}(Q)+(1-q)D_{\act{L}}(Q)=0$, that is, $q(-1+6\gamma)+(1-q)(-1+2\gamma)=0$, which gives \eqref{eq:qstar}. It lies strictly between $0$ and $1$ exactly when $1/6<\gamma<1/2$, and it tends to the pure profiles $(\act{H},\act{H})$ at $\gamma=1/6$ and $(\act{L},\act{L})$ at $\gamma=1/2$.
\end{proof}

\begin{proposition}[Binary routing and dominance]\label{prop:binary}
In variant B, with prescribed continuation values $v_A,v_R$ and $\Delta v:=v_A-v_R$, the augmented game at $\G_P$ satisfies:
\begin{enumerate}[label=(\roman*),nosep]
  \item $(\act{L},\act{L})$ is an equilibrium for every $\gamma\ge0$ and every $\Delta v$;
  \item $(\act{H},\act{H})$ is an equilibrium if and only if $\gamma\,\Delta v\ge 1$.
\end{enumerate}
\end{proposition}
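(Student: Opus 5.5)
The plan is to read both parts off the augmented matrix of variant B, already displayed in \cref{fig:decomposition}, and to use symmetry so that only the row player (the operator) needs to be checked. By \eqref{eq:Qmatrix}, with $\Psi=\Phi$ (the two outcomes route to the distinct successors $\G_A$ and $\G_R$), the augmented matrix is
\[
  Q=R+\gamma\big(v_A E_{HH}+v_R(J-E_{HH})\big)=\begin{pmatrix}2+\gamma v_A & -1+\gamma v_R\\ 3+\gamma v_R & \gamma v_R\end{pmatrix}.
\]
Applying the difference operator \eqref{eq:Dop} with $a^+=\act{H}$ and $a^-=\act{L}$ gives the two quantities that decide everything:
\[
  D_{\act{H}}(Q)=-1+\gamma\Delta v,\qquad D_{\act{L}}(Q)=-1.
\]
The second identity follows directly from \cref{thm:dominance}, because no successor is pivotal against $\act{L}$: $D_{\act{L}}(\Phi_{\mathrm{ready}})=D_{\act{L}}(\Phi_{\mathrm{notReady}})=0$.

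For (i), I would note that against $\act{L}$ the row player strictly prefers $\act{L}$, since $D_{\act{L}}(Q)=-1<0$, whatever the values of $\gamma$, $v_A$ and $v_R$. The game is symmetric under swapping players, so the column player's comparison against a row $\act{L}$ is the same. Hence $\act{L}$ is a strict best reply to $\act{L}$ for both players, and $(\act{L},\act{L})$ is a Nash equilibrium of $\G_P[V]$ for every $\gamma\ge0$ and every $\Delta v$.

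For (ii), the profile $(\act{H},\act{H})$ is an equilibrium exactly when neither player gains by a unilateral switch to $\act{L}$. For the row player this requires $D_{\act{H}}(Q)\ge0$, that is, $\gamma\Delta v\ge1$, and by symmetry the column player's condition is the same. Both directions of the ``if and only if'' then follow at once: the inequality is necessary and sufficient for $\act{H}$ to be a best reply to $\act{H}$.

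No step is a real obstacle. The only point needing care is confirming that the column player's payoffs in $Q$ are the transpose of the row player's, so that the symmetry argument is legitimate. This holds because the stage payoffs in \cref{tab:stage} are symmetric, the indicator matrices $E_{HH}$ and $J-E_{HH}$ are symmetric, and the prescribed values $v_A$ and $v_R$ are taken to be common to both players, as in \cref{fig:examplenet} where $V=(5,5)$ and $V=(0,0)$.
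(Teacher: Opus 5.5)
Your proposal is correct and follows essentially the same route as the paper: both get $D_{\act{L}}(Q)=-1$ from the absence of pivotal successors against $\act{L}$ via \cref{thm:dominance}, compute $D_{\act{H}}(Q)=-1+\gamma\,\Delta v$ from the indicator decomposition, and use symmetry for the column player. Your explicit check that the column player's augmented payoffs are the transpose of the row player's (symmetric stage game, symmetric indicators, values $v_A,v_R$ common to both players) makes precise a step the paper only asserts.
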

\begin{proof}
(i) No successor is pivotal against $\act{L}$, so by \cref{thm:dominance} $D_{\act{L}}(Q)=D_{\act{L}}(R)=-1$ for all values: $\act{L}$ is the best reply to $\act{L}$. (ii) $D_{\act{H}}(Q)=-1+\gamma\big(v_A D_{\act{H}}(E_{HH})+v_R D_{\act{H}}(J-E_{HH})\big)=-1+\gamma\,\Delta v$, which is nonnegative if and only if $\gamma\,\Delta v\ge1$.
\end{proof}

As a local illustration, take the prescribed values $v_A=5$, $v_R=0$, and $\gamma=1$. Then both $(\act{H},\act{H})$ and $(\act{L},\act{L})$ are equilibria of the augmented game at $\G_P$. The downstream value creates a second, cooperative equilibrium but does not remove the inefficient one, so the solution layer must state a selection rule. This calculation is not yet an equilibrium of the cyclic global game, because $v_R$ is in fact endogenous; \cref{prop:variantB} corrects this. \Cref{fig:matrices} shows the three augmented games as best-reply diagrams, and \cref{fig:phase} shows the regimes.

\begin{figure}[ht]
\centering
\begin{tikzpicture}[eq/.style={draw=okcol!70!black,line width=1.6pt,rounded corners=3pt,inner sep=0pt},
  br/.style={->,line width=1.1pt,draw=#1}, br/.default=inkcol]
  \payoffgrid{(0,0)}{m}{$2,2$}{$-1,3$}{$3,-1$}{$0,0$}{(a) myopic stage game}
  \draw[br] ($(ma.west)+(-0.28,-0.2)$) -- ($(mc.west)+(-0.28,0.2)$);
  \draw[br] ($(mb.east)+(0.25,-0.2)$) -- ($(md.east)+(0.25,0.2)$);
  \draw[br=gamecol!80!black] ($(ma.north)+(0.35,0.12)$) -- ($(mb.north)+(-0.35,0.12)$);
  \draw[br=gamecol!80!black] ($(mc.south)+(0.35,-0.12)$) -- ($(md.south)+(-0.35,-0.12)$);
  \node[eq,fit=(md)] {};
  \payoffgrid{(5.45,0)}{a}{$5,5$}{$-2.5,1.5$}{$1.5,-2.5$}{$-3,-3$}{(b) Variant A, $\gamma=0.75$}
  \draw[br] ($(ac.west)+(-0.28,0.2)$) -- ($(aa.west)+(-0.28,-0.2)$);
  \draw[br] ($(ad.east)+(0.25,0.2)$) -- ($(ab.east)+(0.25,-0.2)$);
  \draw[br=gamecol!80!black] ($(ab.north)+(-0.35,0.12)$) -- ($(aa.north)+(0.35,0.12)$);
  \draw[br=gamecol!80!black] ($(ad.south)+(-0.35,-0.12)$) -- ($(ac.south)+(0.35,-0.12)$);
  \node[eq,fit=(aa)] {};
  \payoffgrid{(10.9,0)}{b}{$7,7$}{$-1,3$}{$3,-1$}{$0,0$}{(c) Variant B, $\gamma=1$}
  \draw[br] ($(bc.west)+(-0.28,0.2)$) -- ($(ba.west)+(-0.28,-0.2)$);
  \draw[br] ($(bb.east)+(0.25,-0.2)$) -- ($(bd.east)+(0.25,0.2)$);
  \draw[br=gamecol!80!black] ($(bb.north)+(-0.35,0.12)$) -- ($(ba.north)+(0.35,0.12)$);
  \draw[br=gamecol!80!black] ($(bc.south)+(0.35,-0.12)$) -- ($(bd.south)+(-0.35,-0.12)$);
  \node[eq,fit=(ba)] {}; \node[eq,fit=(bd)] {};
  \node[font=\footnotesize,align=center] at (6.7,-2.95) {%
    \tikz[baseline=-0.5ex]\draw[br] (0,0)--(0.6,0); operator's better reply (within a column) \qquad
    \tikz[baseline=-0.5ex]\draw[br=gamecol!80!black] (0,0)--(0.6,0); provider's better reply (within a row)\\[2pt]
    \tikz[baseline=-0.5ex]\node[eq,minimum width=5mm,minimum height=3mm]{}; pure Nash equilibrium};
\end{tikzpicture}
\caption{Best-reply (movement) diagrams. Rows are the operator's action and columns the provider's; cells show the payoffs to $(O,S)$. Arrows point toward the better reply, so a cell with only incoming arrows is a pure equilibrium. (a) The stage game alone has the unique equilibrium $(\act{L},\act{L})$. (b) Graded routing reverses every arrow: $(\act{H},\act{H})$ becomes unique. (c) Binary routing reverses only the arrows into $(\act{H},\act{H})$, so both equilibria survive.}
\label{fig:matrices}
\end{figure}

\begin{figure}[ht]
\centering
\begin{tikzpicture}[x=12cm,y=1cm]
  \node[font=\small\bfseries,anchor=west] at (0,1.25) {Variant A: discount factor $\gamma$};
  \fill[badcol!20] (0,0) rectangle (1/6,0.7);
  \fill[gamecol!25] (1/6,0) rectangle (1/2,0.7);
  \fill[okcol!25] (1/2,0) rectangle (1,0.7);
  \draw[thick] (0,0) -- (1,0);
  \foreach \x/\l in {0/0,0.16667/\tfrac16,0.5/\tfrac12,1/1} \draw (\x,0) -- (\x,-0.12) node[below,font=\footnotesize] {$\l$};
  \node[font=\scriptsize,align=center] at (1/12,0.35) {unique\\$(\act{L},\act{L})$};
  \node[font=\scriptsize,align=center] at (1/3,0.35) {coordination: $(\act{H},\act{H})$,\\$(\act{L},\act{L})$, mixed $q^\star(\gamma)$};
  \node[font=\scriptsize,align=center] at (0.75,0.35) {$\act{H}$ dominant: unique $(\act{H},\act{H})$\,$\Rightarrow t_V$};
  \draw[->,very thick] (0.75,1.0) -- (0.75,0.72); \node[font=\scriptsize,anchor=south] at (0.75,0.95) {$\gamma=0.75$};
  \begin{scope}[yshift=-2.6cm,x=2cm]
  \node[font=\small\bfseries,anchor=west] at (0,1.25) {Variant B: discounted value gap $\gamma\,\Delta v=\gamma(v_A-v_R)$};
  \fill[badcol!20] (0,0) rectangle (1,0.7);
  \fill[gamecol!25] (1,0) rectangle (6,0.7);
  \draw[thick,->] (0,0) -- (6.1,0);
  \foreach \x in {0,1,...,6} \draw (\x,0) -- (\x,-0.12) node[below,font=\footnotesize] {$\x$};
  \node[font=\scriptsize,align=center] at (0.5,0.35) {unique\\$(\act{L},\act{L})$};
  \node[font=\scriptsize,align=center] at (3.5,0.35) {both $(\act{H},\act{H})$ and $(\act{L},\act{L})$: selection needed};
  \draw[->,very thick] (5,1.0) -- (5,0.72); \node[font=\scriptsize,anchor=south] at (5,0.95) {$\gamma\,\Delta v=5$};
  \end{scope}
\end{tikzpicture}
\caption{Regime maps for the two variants, from \cref{prop:phase,prop:binary}. The stage game is the same in both, but the downstream \emph{structure} differs. Graded consequences (variant A) can make cooperation dominant. A coupling in which only joint success matters (variant B) can at most add a coordination equilibrium, however large the reward. Which structure applies is a property of the net, not of $\G_P$.}
\label{fig:phase}
\end{figure}

\begin{intuition}[Comparison of the two variants]
In variant A, a unilateral shirk ($k=1$) is punished downstream \emph{even if the other player also shirks}, because $V(\G_R^{(2)})<V(\G_R^{(1)})$. In the language of \cref{thm:dominance}, some successor is pivotal against $\act{L}$. In variant B, once one player shirks, the other's choice no longer affects which successor game is reached, so no successor is pivotal against $\act{L}$.
\end{intuition}

\subsection{Values and computation}
\label{sec:re1-values}

\paragraph{Variant A: backward induction.} The place graph is $\G_P\to\{\G_V,\G_R^{(1)},\G_R^{(2)}\}$, with no cycles. By \cref{prop:nilpotent}, a single backward pass solves it. The successor games are solved first, giving the values in \eqref{eq:downstreamA}. The augmented game \eqref{eq:decompA} is then solved at $\G_P$. At $\gamma=0.75$, the equilibrium $(\act{H},\act{H})$ gives each player $V(\G_P)=2+0.75\cdot4=5$, the top-left entry of \cref{fig:matrices}(b).

\paragraph{Variant B: global analysis.} The rework arc $\G_R\to\G_P$ merges $\G_P$ and $\G_R$ into one SCC, and $\G_A$ is downstream. For a dynamically consistent analysis, take $\gamma\in(0,1)$, treat $\G_A$ as terminal with value $v_A$ (the authorization path ends the episode), and give $\G_R$ a stage payoff $c_R<0$ to each player for the cost of renegotiation and downtime, with deterministic routing back to $\G_P$. The value of $\G_R$ is then endogenous: $V(\G_R)=c_R+\gamma V(\G_P)$.

\begin{proposition}[Global equilibria of variant B]\label{prop:variantB}
For $\gamma\in(0,1)$:
\begin{enumerate}[label=(\roman*),nosep]
  \item playing $(\act{L},\act{L})$ at $\G_P$ is always a global equilibrium, with values $V(\G_P)=\gamma c_R/(1-\gamma^2)$ and $V(\G_R)=c_R/(1-\gamma^2)$;
  \item playing $(\act{H},\act{H})$ at $\G_P$ is a global equilibrium if and only if
  \begin{equation}\label{eq:cstar}
    c_R\ \le\ c^\star(\gamma):=\frac{(2+\gamma v_A)(1-\gamma^2)-3}{\gamma},
  \end{equation}
  with values $V(\G_P)=2+\gamma v_A$ and $V(\G_R)=c_R+\gamma(2+\gamma v_A)$.
\end{enumerate}
\end{proposition}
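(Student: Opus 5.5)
The plan is to apply \cref{thm:consistency} to the two candidate stationary profiles. Each candidate fixes a profile at $\G_P$, and the players at $\G_R$ play whatever (trivial) renegotiation profile yields the stage payoff $c_R$ with deterministic return to $\G_P$. I model $\G_R$ as a place whose players have a single effective action, or, if several actions exist, actions that all give the same payoff $c_R$ and the same routing. Then the equilibrium condition at $\G_R$ holds automatically, and at $\G_A$, which is treated as terminal with value $v_A$, there is nothing to check. Everything therefore reduces to checking, for each candidate profile, that $\sigma_{\G_P}$ is a Nash equilibrium of the augmented game $\G_P[V^\sigma]$. Here $V^\sigma$ is the profile's own value vector, computed by \cref{lem:policy}.

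The first step is to compute the values. For $(\act{L},\act{L})$, the route is $\G_P\to\G_R\to\G_P$ with probability one. The evaluation equation of \cref{lem:policy} reduces to the $2\times2$ system $V(\G_P)=0+\gamma V(\G_R)$ and $V(\G_R)=c_R+\gamma V(\G_P)$. Its unique solution is $V(\G_R)=c_R/(1-\gamma^2)$ and $V(\G_P)=\gamma c_R/(1-\gamma^2)$. For $(\act{H},\act{H})$, play exits at once to the terminal $\G_A$, so $V(\G_P)=2+\gamma v_A$, and then $V(\G_R)=c_R+\gamma(2+\gamma v_A)$. This value of $\G_R$ is off-path but needed for the augmentation.

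The second step is to plug these values into the variant-B decomposition of \cref{fig:decomposition}. That gives $Q=R+\gamma\big[v_A E_{HH}+V(\G_R)(J-E_{HH})\big]$, so $D_{\act{L}}(Q)=-1$ and $D_{\act{H}}(Q)=-1+\gamma\big(v_A-V(\G_R)\big)$, exactly as in the proof of \cref{prop:binary}, with $v_R$ replaced by the endogenous $V(\G_R)$. For (i), no successor is pivotal against $\act{L}$ (\cref{thm:dominance}). Hence $\act{L}$ is a strict best reply to $\act{L}$ whatever $V(\G_R)$ is, and by symmetry the same holds for the provider, so $(\act{L},\act{L})$ is a global equilibrium. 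For (ii), $(\act{H},\act{H})$ is an equilibrium of the augmented game if and only if $\gamma\big(v_A-c_R-\gamma(2+\gamma v_A)\big)\ge1$. Solving this for $c_R$ gives $\gamma c_R\le\gamma v_A-\gamma^2(2+\gamma v_A)-1$. Expanding $(2+\gamma v_A)(1-\gamma^2)-3$ shows that the right-hand side equals $\gamma\,c^\star(\gamma)$, which yields \eqref{eq:cstar}.

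The main obstacle is conceptual rather than computational. The continuation value of $\G_R$ used in the augmented game must be the one generated by the candidate profile itself, not a prescribed $v_R$. In particular, for $(\act{H},\act{H})$ it is the off-path value $c_R+\gamma V(\G_P)$, and this value already incorporates a return to cooperative play. \Cref{thm:consistency} legitimizes exactly this use of $V^{\sigma}$. I will also remark that the one-shot deviation in (ii) is evaluated against that self-generated value, which is why the threshold depends on $c_R$.
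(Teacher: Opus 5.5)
Your proposal is correct and follows the paper's proof essentially step for step. You reduce to the augmented game at $\G_P$ via \cref{thm:consistency}, compute $V(\G_P)$ and $V(\G_R)$ from the candidate profile itself via \cref{lem:policy}, and reuse the difference computation of \cref{prop:binary} with $v_R$ replaced by the endogenous $V(\G_R)$. Your remark that the equilibrium condition at $\G_R$ holds trivially, and your check that the rearranged inequality equals $\gamma\,c^\star(\gamma)$, spell out steps the paper leaves implicit.
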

\begin{proof}
By \cref{thm:consistency}, it suffices to check the augmented game at $\G_P$, with $V(\G_R)$ computed from the profile itself by \cref{lem:policy}. (i) Under $(\act{L},\act{L})$, $V(\G_P)=\gamma V(\G_R)$ and $V(\G_R)=c_R+\gamma V(\G_P)$, which give the stated values. Against $\act{L}$, no successor is pivotal, so $\act{L}$ is the best reply by \cref{prop:binary}(i). (ii) Under $(\act{H},\act{H})$, $V(\G_P)=2+\gamma v_A$ and $V(\G_R)=c_R+\gamma V(\G_P)$. By \cref{prop:binary}(ii), $(\act{H},\act{H})$ is an equilibrium if and only if $\gamma\big(v_A-V(\G_R)\big)\ge1$. Substituting $V(\G_R)$ and rearranging gives \eqref{eq:cstar}.
\end{proof}

With $\gamma=0.9$ and $v_A=5$, the threshold is $c^\star=-1.96$. If recovery is cheap ($c_R=0$), then $V(\G_R)=5.85$ under $(\act{H},\act{H})$, and a player who shirks loses little: the rework loop returns play to a planning game in which cooperation resumes. So $(\act{H},\act{H})$ is \emph{not} an equilibrium. This contrasts with the local illustration, which prescribed $v_R=0$. If recovery is costly ($c_R=-3$), then $V(\G_R)=2.85$, $\gamma(v_A-V(\G_R))=1.94\ge1$, and $(\act{H},\act{H})$ is a global equilibrium with $V(\G_P)=6.5$. Alongside it, the livelock equilibrium $(\act{L},\act{L})$ has $V(\G_P)=-14.21$. \emph{A forgiving rework loop weakens upstream incentives}; only a sufficiently costly loop sustains cooperation. Within the component $\{\G_P,\G_R\}$, the routing matrix depends on the selected equilibrium:
\begin{equation}
  P^\sigma_{\{\G_P,\G_R\}}=\begin{pmatrix}0&0\\1&0\end{pmatrix}\ \text{under }(\act{H},\act{H}),\qquad
  P^\sigma_{\{\G_P,\G_R\}}=\begin{pmatrix}0&1\\1&0\end{pmatrix}\ \text{under }(\act{L},\act{L}),
\end{equation}
with rows and columns ordered $(\G_P,\G_R)$. Under $(\act{H},\act{H})$, the first row is zero because all mass leaves the component toward $\G_A$, so $\rho=0$ and play terminates. Under $(\act{L},\act{L})$, the block is a permutation matrix, so $\rho=1$. By \cref{prop:termination}, if the repair does not change the planning game, the net livelocks on the realized T-invariant \eqref{eq:tinv}. The two trajectories are
\begin{equation}
  \G_P \xrightarrow{\ \mathrm{ready}\ } \G_A \xrightarrow{\ \mathrm{approved}\ } \G_E
  \qquad\text{and}\qquad
  \G_P \xrightarrow{\ \mathrm{notReady}\ } \G_R \xrightarrow{\ \mathrm{repaired}\ } \G_P \to\cdots .
\end{equation}
In practice, the repair token updates the parameters of the next planning game, such as the budget and the equipment condition (dynamic instantiation). This changes $R^P$ from one visit to the next, and so it can break the livelock.

\subsection{Composition with concurrent diagnostics}
\label{sec:re1-composition}

In a more realistic extension, the ready outcome forks into two concurrent diagnostic games, a mechanical inspection $\G_M$ and a cybersecurity inspection $\G_C$ (\cref{fig:forkjoin}). A join transition requires
\begin{equation}
  M(p_M^{\mathrm{pass}})\ge1,\qquad M(p_C^{\mathrm{pass}})\ge1,
\end{equation}
and consumes a matching pair of signed pass certificates with the same work-order ID. Only then is $\G_A$ activated. A failed diagnostic produces a repair token instead of a pass certificate.

\begin{figure}[ht]
\centering
\begin{tikzpicture}[x=1cm,y=1cm]
  \begin{scope}[on background layer]
    \fill[rescol!5,rounded corners=4pt] (2.2,-2.45) rectangle (4.75,2.45);
  \end{scope}
  \node[lbl,text=rescol!70!black,font=\footnotesize\itshape] at (3.47,0) {concurrent};
  \node[place,label={[lbl]below:ready}] (rd) at (0,0) {}; \node[ctok=okcol] at (rd) {};
  \node[transition,label={[lbl]below:fork}] (fk) at (1.25,0) {};
  \coordinate (fan0) at (1.8,0);
  \node[gplace,label={[glbl]above:$\G_M$ mechanical}] (gm) at (2.9,1.5) {};
  \node[gplace,label={[glbl]below:$\G_C$ cyber}] (gc) at (2.9,-1.5) {};
  \node[rtrans,label={[lbl]below:$r_M$}] (rm) at (4.15,1.5) {};
  \node[rtrans,label={[lbl]above:$r_C$}] (rc) at (4.15,-1.5) {};
  \coordinate (fanM) at (5.1,1.5); \coordinate (fanC) at (5.1,-1.5);
  \node[place,label={[lbl]above:$p_M^{\mathrm{fail}}$}] (fm) at (6.0,2.45) {};
  \node[place,label={[lbl]above:$p_M^{\mathrm{pass}}$}] (pm) at (6.0,0.85) {};
  \node[place,label={[lbl]below:$p_C^{\mathrm{pass}}$}] (pc) at (6.0,-0.85) {};
  \node[place,label={[lbl]below:$p_C^{\mathrm{fail}}$}] (fc) at (6.0,-2.45) {};
  \node[transition] (jn) at (7.5,0) {};
  \node[gplace,label={[glbl]right:$\G_A$}] (ga) at (9.9,0) {};
  \node[transition] (xm) at (7.5,2.45) {};
  \node[transition] (xc) at (7.5,-2.45) {};
  \node[place,label={[lbl,text=badcol]right:repair $\to\G_R$}] (rpM) at (8.8,2.45) {};
  \node[place,label={[lbl,text=badcol]right:repair $\to\G_R$}] (rpC) at (8.8,-2.45) {};
  \draw[arc] (rd)--(fk);
  \draw[arc] (fk.east)--(fan0)|-(gm); \draw[arc] (fk.east)--(fan0)|-(gc);
  \draw[arc] (gm)--(rm); \draw[arc] (gc)--(rc);
  \draw[arc,okcol!65!black] (rm.east)--(fanM)|-(pm); \draw[arc,badcol] (rm.east)--(fanM)|-(fm);
  \draw[arc,okcol!65!black] (rc.east)--(fanC)|-(pc); \draw[arc,badcol] (rc.east)--(fanC)|-(fc);
  \draw[arc] (pm)-|(jn); \draw[arc] (pc)-|(jn);
  \draw[arc] (jn) -- node[below,guard]{$[\mathrm{id}_M{=}\mathrm{id}_C]$} (ga);
  \draw[arc] (fm)--(xm); \draw[arc] (xm)--(rpM); \draw[arc] (fc)--(xc); \draw[arc] (xc)--(rpC);
\end{tikzpicture}
\caption{Fork--join diagnostics. The two inspection games run concurrently and may be played by different agents with different information. The join fires only when \emph{both} signed pass certificates for the same work order are present. A single failure diverts that token to repair.}
\label{fig:forkjoin}
\end{figure}

\Cref{thm:composition}(i) and (iv) give the structural analysis directly. Let $y$ be $1$ on the ready place and on $\G_A$, and $\tfrac12$ on every place of each branch, including the pass, fail, and repair places. The fork consumes weight $1$ and produces $\tfrac12+\tfrac12$. The join consumes $\tfrac12+\tfrac12$ and produces $1$. Every other column moves weight $\tfrac12$ within a branch. Hence $y^\top C=0$: \emph{a work order is split across the concurrent inspections and reassembled by the join}. The same calculation exposes a defect. If one inspection passes and the other fails, the pass certificate (weight $\tfrac12$) is never consumed, so it is stranded. A complete design adds a cleanup column that consumes the stranded certificate together with the repair token. A single-path game tree cannot express this synchronization, or its failure mode.

\section{Running example II: an intrusion campaign}
\label{sec:re2}

The second running example is adversarial: one player is an attacker whose objective opposes the system's. It shows that the theory of \cref{sec:theory} applies unchanged, and it stresses the two payoffs of the net structure, \emph{composition} and \emph{computation}. The example builds on our multi-resolution cyber-defense games~\cite{yang2025multires,yang2025general,yang2025crossechelon}. There, a campaign is a Markov game over network segments, and each segment hides a technique-level engagement, a \emph{micro base game} (MBG), between an attacker $\PA$ and a defender $\PD$.

\subsection{Model}
\label{sec:re2-model}

\paragraph{From multi-resolution games to an OCGPN.} In the multi-resolution framework, the enterprise network is partitioned into segments with a coupling graph. At each segment $s$, the attacker selects ATT\&CK techniques~\cite{strom2018mitre} and the defender selects D3FEND countermeasures~\cite{mitre_d3fend}, in an extensive-form game $\Gamma^s$ with imperfect information. Every outcome of $\Gamma^s$ is labeled by \emph{the segment the attacker occupies when the engagement ends}. \emph{Grounding} (zoom-in) writes the continuation value of that position into the leaves of $\Gamma^s$. \emph{Lifting} (zoom-out) carries the equilibrium outcome distribution back up as the attacker's strategic routing. Following~\cite{yang2025crossechelon}, segments are written $s$ and micro base games $\Gamma^s$; these symbols are local to this example and unrelated to the global state $s$ and the global game $\Gamma(\N)$. Each object has a direct OCGPN counterpart (\cref{tab:dictionary}). The attacker's foothold is a \emph{token}, and the attacker and defender are the \emph{players} of the game at the place where that token sits.

\begin{table}[ht]
\centering
\caption{Dictionary between multi-resolution cyber games~\cite{yang2025multires,yang2025crossechelon} and OCGPN objects.}
\label{tab:dictionary}
\small
\begin{tabular}{@{}p{0.42\textwidth}p{0.52\textwidth}@{}}
\toprule
\textbf{Multi-resolution cyber game} & \textbf{OCGPN counterpart}\\\midrule
segment $s$; MBG $\Gamma^s$ & game place $p_s\in P_G$; local game $\G_{p_s}$\\
attacker foothold in $s$ & active-instance token $\tok{\mathrm{cid},s,\mathrm{artifacts}}$ in $p_s$\\
outcome labeled by the segment $s'$ & $\Out_{p_s}$ emits a foothold token routed to $p_{s'}$\\
remaining in place (self-loop), stalling penalty $\beta$ & rework arc returning the token to $p_s$\\
absorbing exfiltration states $s^0$ (genuine), $s^6$ (decoy) & terminal places $z_{\mathrm{gen}}$, $z_{\mathrm{dec}}$\\
grounding of MBG leaves & continuation-augmented matrix $Q^{p_s}(V)$, \eqref{eq:Qmatrix}\\
lifting to strategic routing & routing matrix $P^\sigma$, \eqref{eq:routing}\\
resolution configuration (coarse $s$ or fine $\Gamma^s$) & choice of local matrices at each game place (\cref{cor:coarse})\\
durable artifacts (stolen credentials, implants) & colored tokens in side places, read by later games\\
\bottomrule
\end{tabular}
\end{table}

\paragraph{The campaign net.} We use a three-segment slice of the case-study network of~\cite{yang2025crossechelon} (\cref{fig:campaignnet}). An attacker holding a foothold in the DMZ enters the IT site, the lateral-movement place $p_L$. From there it tries to reach the developer server, the asset place $p_A$, which holds genuine intellectual property alongside decoy files. The parameters are those of~\cite{yang2025crossechelon}: attacker capability $\lambda_A=0.6$, securing effectiveness $\eta=0.85$, stalling penalty $\beta=-2$, discount factor $\gamma=0.85$, segment value $\nu(A)=10$, genuine exfiltration value $\nu(s^0)=15$, and decoy exfiltration value $\nu(s^6)=-1$. The game is zero-sum, and the attacker maximizes. We write $V_L:=V(p_L)$ and $V_A:=V(p_A)$ for the attacker's values at the IT-site and asset places, and $\nu(L)$, $\nu(A)$ for the values of these two segments.

\begin{figure}[ht]
\centering
\begin{tikzpicture}[x=0.96cm,y=1cm]
  \node[place,label={[lbl]above:DMZ foothold}] (dmz) at (0,0) {}; \node[ctok=badcol] at (dmz) {};
  \node[transition,label={[lbl]below:$t_{\mathrm{in}}$}] (tin) at (1.2,0) {};
  \node[gplace,label={[glbl]above:$p_L$\\[-1pt]IT site}] (gl) at (2.5,0) {};
  \node[rtrans,label={[lbl]below right:$r_L$}] (rl) at (3.8,0) {};
  \node[place,label={[lbl]below:$q_L$}] (ol) at (5.0,0) {};
  \coordinate (fanL) at (5.7,0);
  \node[transition,label={[guard]above:$[\mathrm{moved}]$}] (tmv) at (6.5,1.6) {};
  \node[gplace,label={[glbl]below:$p_A$\\[-1pt]asset}] (ga) at (7.8,1.6) {};
  \node[rtrans,label={[lbl]below:$r_A$}] (ra) at (9.1,1.6) {};
  \node[place,label={[lbl]below:$q_A$}] (oa) at (10.3,1.6) {};
  \coordinate (fanA) at (11.0,1.6);
  \node[transition,label={[guard]above:$[\mathrm{genuine}\wedge\mathrm{egress}]$},label={[lbl]below:$t_{\mathrm{gen}}$}] (tg) at (11.8,2.6) {};
  \node[place,label={[lbl,text=badcol]right:$z_{\mathrm{gen}}$ ($\nu{=}15$)}] (xg) at (13.1,2.6) {};
  \node[transition,label={[guard]below:$[\mathrm{decoy}]$}] (td) at (11.8,0.6) {};
  \coordinate (fanD) at (12.4,0.6);
  \node[place,label={[lbl]right:$z_{\mathrm{dec}}$ ($\nu{=}{-}1$)}] (xd) at (13.1,1.25) {};
  \node[place,label={[lbl,text=rescol!70!black,align=left]right:telemetry\\[-1pt]$\to$ attribution}] (tel) at (13.1,-0.05) {};
  \node[transition,label={[guard]above:$[\mathrm{stall}]$}] (tst) at (6.5,-1.6) {};
  \node[place,label={[lbl]left:credentials}] (cred) at (3.8,2.9) {};
  \node[place,tokens=2,label={[lbl]below:SOC capacity}] (soc) at (3.8,-1.7) {};
  \foreach \u/\v in {dmz/tin,tin/gl,gl/rl,rl/ol,tmv/ga,ga/ra,ra/oa,tg/xg} \draw[arc] (\u)--(\v);
  \draw[arc] (ol.east)--(fanL)|-(tmv); \draw[arc] (ol.east)--(fanL)|-(tst);
  \draw[arc] (oa.east)--(fanA)|-(tg); \draw[arc] (oa.east)--(fanA)|-(td);
  \draw[arc] (td.east)--(fanD)|-(xd); \draw[arc] (td.east)--(fanD)|-(tel);
  \draw[arc,badcol] (tst.east) -- ++(0.45,0) |- (2.5,-2.95) -- (gl.south);
  \node[lbl,text=badcol,anchor=north] at (5.2,-3.0) {stall: payoff $\beta$, the foothold token returns};
  \draw[darc] (rl.north) -- node[pos=0.55,left,lbl]{if T1078 succeeds} (cred.south);
  \draw[{Stealth[length=2mm]}-{Stealth[length=2mm]},line width=0.7pt,dashed,draw=inkcol!60] (cred.east) -| (ga.north);
  \node[lbl,anchor=south] at (6.2,2.95) {read arc: sets $\vartheta$ of $\G_{p_A}$};
  \draw[{Stealth[length=2mm]}-{Stealth[length=2mm]},line width=0.7pt,draw=rescol!70!black] (soc.north) -- (rl.south);
\end{tikzpicture}
\caption{The intrusion campaign as an OCGPN. The red token is the attacker's foothold. At each game place (orange), the attacker and defender play a technique-level game, and the outcome decides where the foothold token goes. Three things a segment-level Markov game cannot express directly are drawn explicitly: a \emph{credential} token that the lateral game may emit and the asset game reads; a \emph{telemetry} token that decoy exfiltration produces for the defender (a fork); and a shared place for \emph{security-operations-center (SOC) capacity}, for which concurrent engagements compete.}
\label{fig:campaignnet}
\end{figure}

\paragraph{The asset game $\G_{p_A}$ (deception).} The defender chooses a posture, \emph{Plain} (no decoys) or \emph{Decoy} (seed the share with decoy files, at a cost of $1$; \cite{yang2025crossechelon} uses $0.6$, rounded here for hand calculation). The attacker cannot observe the posture. It either \emph{Collects} immediately or \emph{Enumerates} first, at a cost of $1$. Enumeration identifies the genuine files with probability $3/4$. Facing decoys, an immediate collection grabs a decoy with probability $1/2$. The outcome token is labeled \emph{genuine} or \emph{decoy}. Following~\cite{yang2025crossechelon}, the exfiltration rewards $\nu(s^0)=15$ and $\nu(s^6)=-1$ are stage payoffs of $\G_{p_A}$, and the terminal places $z_{\mathrm{gen}},z_{\mathrm{dec}}$ have value zero. Hence $Q^A=R^A$. In the zero-sum convention, the defender's decoy cost is added to the attacker's payoff. \Cref{tab:assetgame} gives the matrix $Q^A$.

\begin{table}[ht]
\centering
\caption{The augmented asset game $Q^A$, attacker payoffs (zero-sum). The boxed entry is a saddle point.}
\label{tab:assetgame}
\begin{tabular}{l|cc}
\toprule
Attacker $\backslash$ Defender & Plain & Decoy\\\midrule
Collect & $15$ & $\tfrac12\cdot15+\tfrac12\cdot(-1)+1=8$\\
Enumerate & $15-1=14$ & $\tfrac34\cdot15+\tfrac14\cdot(-1)-1+1=\boxed{11}$\\
\bottomrule
\end{tabular}
\end{table}

\paragraph{The lateral game $\G_{p_L}$.} The attacker chooses between two lateral-movement techniques, T1021 (remote services) and T1078 (valid accounts). The defender chooses between two countermeasures: network filtering, which covers T1021, and credential monitoring, which covers T1078. If the countermeasure matches the technique, the move succeeds with probability $\lambda_A(1-\eta)=0.09$. Otherwise it succeeds with probability $\lambda_A=0.6$. On success, the foothold token moves to $p_A$ and the attacker collects $\nu(A)$. On failure, the token returns to $p_L$ along the rework arc, and the attacker collects $\beta$. The outcome-probability matrices are therefore
\begin{equation}
  \Phi^L_{\mathrm{moved}}=\begin{pmatrix}0.09&0.6\\0.6&0.09\end{pmatrix},\qquad \Phi^L_{\mathrm{stall}}=J-\Phi^L_{\mathrm{moved}},
\end{equation}
with rows T1021 and T1078 and columns filtering and monitoring, and \eqref{eq:Qmatrix} reads
\begin{equation}\label{eq:lateral}
  Q^{L}(V)=\nu(A)\,\Phi^L_{\mathrm{moved}}+\beta\,\Phi^L_{\mathrm{stall}}+\gamma\big(V_A\,\Phi^L_{\mathrm{moved}}+V_L\,\Phi^L_{\mathrm{stall}}\big)
  =\begin{pmatrix} Q_{\mathrm{m}} & Q_{\mathrm{u}}\\ Q_{\mathrm{u}} & Q_{\mathrm{m}}\end{pmatrix},
\end{equation}
where $U_{\mathrm{mv}}:=\nu(A)+\gamma V_A$ is the value of moving, $U_{\mathrm{st}}:=\beta+\gamma V_L$ the value of stalling, $Q_{\mathrm{m}} = 0.09\,U_{\mathrm{mv}} + 0.91\,U_{\mathrm{st}}$ the entry when the countermeasure matches, and $Q_{\mathrm{u}} = 0.6\,U_{\mathrm{mv}} + 0.4\,U_{\mathrm{st}}$ the entry when it does not.

\subsection{Strategic analysis and values}
\label{sec:re2-values}

The place graph has edges $p_L\to p_L$ (the stall loop), $p_L\to p_A$, and $p_A\to\{z_{\mathrm{gen}},z_{\mathrm{dec}}\}$. Its SCCs in reverse topological order are $\{p_A\}$ and then $\{p_L\}$. By \cref{thm:scc}, Algorithm~\ref{alg:scc} solves the asset game once, then the lateral component, which has a self-loop and is therefore a fixed point.

\begin{proposition}[Propagation of deception value]\label{prop:deception}
In the campaign net of \cref{fig:campaignnet}:
\begin{enumerate}[label=(\roman*),nosep]
  \item $\G_{p_A}$ has the pure saddle point $(\text{Enumerate},\text{Decoy})$ with value $V_A=11$. Without the decoy option, $V_A=15$.
  \item $\G_{p_L}$ has a unique equilibrium in which both players mix $\tfrac12$--$\tfrac12$, and its value is
  \begin{equation}\label{eq:VL}
    V_L = \frac{\bar p\,U_{\mathrm{mv}} + (1-\bar p)\,\beta}{1-\gamma(1-\bar p)},\qquad \bar p = \frac{\lambda_A(1-\eta)+\lambda_A}{2}=0.345 .
  \end{equation}
  \item Hence the decoy posture at $p_A$ lowers the attacker's value at the upstream place $p_L$ from $14.75$ to $12.11$, although no action at $p_L$ has changed.
\end{enumerate}
\end{proposition}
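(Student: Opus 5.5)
We follow the SCC order of \cref{thm:scc}: first solve the singleton component $\{p_A\}$, which has no self-loop and needs one local solve. Then solve the self-looped component $\{p_L\}$ as a scalar fixed point, with $V_A$ treated as a fixed downstream value. Part (iii) is then a numerical comparison of the resulting formula at $V_A=11$ and at $V_A=15$.

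For (i), $Q^A=R^A$ because the terminal values are zero, so we read \cref{tab:assetgame} directly. The row minima are $8$ and $11$, and the column maxima are $15$ and $11$. Hence $\max_{\text{rows}}\min_{\text{cols}}=\min_{\text{cols}}\max_{\text{rows}}=11$, attained at $(\text{Enumerate},\text{Decoy})$. That entry is therefore a saddle point, and by \eqref{eq:lp} the value is $V_A=11$. Without the decoy option only the Plain column remains, so the attacker's best entry is $15$ (Collect).

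For (ii), we work with \eqref{eq:lateral} for a given $V_L$. The matrix has the symmetric form $\begin{pmatrix}Q_{\mathrm m}&Q_{\mathrm u}\\Q_{\mathrm u}&Q_{\mathrm m}\end{pmatrix}$ with $Q_{\mathrm u}-Q_{\mathrm m}=0.51\,(U_{\mathrm{mv}}-U_{\mathrm{st}})$. Whenever $U_{\mathrm{mv}}>U_{\mathrm{st}}$, the matrix has no saddle point: each row maximum lies off its column minimum. A $2\times2$ zero-sum game without a saddle point has unique, completely mixed optimal strategies, which by symmetry are $\tfrac12$--$\tfrac12$ for both players. Its value is $\tfrac12(Q_{\mathrm m}+Q_{\mathrm u})=\bar p\,U_{\mathrm{mv}}+(1-\bar p)\,U_{\mathrm{st}}$ with $\bar p=0.345$. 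Substituting $U_{\mathrm{st}}=\beta+\gamma V_L$ into the fixed-point condition $V_L=(\mathcal B_{\{p_L\}}V_L)$ of \cref{thm:scc} gives a linear equation in $V_L$. Solving it yields \eqref{eq:VL}. This is the single-place case of \cref{prop:loop}, with $\mathcal R=\gamma(1-\bar p)$. Uniqueness of $V_L$ follows from the contraction in \cref{thm:contraction}(i). Uniqueness of the equilibrium strategies then follows from \cref{thm:contraction}(iii) together with the uniqueness of optimal strategies in the matrix game $Q^L(V^\star)$.

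The main obstacle is circularity: the $\tfrac12$--$\tfrac12$ structure presupposes $U_{\mathrm{mv}}>U_{\mathrm{st}}$, which depends on $V_L$ itself. We resolve this by guess-and-verify. Let $\widehat V_L$ be the value given by \eqref{eq:VL}. By construction, $\widehat V_L$ solves the fixed-point equation on the branch where $U_{\mathrm{mv}}>U_{\mathrm{st}}$. If we check $U_{\mathrm{mv}}>\beta+\gamma\widehat V_L$, then $\widehat V_L$ is a genuine fixed point of $\mathcal B_{\{p_L\}}$, and by uniqueness it equals $V^\star(p_L)$.

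For $V_A=11$, we get $U_{\mathrm{mv}}=10+0.85\cdot11=19.35$ and $\widehat V_L=(0.345\cdot19.35-0.655\cdot2)/(1-0.85\cdot0.655)\approx12.11$. Then $U_{\mathrm{st}}\approx8.29<19.35$, so the check passes. For $V_A=15$, we get $U_{\mathrm{mv}}=22.75$ and $\widehat V_L\approx14.75$, and the check passes again. These two computations establish (iii). The only change between them is the value $V_A$ supplied through the entry term, as \cref{thm:composition}(ii) predicts, while the matrix $\Phi^L$ at $p_L$ is unchanged.
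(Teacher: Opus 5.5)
Your proposal is correct and follows essentially the paper's proof. You read the saddle point off the row minima and column maxima of \cref{tab:assetgame}, use $Q_{\mathrm u}-Q_{\mathrm m}=0.51\,(U_{\mathrm{mv}}-U_{\mathrm{st}})$ to get the unique $\tfrac12$--$\tfrac12$ equilibrium at $p_L$, derive \eqref{eq:VL}, and substitute $V_A\in\{11,15\}$. The differences are minor: the paper obtains \eqref{eq:VL} from the $2\times2$ routing matrix and the Neumann form \eqref{eq:neumann} rather than from your scalar fixed-point equation, which is the same linear equation. Your explicit check that $U_{\mathrm{mv}}>\beta+\gamma\widehat V_L$ spells out what the paper compresses into ``at the fixed point, $U_{\mathrm{mv}}=19.35>U_{\mathrm{st}}=8.29$.''
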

\begin{proof}
(i) In \cref{tab:assetgame}, the row minima are $8$ and $11$, and the column maxima are $15$ and $11$. So the LP \eqref{eq:lp} has value $11$, attained at (Enumerate, Decoy). Without decoys, only the Plain column remains, and Collect yields $15$. (ii) We have $Q_{\mathrm{u}}-Q_{\mathrm{m}}=0.51\,(U_{\mathrm{mv}}-U_{\mathrm{st}})$, so $Q_{\mathrm{m}}<Q_{\mathrm{u}}$ if and only if moving is worth more than stalling. At the fixed point, $U_{\mathrm{mv}}=19.35>U_{\mathrm{st}}=8.29$. For any $Q_{\mathrm{m}}\neq Q_{\mathrm{u}}$, the symmetric matrix \eqref{eq:lateral} has no saddle point and the unique equilibrium $\sigma_L=\big((\tfrac12,\tfrac12),(\tfrac12,\tfrac12)\big)$; if $Q_{\mathrm{m}}=Q_{\mathrm{u}}$, every profile would be an equilibrium. By \eqref{eq:routing}, with places ordered $(p_L,p_A)$,
\begin{equation}
  P^\sigma=\begin{pmatrix}1-\bar p & \bar p\\ 0&0\end{pmatrix}=\begin{pmatrix}0.655&0.345\\0&0\end{pmatrix},\qquad
  \mathbf{r}^\sigma=\begin{pmatrix}\bar p\,\nu(A)+(1-\bar p)\beta\\ V_A\end{pmatrix}=\begin{pmatrix}2.14\\ 11\end{pmatrix}.
\end{equation}
The closed form \eqref{eq:neumann} gives $V^\sigma=(I-\gamma P^\sigma)^{-1}\mathbf{r}^\sigma$, whose first entry is \eqref{eq:VL}. (iii) Substitute $U_{\mathrm{mv}}=10+0.85\,V_A$ with $V_A\in\{15,11\}$.
\end{proof}

\paragraph{Termination and outcome probabilities.} Here $\rho(P^\sigma)=0.655<1$, so play terminates with probability one (\cref{prop:termination}). The absorbing-chain formula \eqref{eq:absorb} gives $\big[(I-P^\sigma)^{-1}\big]_{LL}=1/\bar p=2.90$ expected engagements at the IT site. The attacker reaches the asset server with probability one, and it exfiltrates genuine data with probability $\tfrac34$ under (Enumerate, Decoy).

\subsection{Structural analysis of guards and deception}
\label{sec:re2-structural}

Decoys change $\Phi^{A}$ but not the incidence matrix $C$. The genuine-exfiltration column remains, so no Farkas certificate (\cref{prop:farkas}) for $m_B=\mathbf{e}_{z_{\mathrm{gen}}}$ can exist, and indeed $z_{\mathrm{gen}}$ is reached with probability $\tfrac34$. The genuine-exfiltration transition $t_{\mathrm{gen}}$ carries the guard $[\mathrm{genuine}\wedge\mathrm{egress}]$, where $\mathrm{egress}$ is a binding-level flag for whether egress is allowed. An egress lockdown sets this flag to false and thereby deletes the column. The row of $C$ for $z_{\mathrm{gen}}$ then becomes zero, and $y=\mathbf{e}_{z_{\mathrm{gen}}}$ satisfies \eqref{eq:farkas}, since $C^\top y=0$ and $y^\top(m_B-M_0)=1>0$. Unreachability is then certified for every equilibrium, at an availability cost. This is the design rule of \cref{sec:structural} in an adversarial setting: decoys are an incentive instrument and lower the value; the guard is a structural instrument and removes the outcome.

\subsection{Composition}
\label{sec:re2-composition}

The segment-level Markov game of~\cite{yang2025multires} rests on a narrow interface: the only thing a segment reports is where the attacker ends up. Its authors list the main ways this assumption fails: stolen credentials that are valid elsewhere, persistent implants, harvested reconnaissance, and intelligence gained by the defender~\cite{yang2025crossechelon}. In a Markov-game formulation, each must be encoded in the global state. In an OCGPN, each is a typed token in a side place, read only by the games that need it. An exact analysis still needs the artifact in its Markov state, but only within the artifact window of \cref{prop:instantiation}, so the representation is structured and localized rather than global.

\begin{description}[leftmargin=1.2em,font=\normalfont\bfseries]
  \item[Artifacts (dynamic instantiation).] Suppose T1078 succeeds at $p_L$. Then $\Out_{p_L}$ also emits a credential token, which the asset game reads through a read arc (\cref{fig:campaignnet}). The token selects the template parameter $\vartheta=\mathrm{cred}$, under which enumeration is free. So $p_A$ unfolds into two instances, $p_A^{\mathrm{plain}}$ with the matrix of \cref{tab:assetgame} and $p_A^{\mathrm{cred}}$ with $\left(\begin{smallmatrix}15&8\\15&12\end{smallmatrix}\right)$ and saddle-point value $V_A^{\mathrm{cred}}=12$. The continuation value of a successful move now depends on the technique: $U_{\mathrm{mv}}^{0}=\nu(A)+\gamma V_A=19.35$ after T1021, and $U_{\mathrm{mv}}^{1}=\nu(A)+\gamma V_A^{\mathrm{cred}}=20.20$ after T1078. With $U_{\mathrm{st}}=\beta+\gamma V_L$, the augmented lateral game \eqref{eq:Qmatrix} becomes
\begin{equation}\label{eq:lateralcred}
  Q^L=\begin{pmatrix}0.09\,U_{\mathrm{mv}}^{0}+0.91\,U_{\mathrm{st}} & 0.6\,U_{\mathrm{mv}}^{0}+0.4\,U_{\mathrm{st}}\\ 0.6\,U_{\mathrm{mv}}^{1}+0.4\,U_{\mathrm{st}} & 0.09\,U_{\mathrm{mv}}^{1}+0.91\,U_{\mathrm{st}}\end{pmatrix}
\end{equation}
(rows T1021, T1078; columns filtering, monitoring). It is no longer symmetric. Iterating $V_L\gets\mathrm{val}(Q^L)$ to its fixed point gives $V_L=12.42$, $U_{\mathrm{st}}=8.56$, and $Q^L=\left(\begin{smallmatrix}9.53&15.03\\15.54&9.61\end{smallmatrix}\right)$. The equilibrium is fully mixed: the attacker plays T1078 with probability $0.48$, and the defender plays credential monitoring with probability $0.53$. Only the instance of $\G_{p_A}$ doubles; the rest of the net is untouched.
  \item[Telemetry from deception (fork).] Decoy exfiltration ends the attacker's campaign at $z_{\mathrm{dec}}$. It also emits a \emph{telemetry} token for the defender, which can enable an attribution or eviction game. By \cref{thm:composition}(ii), that downstream game affects the campaign only through its entry value. In the zero-sum Markov game, this gain can only be folded into the scalar $\nu(s^6)$.
  \item[Shared defensive capacity (parallel).] Countermeasures consume SOC-capacity tokens. If two campaigns run concurrently, the composed incidence matrix has the form \eqref{eq:parallelC}, with the capacity place as the shared row. A capacity-conservation invariant of each campaign glues into one for the composition by \cref{thm:composition}(i). Strategically, however, the defender's feasible actions at one game place now depend on the marking left by the other, so \cref{ass:sequential} fails and values must be computed over markings.
\end{description}

\subsection{Resolution and scale}
\label{sec:re2-resolution}

Keeping a place coarse means replacing its local matrices with a summary (\cref{fig:zoom}). The coarse summary of $p_A$ lets the attacker \emph{choose} the genuine adjacency and gives value $15$, so $\varepsilon_A=15-11=4$. \Cref{cor:coarse} bounds the resulting error by $4/(1-0.85)=26.7$. The actual error at $p_L$ is $14.75-12.11=2.65$. The bound is conservative, but it points to the right place: deception is where the coarse summary fails. By \cref{prop:complexity}, the cost is additive in the local game sizes. In the 39-segment enterprise study of~\cite{yang2025crossechelon}, the decomposed model has about $10^4$ tree nodes, against roughly $10^{25}$ for a monolithic extensive form of an eight-stage campaign.

\begin{figure}[ht]
\centering
\begin{tikzpicture}[
  dn/.style={circle,draw=rescol!80!black,fill=rescol!15,line width=0.7pt,minimum size=6mm,inner sep=0pt,font=\scriptsize},
  an/.style={circle,draw=badcol!80!black,fill=badcol!12,line width=0.7pt,minimum size=6mm,inner sep=0pt,font=\scriptsize},
  lf/.style={draw=inkcol!50,fill=white,rounded corners=1pt,font=\scriptsize,inner sep=2pt,align=center},
  el/.style={line width=0.7pt,draw=inkcol}]
  \node[gplace,label={[glbl]below:$p_A$ (coarse)}] (pc) at (0,0) {};
  \node[cbox,font=\scriptsize,align=center] (cm) at (0,1.65) {macro: attacker \emph{picks}\\ adjacency $s^0$ or $s^6$\\ value $15$};
  \draw[darc,draw=gamecol!80!black] (cm.south) -- (pc.north);
  \draw[->,line width=1pt,draw=okcol!65!black] (1.6,0.35) -- node[above,lbl]{zoom-in (ground)} (3.8,0.35);
  \draw[<-,line width=1pt,draw=okcol!65!black] (1.6,-0.35) -- node[below,lbl]{zoom-out (lift $\phi_{p_A}$)} (3.8,-0.35);
  \begin{scope}[shift={(7.2,1.6)}]
    \node[dn] (r) at (0,0) {$\PD$};
    \node[an] (a1) at (-1.9,-1.3) {$\PA$};
    \node[an] (a2) at (1.9,-1.3) {$\PA$};
    \draw[el] (r.south) -- ++(0,-0.35) -| node[pos=0.25,above,lbl]{Plain} (a1.north);
    \draw[el] (r.south) -- ++(0,-0.35) -| node[pos=0.25,above,lbl]{Decoy} (a2.north);
    \node[lf] (l1) at (-2.75,-3.05) {$15$\\$s^0$};
    \node[lf] (l2) at (-1.05,-3.05) {$14$\\$s^0$};
    \node[lf] (l3) at (1.05,-3.05) {$8$\\$s^0/s^6$};
    \node[lf] (l4) at (2.75,-3.05) {$11$\\$s^0/s^6$};
    \draw[el] (a1.south) -- ++(0,-0.5) -| node[pos=0.25,below,lbl]{C} (l1.north);
    \draw[el] (a1.south) -- ++(0,-0.5) -| node[pos=0.25,below,lbl]{E} (l2.north);
    \draw[el] (a2.south) -- ++(0,-0.5) -| node[pos=0.25,below,lbl]{C} (l3.north);
    \draw[el] (a2.south) -- ++(0,-0.5) -| node[pos=0.25,below,lbl]{E} (l4.north);
    \draw[dashed,line width=0.7pt,draw=badcol!70!black,rounded corners=4pt] ($(a1.north west)+(-0.25,0.2)$) rectangle ($(a2.south east)+(0.25,-0.2)$);
    \node[font=\scriptsize,text=badcol!80!black,align=left,anchor=west] at ($(a2.east)+(0.4,0)$) {one information set:\\decoys are invisible};
    \node[draw=okcol!65!black,line width=1.1pt,rounded corners=2pt,fit=(l4),inner sep=1.5pt] {};
    \node[glbl] at (0,0.55) {$p_A$ (fine): MBG $\Gamma^A$};
  \end{scope}
\end{tikzpicture}
\caption{Resolution as place refinement. Zooming in replaces the coarse local game at $p_A$ with the extensive-form MBG, whose leaves are grounded in continuation values. Zooming out keeps only the outcome kernel $\phi_{p_A}$, which is what the routing matrix consumes. The coarse model overstates the attacker's value by $\varepsilon_A=4$. The dashed information set is the formal content of the deception. C means collect and E means enumerate.}
\label{fig:zoom}
\end{figure}

\subsection{Composing a newly discovered scenario}
\label{sec:re2-new}

Suppose threat intelligence now reports a phishing campaign against the user segment. Phishing gives the attacker a second route into the IT site, which the model of \cref{fig:campaignnet} does not contain. The defender's question is practical: \emph{does the whole campaign have to be re-modeled and re-solved?} The composition results of \cref{sec:composition} say no. This subsection works through the composition step by step.

\paragraph{The new subnet.} The new scenario is an open net $\N_{\mathrm{new}}$ (\cref{fig:newscenario}). Its entry place holds the attacker's token in a user mailbox. Its game place $p_U$ is a phishing game, and its exit place is a foothold in the IT site. At $p_U$, the attacker chooses \emph{Spearphish} (S) or \emph{Credential harvest} (H). The defender chooses \emph{Email filtering} (F) or \emph{multi-factor authentication} (M). Each profile leads to one of three outcomes: \emph{success} (the token reaches the IT site, and the attacker collects $\nu(L)=5$, the value of the IT site in~\cite{yang2025crossechelon}); \emph{retry} (the token returns to $p_U$, with stalling payoff $\beta=-2$); or \emph{evicted} (the token reaches a terminal place, and the attacker loses its tooling, payoff $-3$). \Cref{tab:phish} lists the outcome-probability matrices. Each countermeasure is effective against one technique and weaker against the other.

\begin{table}[ht]
\centering
\caption{Outcome indicators $\big(\Phi^U_{\mathrm{success}},\Phi^U_{\mathrm{retry}},\Phi^U_{\mathrm{evicted}}\big)$ of the phishing game $\G_{p_U}$. Rows are the attacker's techniques; columns are the defender's countermeasures.}
\label{tab:phish}
\begin{tabular}{l|cc}
\toprule
Attacker $\backslash$ Defender & Email filtering (F) & MFA (M)\\\midrule
Spearphish (S) & $(0.10,\ 0.50,\ 0.40)$ & $(0.60,\ 0.30,\ 0.10)$\\
Credential harvest (H) & $(0.50,\ 0.35,\ 0.15)$ & $(0.20,\ 0.50,\ 0.30)$\\
\bottomrule
\end{tabular}
\end{table}

\begin{figure}[ht]
\centering
\begin{tikzpicture}[x=1cm,y=1cm]
  \draw[line width=0.7pt,dashed,draw=badcol!70!black,rounded corners=4pt] (-0.55,-2.75) rectangle (8.15,2.45);
  \node[lbl,anchor=north west,text=badcol!70!black,align=left] at (-0.45,2.4) {new scenario $\N_{\mathrm{new}}$ (phishing)};
  \draw[line width=0.7pt,dashed,draw=rescol!70!black,rounded corners=4pt] (7.05,0.45) rectangle (12.9,2.45);
  \node[lbl,anchor=north east,text=rescol!70!black] at (12.8,2.4) {old campaign $\N_{\mathrm{old}}$ (\cref{fig:campaignnet})};
  \node[place,label={[lbl,align=center]below:user\\[-1pt]mailbox}] (mb) at (0,0) {}; \node[ctok=badcol] at (mb) {};
  \node[transition] (tp) at (1.1,0) {};
  \node[gplace,label={[glbl]above:$p_U$\\[-1pt]phishing}] (pu) at (2.3,0) {};
  \node[rtrans,label={[lbl]below:$r_U$}] (ru) at (3.5,0) {};
  \node[place,label={[lbl]below:$q_U$}] (ou) at (4.6,0) {};
  \coordinate (fan) at (5.3,0);
  \node[transition,label={[guard]below:$[\mathrm{success}]$}] (ts) at (6.1,1.45) {};
  \node[transition,label={[guard]below:$[\mathrm{evicted}]$}] (te) at (6.1,0) {};
  \node[transition,label={[guard]below:$[\mathrm{retry}]$}] (tr) at (6.1,-1.45) {};
  \node[place,label={[lbl,align=center]below:$z_{\mathrm{ev}}$\\[-1pt]reward $-3$}] (xe) at (7.45,0) {};
  \node[place,fill=okcol!18,draw=okcol!60!black,label={[lbl,text=okcol!45!black,fill=white,inner sep=1.5pt]above:IT foothold (fused)}] (it) at (7.6,1.45) {};
  \node[transition] (tin) at (8.7,1.45) {};
  \node[gplace,label={[glbl]below:$p_L$}] (pl) at (9.8,1.45) {};
  \node[gplace,label={[glbl]below:$p_A$}] (pa) at (11.0,1.45) {};
  \node[lbl,anchor=west] at (11.55,1.45) {$\cdots$};
  \foreach \u/\v in {mb/tp,tp/pu,pu/ru,ru/ou,ts/it,te/xe,it/tin,tin/pl,pl/pa} \draw[arc] (\u)--(\v);
  \draw[arc] (ou.east)--(fan)|-(ts); \draw[arc] (ou.east)--(fan)--(te); \draw[arc] (ou.east)--(fan)|-(tr);
  \draw[arc,badcol] (tr.east) -- ++(0.4,0) |- (2.3,-2.35) -- (pu.south);
  \node[lbl,text=badcol,anchor=north] at (4.6,-2.4) {retry: payoff $\beta=-2$};
  \node[draw=rescol!70!black,fill=rescol!6,rounded corners=2pt,line width=0.6pt,font=\footnotesize,align=center,anchor=north] at (10.75,0.2) {summarized by its entry value\\$V_L=12.11$ (\cref{thm:seqeq})};
\end{tikzpicture}
\caption{Composing a newly discovered phishing scenario with the existing campaign. The exit place of $\N_{\mathrm{new}}$ is fused with the entry place of $\N_{\mathrm{old}}$ (sequential composition, \cref{def:operators}(a)). The phishing game has a retry loop and an eviction exit. For the analysis of $\N_{\mathrm{new}}$, the entire old campaign is replaced by a single number, its entry value $V_L$.}
\label{fig:newscenario}
\end{figure}

\paragraph{Step 1: structural analysis (\cref{lem:fusion}).} The composite is $\N_{\mathrm{new}};\N_{\mathrm{old}}$, obtained by fusing the IT-foothold exit of $\N_{\mathrm{new}}$ with the entry of $\N_{\mathrm{old}}$. In $\N_{\mathrm{new}}$, the vector $y_{\mathrm{new}}$ equal to $1$ on every place that can hold the attacker's token (mailbox, $p_U$, $q_U$, $z_{\mathrm{ev}}$, IT foothold) is a P-invariant. Every column moves that token from one such place to another. The analogous vector $y_{\mathrm{old}}$ is a P-invariant of $\N_{\mathrm{old}}$. Both equal $1$ on the fused place, so by \cref{lem:fusion}(i) they glue into a P-invariant of the composite: there is always exactly one attacker foothold. The retry loop is a T-invariant of $\N_{\mathrm{new}}$ alone. Composition creates no new cycle, because no arc leads from $\N_{\mathrm{old}}$ back to $\N_{\mathrm{new}}$.

\paragraph{Step 2: interface values (\cref{thm:seqeq}).} By \cref{thm:seqeq}, the equilibria of the composite are obtained by fixing an equilibrium of $\N_{\mathrm{old}}$ and solving $\N_{\mathrm{new}}$ with the old net's entry value as the terminal value of the IT foothold. From \cref{prop:deception}, this entry value is $V_L=12.11$ (with the decoy posture at $p_A$). Nothing else about $\N_{\mathrm{old}}$ is needed. By \eqref{eq:Qmatrix}, the augmented phishing game is
\begin{equation}
  Q^U(V)=\big(\nu(L)+\gamma V_L\big)\,\Phi^U_{\mathrm{success}}+\big(\beta+\gamma V_U\big)\,\Phi^U_{\mathrm{retry}}-3\,\Phi^U_{\mathrm{evicted}},
\end{equation}
where $V_U$ appears on the right because of the retry loop.

\paragraph{Step 3: the retry loop (\cref{prop:loop}).} Solving the fixed point (one $2\times2$ LP per iteration, or the closed form below) gives the equilibrium. The attacker plays Spearphish with probability $0.366$, and the defender plays Email filtering with probability $0.488$. At the fixed point,
\begin{equation}
  Q^U=\begin{pmatrix}1.84&9.78\\8.25&3.67\end{pmatrix},\qquad V_U=5.90 .
\end{equation}
Under these strategies, each attempt succeeds with probability $0.350$, is retried with probability $0.416$, and ends in eviction with probability $0.234$. In \cref{prop:loop}, $\tau^+=1$, so
\begin{equation}
  A=0.350\,(5+0.85\cdot12.11)+0.416\cdot(-2)+0.234\cdot(-3)=3.82,\qquad B=0.85\cdot0.416=0.354,
\end{equation}
and $V_U=A/(1-B)=3.82/0.646=5.90$, in agreement with the fixed point.

\paragraph{Step 4: verification and computational cost.} Solving the full three-place composite $(p_U,p_L,p_A)$ from scratch by \cref{alg:scc} gives $(V_U,V_L,V_A)=(5.90,\,12.11,\,11)$. This is exactly the compositional answer. The compositional solve needed only the new $2\times2$ fixed point. The old campaign, which in a realistic deployment would be the expensive part (in~\cite{yang2025crossechelon}, about $10^4$ tree nodes), was not touched. By \cref{prop:complexity}, the incremental cost of adding a scenario is the cost of the scenario itself.

\paragraph{Step 5: effect of downstream deception (\cref{prop:monotone}).} Suppose the defender had not deployed decoys at $p_A$. Then $V_L=14.75$ instead of $12.11$, and the same computation gives $V_U=7.12$ instead of $5.90$ (\cref{tab:phishcompare}). So a deception decision two hops downstream lowers the value of a threat that did not exist when the decision was made. The equilibrium mixtures at $p_U$ barely move, however (Spearphish $0.366$ versus $0.366$; filtering $0.488$ versus $0.487$). \Cref{prop:monotone} explains why. The probability of reaching the IT site has row differences of opposite signs across the defender's actions: Spearphish is $0.4$ less likely to succeed than Credential harvest against filtering, and $0.4$ more likely against MFA. So raising or lowering $V_L$ pushes neither technique uniformly. Downstream decoys change \emph{the stakes} of phishing, not \emph{how} it is played.

\paragraph{Step 6: outcome probabilities (\eqref{eq:absorb}).} The retry loop makes the attacker reach the IT site with probability $0.350/(1-0.416)=0.599$ and be evicted with probability $0.401$, after $1/(1-0.416)=1.71$ phishing attempts on average. From the IT site, the old net reaches the asset server with probability one, and it exfiltrates genuine data with probability $\tfrac34$ under the decoy posture. By composition, the probability that the phishing route ends in genuine exfiltration is $0.599\times0.75=0.449$, against $0.599$ without decoys.

\begin{table}[ht]
\centering
\caption{The phishing scenario composed with the old campaign, with and without decoys at the asset server.}
\label{tab:phishcompare}
\begin{tabular}{lcc}
\toprule
 & no decoys & decoys at $p_A$\\\midrule
entry value of the old net, $V_L$ & $14.75$ & $12.11$\\
value of the phishing game, $V_U$ & $7.12$ & $5.90$\\
attacker: Pr(Spearphish) & $0.366$ & $0.366$\\
defender: Pr(Email filtering) & $0.487$ & $0.488$\\
Pr(reach IT site) & $0.599$ & $0.599$\\
Pr(genuine exfiltration via phishing) & $0.599$ & $0.449$\\
\bottomrule
\end{tabular}
\end{table}

\paragraph{Discussion of the example.} A newly discovered scenario was added by one fusion. Its structural guarantee followed by gluing invariants. Its equilibrium was computed from one number supplied by the old model, and the answer agreed with a full re-solve. The effect of existing defenses on the new threat was read off from the same numbers. If the new scenario had shared resources with the old one, for example SOC capacity, \cref{prop:parallel} would no longer apply, and the composite would have to be solved over its joint state (\cref{rem:beyond}). That is the boundary of the compositional method.

\begin{intuition}[Summary of the running examples]
Running example~I shows that the outcome map shapes incentives, through pivotality. Running example~II adds three points. An adversary's game fits the same template unchanged. Failures of a narrow interface become typed tokens whose effect on the state is confined to an artifact window. And the local, place-by-place structure that makes the model readable is the same structure that makes it computable.
\end{intuition}

\section{Conclusion}
\label{sec:discussion}

This paper introduced game Petri nets, in the form of outcome-driven compositional game Petri nets (OCGPNs), as a framework for modeling and analyzing large-scale strategic systems. The central device is the game-to-token interface. A local game is attached to a place, its resolution produces typed tokens that depend on the binding, the action profile, and the outcome, and these tokens decide which interactions become possible next, with what resources and information. On this foundation we developed a theory at two levels. For general, concurrent nets, the incidence matrix supports invariants and Farkas certificates of unreachability that hold for every admissible profile. For sequential nets, the net induces a stochastic game on its game places. There, global equilibria are exactly collections of equilibria of continuation-augmented local games, incentives can be read off successor-level routing probabilities, and values are computed place by place. A composition calculus based on place fusion ties the two levels together: it glues structural certificates across subnets and composes equilibria through entry values.

\paragraph{Lessons.} The two running examples, read through the theory, yield five lessons that we expect to hold well beyond them.
\begin{enumerate}[label=(\arabic*),nosep]
  \item \emph{Continuation values determine upstream incentives.} Myopic equilibria are generally not globally consistent; the right local object is the augmented game (\cref{thm:consistency}). In variant A of running example~I, continuation values reverse a prisoner's dilemma (\cref{prop:reversal,prop:phase}).
  \item \emph{The structure of the interface outweighs the magnitude of rewards.} Downstream games change upstream behavior only through successors that a player's action makes more or less likely (\cref{thm:dominance,cor:myopic}). Binary routing, in which only joint success matters, can add a cooperative equilibrium but can never make cooperation dominant, whatever the reward (\cref{prop:binary}).
  \item \emph{Feedback loops can weaken incentives.} A forgiving rework loop returns play to a game in which cooperation resumes, so shirking is cheap. In variant B, cooperation survives as a global equilibrium only if rework is costly enough (\cref{prop:variantB}). This effect is invisible to an analysis that prescribes continuation values locally.
  \item \emph{Incentive instruments and guards act on different layers.} Deception lowered the attacker's value at every upstream place but left genuine exfiltration reachable. An egress guard removed it with a certificate valid for every equilibrium (\cref{sec:re2-structural}). The two instruments live on different layers and should be designed together, not substituted for one another.
  \item \emph{Composition enables analysis at scale.} A newly discovered phishing scenario was composed with an existing campaign through a single number, the campaign's entry value. The answer agreed with a full re-solve (\cref{sec:re2-new}). Computation is additive in local game sizes, and artifacts such as stolen credentials enlarge the state only within their window (\cref{prop:complexity,prop:instantiation}).
\end{enumerate}

\paragraph{Scope and novelty.} Petri nets, colored Petri nets, Petri games, Markov decision processes, and stochastic games are established fields~\cite{petri1962,jensen2009,finkbeiner2017,puterman1994,shapley1953}. Several of our supporting results specialize standard theory to the place-level game: local--global consistency, the place-level Shapley operator, and backward induction (\cref{thm:consistency,thm:contraction,thm:backward}). The contributions we regard as new are the game-to-token interface and its semantics; successor-level interface design; the fusion-based composition calculus, including the compositional-equilibrium theorem (\cref{thm:seqeq}); and the systematic separation of structural certificates from strategic values.

\paragraph{Limitations.} The equilibrium and computation results require a sequential OCGPN (\cref{ass:normalform,ass:sequential,ass:discount}) with a finite unfolding (\cref{ass:finite}). Concurrent strategic interaction is represented by the general semantics, but its analysis falls back on the marking-level stochastic game, whose state space can be large. Local games are in normal form with complete information, and the scheduler is exogenous. The cost of building the unfolded net lies outside the complexity bound of \cref{prop:complexity}. The numerical examples are illustrative and not calibrated to data.

\paragraph{Future directions.} Four directions follow from these limitations and from the lessons above.
\begin{enumerate}[label=(\alph*),nosep]
  \item \emph{Concurrent composition.} Characterize when equilibria of a parallel composition with shared resource places can be computed from the parts, for example through a price on the shared capacity, without solving the marking-level game.
  \item \emph{Information.} Extend local--global consistency and the compositional-equilibrium theorem to Bayesian local games, in which token colors carry private types and beliefs propagate along token flows.
  \item \emph{Adversarial scheduling.} Treat the scheduler as a player, and characterize the guarantees that hold against every scheduler.
  \item \emph{Mechanism design on the interface.} By \cref{thm:dominance}, choosing $\Out_p$ and the routing guards so that a desired profile becomes the unique equilibrium is a linear feasibility problem in the routing probabilities. Variant A satisfies it for $\gamma>1/2$. Variant B cannot, and \cref{prop:variantB} shows that its loop must even be made costly to sustain cooperation as one of several equilibria. A general design theory, which would combine this problem with structural guards, remains to be developed.
\end{enumerate}
On the structural side, we expect the gluing of linear invariants to extend to richer interface contracts:

\begin{conjecture}[Selection-independent safety under interface contracts]
\Cref{thm:composition}(i) glues linear invariants. If each subnet satisfies an interface contract (an assume--guarantee pair on its entry and exit token colors), then the composed net satisfies the conjunction of the guarantees, whatever admissible local profiles are played and whatever scheduler is used.
\end{conjecture}

Beyond these questions, two practical steps would make the framework usable at scale. The first is a toolchain that compiles OCGPNs from workflow or attack-graph descriptions, computes certificates and values, and reports where coarse summaries are inaccurate. The second is a calibration of the examples to operational data, for instance the multi-resolution cyber-defense case studies of~\cite{yang2025crossechelon}.

\paragraph{Concluding remarks.} Strategic systems are increasingly built from many interacting agents, human and artificial, whose decisions open and close each other's options. Game Petri nets give such systems a common language. Concurrency, resources, and safety are expressed in the net. Incentives are expressed in the games. The interface between the two is a typed, analyzable object. The results of this paper suggest that this separation is what makes it possible to model such systems compositionally, to certify what they can and cannot do, and to compute how rational players will behave in them.

\begin{singlespace}

\end{singlespace}

\end{document}